\numberwithin{equation}{section}
\declaretheoremstyle[bodyfont=\it,qed=\qedsymbol]{noproofstyle}
\declaretheorem[numberlike=equation]{observation}
\declaretheorem[name=Observation,numbered=no]{observation*}
\declaretheorem[numberlike=equation]{fact}
\declaretheorem[numberlike=equation]{problem}
\declaretheorem[numberlike=equation]{theorem}
\declaretheorem[name=Theorem,numbered=no]{theorem*}
\declaretheorem[numberlike=equation]{lemma}
\declaretheorem[name=Lemma,numbered=no]{lemma*}
\declaretheorem[numberlike=equation]{corollary}
\declaretheorem[name=Corollary,numbered=no]{corollary*}
\declaretheorem[name=Proposition,numbered=no]{proposition*}
\declaretheorem[numberlike=equation]{claim}
\declaretheorem[name=Claim,numbered=no]{claim*}
\declaretheorem[numberlike=equation]{conjecture}
\declaretheorem[name=Conjecture,numbered=no]{conjecture*}
\declaretheorem[name=Question,numbered=no]{question*}
\declaretheoremstyle[bodyfont=\it,qed=$\lozenge$]{defstyle} 
\declaretheorem[numberlike=equation,style=defstyle]{definition}
\declaretheorem[unnumbered,name=Definition,style=defstyle]{definition*}
\declaretheorem[unnumbered,name=Example,style=defstyle]{example*}
\declaretheorem[unnumbered,name=Notation=defstyle]{notation*}
\declaretheorem[unnumbered,name=Construction,style=defstyle]{construction*}
\declaretheorem[numberlike=equation,style=defstyle]{remark}
\declaretheorem[unnumbered,name=Remark,style=defstyle]{remark*}
\declaretheorem[unnumbered,name=Assumption,style=defstyle]{assumption*}
\newcommand{\shortECCC}[2]{\texttt{\href{http://eccc.hpi-web.de/report/\ifnumcomp{#1}{>}{93}{19}{20}#1/#2/}{eccc:TR#1-#2}}}
\newcommand{\parseECCC}[1]{
\StrSubstitute{#1}{TR}{}[\tmpstring]%
\IfSubStr{\tmpstring}{/}{ 
\StrBefore{\tmpstring}{/}[\ecccyear]%
\StrBehind{\tmpstring}{/}[\ecccreport]%
}{
\StrBefore{\tmpstring}{-}[\ecccyear]%
\StrBehind{\tmpstring}{-}[\ecccreport]%
}%
\shortECCC{\ecccyear}{\ecccreport}}
\algrenewcommand\algorithmicindent{1.0em}%
\newcommand{\ignore}[1]{}
\newcommand{\nobibentry}[1]{{\let\nocite\ignore\bibentry{#1}}}
\newcommand{\bra}[1]{\ensuremath{\left( #1 \right) }}
\newcommand{\cbra}[1]{\ensuremath{\left\{ #1 \right\} }}
\newcommand{\vecalpha}{\boldsymbol{\alpha}}
\newcommand{\eqdef}{\vcentcolon=}
	\renewcommand{\vec}[1]{{\mathbf{#1}}}
	\newcommand{\va}{{\vec{a}}\@ifnextchar{^}{\!\:}{}}
	\newcommand{\vb}{{\vec{b}}\@ifnextchar{^}{\!\:}{}}
	\newcommand{\vc}{{\vec{c}}\@ifnextchar{^}{\!\:}{}}
	\newcommand{\vd}{{\vec{d}}\@ifnextchar{^}{\!\:}{}}
	\newcommand{\ve}{{\vec{e}}\@ifnextchar{^}{\!\:}{}}
	\newcommand{\vy}{{\vec{y}}\@ifnextchar{^}{\!\:}{}}
	\newcommand{\vs}{{\vec{s}}\@ifnextchar{^}{\!\:}{}}
	\newcommand{\vt}{{\vec{t}}\@ifnextchar{^}{\!\:}{}}
	\newcommand{\vx}{{\vec{x}}\@ifnextchar{^}{}{}}		
	\newcommand{\vz}{{\vec{z}}\@ifnextchar{^}{\!\:}{}}
	\newcommand{\vv}{{\vec{v}}\@ifnextchar{^}{\!\:}{}}
	\newcommand{\vu}{{\vec{u}}\@ifnextchar{^}{\!\:}{}}
	\newcommand{\vf}{{\vec{f}}\@ifnextchar{^}{\!\:}{}}
	\newcommand{\vg}{{\vec{g}}\@ifnextchar{^}{\!\:}{}}
	\newcommand{\vr}{{\vec{r}}\@ifnextchar{^}{\!\:}{}}
	\newcommand{\vw}{{\vec{w}}\@ifnextchar{^}{\!\:}{}}
	\newcommand{\vY}{{\vec{Y}}\@ifnextchar{^}{\!\:}{}}
	\newcommand{\vX}{{\vec{X}}\@ifnextchar{^}{}{}}		
	\newcommand{\vZ}{{\vec{Z}}\@ifnextchar{^}{\!\:}{}}
	\newcommand{\vG}{{\vec{G}}\@ifnextchar{^}{\!\:}{}}
	\newcommand{\vaa}{{\vecalpha}}
\renewcommand{\C}{\mathbb{C}}
\renewcommand{\N}{\mathbb{N}}
\newcommand{\cA}{{\mathcal{A}}}
\newcommand{\cB}{{\mathcal{B}}}
\newcommand{\cC}{{\mathcal{C}}}
\newcommand{\cD}{{\mathcal{D}}}
\newcommand{\cI}{{\mathcal{I}}}
\newcommand{\cJ}{{\mathcal{J}}}
\newcommand{\cK}{{\mathcal{K}}}
\newcommand{\calL}{{\mathcal{L}}}
\newcommand{\cQ}{{\mathcal{Q}}}
\newcommand{\calP}{{\mathcal{P}}}
\newcommand{\cT}{{\mathcal{T}}}
\newcommand{\cW}{{\mathcal{W}}}
	\renewcommand{\cS}{{\mathcal{S}}}
	\newcommand{\cS}{{\mathcal{S}}}
\newcommand{\calS}{{\mathcal{S}}}
\newcommand{\cZ}{{\mathcal{Z}}}
\newcommand{\ideal}[1]{\left \langle{#1}\right \rangle}
\newcommand{\MVar}[2]{{#1}_1,\ldots, {#1}_{#2}}
\newcommand{\PRing}[3]{\mathbb{#1}[ \MVar{#2}{#3}]}
\newcommand{\CRing}[2]{\PRing{C}{#1}{#2}}
\newcommand{\spn}[1]{\operatorname{span}\{{#1}\}}
\newcommand{\Anote}[1]{\textcolor{red}{A: #1}\PackageWarning{amir: }{#1}}
\def\epsilon{\varepsilon} 
\newcommand{\shir}[1]{\newline{\color{blue} Shir Replay: {#1}}}
\newcommand{\MS}{\text{Lin}}
\newcommand{\LRV}{\cC_{[V]}}
\newcommand{\LRIdealV}{\cC_{\ideal{V}}}
\newcommand{\HRV}{\cJ_{[V]}}
\newcommand{\HRIdealV}{\cJ_{\ideal{V}}}
\newcommand\card[1]{\left|{#1}\right|}
\newif\ifEK
\date{}
\title{Robust Sylvester-Gallai type theorem for quadratic polynomials}
\author{Shir Peleg\thanks{
		The Blavatnik School of Computer Science, Tel Aviv University, Tel Aviv, Israel. Emails: \texttt{shirpele@tauex.tau.ac.il, shpilka@tauex.tau.ac.il}. The research leading to these results has received funding from the  Israel Science Foundation (grant number 514/20) and from the Len Blavatnik and the Blavatnik Family foundation. }  \and  Amir Shpilka\footnotemark[1]}
\begin{document}
\maketitle

\begin{abstract}

In this work we extend the robust version of the Sylvester-Gallai theorem, obtained by Barak, Dvir, Wigderson and Yehudayoff, and by Dvir, Saraf and Wigderson, to the case of quadratic polynomials.
Specifically, we prove that if $\cQ\subset \C[x_1.\ldots,x_n]$ is a finite set, $|\cQ|=m$, of irreducible quadratic polynomials that satisfy the following condition:
\begin{itemize}
	\item There is $\delta>0$ such that for every $Q\in\cQ$ there are at least $\delta m$ polynomials $P\in \cQ$ such that whenever $Q$ and $P$ vanish then so does a third polynomial in $\cQ\setminus\{Q,P\}$,
\end{itemize}
then $\dim(\spn{\cQ})=\poly(1/\delta)$. 

The work of Barak et al. and Dvir et al. studied the case of linear polynomials and proved an upper bound of $O(1/\delta)$ on the dimension (in the first work an upper bound of $O(1/\delta^2)$ was given, which was improved to $O(1/\delta)$ in the second work). 
\end{abstract}

%

\thispagestyle{empty}
\newpage

\tableofcontents

\thispagestyle{empty}
\newpage
\pagenumbering{arabic}

\section{Introduction}\label{sec:intro}

In this paper we prove a robust version of a result of \cite{ShpilkaSG}: Let  $\cT\subset\C[x_1,\ldots,x_n]$ be a finite set of polynomials. We say that $Q_1(\vx),Q_2(\vx)\in \cQ$ satisfy the \emph{Polynomial Sylvester-Gallai condition} (PSG-condition for short) if there is a third polynomial $Q_3(\vx)\in \cQ$ such that $Q_3(\vx)$ vanishes whenever $Q_1(\vx)$ and $Q_2(\vx)$ vanish. We  prove that if $\cT\subset \C[x_1,\ldots,x_n]$ is a finite set containing only irreducible  quadratic polynomials,
such that for every $Q\in \cT$  a $\delta$ fraction of the polynomials in $\cT$ satisfy the PSG-condition with $Q$, 
then $\dim\bra{\spn{\cT}}=\poly(1/\delta)$. 

The motivation for proving this result, besides its own appeal, is two fold: a similar theorem played an important role in the polynomial identity testing (PIT for short) problem for small depth algebraic circuits, one of the fundamental open problems in theoretical computer science, see \cite{Peleg-Shpilka-PIT}; and it is also related to a long line of work extending and generalizing the original Sylvester-Gallai theorem \cite{Melch41,Galai44} such as the results in \cite{DBLP:journals/dcg/Chvatal04,DBLP:journals/dm/PretoriusS09,DBLP:journals/dcg/DvirH16,DBLP:journals/dcg/BasitDSW19}. In particular, our result builds and generalizes a result of \cite{barak2013fractional,DSW12}, that can be viewed as proving an analogous claim for the case of degree-$1$ polynomials.
Such results are useful in discrete geometry \cite{barak2013fractional,DSW12}, in the study of locally correctable codes, for reconstruction of certain depth-$3$ circuits \cite{DBLP:journals/siamcomp/Shpilka09,DBLP:conf/coco/KarninS09,DBLP:conf/coco/Sinha16} and more. See the survey of Dvir on incidence geometry for some applications of Sylvester-Gallai type theorems \cite{DBLP:journals/fttcs/Dvir12}.

We next give background on the Sylvester-Gallai theorem, and some of its variants, and then discuss the connection to the polynomial identity testing problem.

\paragraph{Sylvester-Gallai type theorems:}

The Sylvester-Gallai theorem (SG-theorem) asserts that given a set $S=\cbra{\vv_1,\ldots,\vv_m}\subset\R^n$ such that $S$ is not contained in a line, there must be a line that contains exactly two points from $S$.
It was first conjectured by Sylvester in 1893 \cite{Syl1893} and then
proved, independently, by Melchior in 1941 \cite{Melch41} and Gallai \cite{Galai44} in 1943 (in an answer to the same question posed by Erd\"os, who was unaware of Melchior's result \cite{Erdos43}).
There are many extensions and generalizations of the theorem. We shall state a few that are related to this work. It is also helpful to think of the contra-positive statement. We say that a set of points is a Sylvester-Gallai configuration (SG-configuration for short) if every line that intersects the set at two points, must contain at least three points from the set. Thus, an SG-configuration in $\R^n$ must be colinear.

In \cite{Serre66} Serre, aware that the original formulation of the theorem is not true over $\C$ asked ``Is there a nonplanar version of the Sylvester-Gallai configuration over the field of complex numbers?''  Kelly proved that the answer is no, i.e. that every finite set of points in $\C^n$  satisfying the SG-condition is planar \cite{Kelly86}. Edelstein and Kelly proved a colorful variant of the problem:  
if three finite sets of points in $\R^n$ satisfy that every line passing through points from two different sets also contains a point from the third set, then, the points belong to a three-dimensional affine space. This result can be extended to any constant number of sets. Many more extensions and generalizations of the SG-theorem are known, e.g. \cite{Hansen65,DBLP:journals/dcg/DvirH16}.
The survey by Borwein and Moser \cite{BorwenMoser90} is a good resource on the SG-Theorem and some of the different variants that have been studied in the past.


More recently, Barak et al. \cite{barak2013fractional} and Dvir, Saraf and Wigderson \cite{DSW12}, motivated by questions on locally decodable codes and construction of rigid matrices, proved a \emph{robust} (or fractional) version of the SG-theorem: 	


\begin{definition}[$\delta$-SG configuration]\label{def:delta-SGConf}
	We say that a set of points $v_1,\ldots ,v_m\in \C^n$ is a $\delta$-SG configuration if for every $i\in [m]$ there exists at least $\delta (m-1)$ values of $j \in [m]$ such that the line through $v_i,v_j$ contains a third point in the set.
\end{definition}

\begin{theorem}[Theorem 1.9 of \cite{DSW12}]\label{thm:DSW}
		Let $V = \lbrace v_1,\ldots ,v_m\rbrace \subset \C^n$ be a $\delta$-SG configuration. Then $\dim(\spn{v_1,\ldots ,v_m}) \leq \frac{12}{\delta}+1$.
\end{theorem}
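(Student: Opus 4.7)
The plan is to turn the Sylvester--Gallai hypothesis into many three-term linear dependences among $v_1,\ldots,v_m$, assemble these dependences as the rows of a ``design matrix'' $A$, and then lower-bound $\operatorname{rank}(A)$ via the rank bound for complex design matrices. First, for each $i\in[m]$ let $S_i\subseteq[m]$ be the set of $j$ for which the line through $v_i,v_j$ contains a third point $v_{k(i,j)}\in V$; the hypothesis gives $|S_i|\ge\delta(m-1)$, and for each such pair I fix one choice of third point $v_{k(i,j)}$. Since $v_i,v_j,v_{k(i,j)}$ are collinear and distinct, there exist $a_{i,j},b_{i,j},c_{i,j}\in\C\setminus\{0\}$ with $a_{i,j}+b_{i,j}+c_{i,j}=0$ and
\[ a_{i,j}v_i+b_{i,j}v_j+c_{i,j}v_{k(i,j)}=0. \]

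Next I would let $A$ be the $N\times m$ matrix (where $N=\sum_i|S_i|\ge\delta m(m-1)$) whose row indexed by $(i,j)$ has exactly the three nonzero entries $a_{i,j},b_{i,j},c_{i,j}$ in columns $i,j,k(i,j)$, and let $V\in\C^{m\times n}$ be the matrix with rows $v_1,\ldots,v_m$. The relations above are precisely the statement $AV=0$, so the row space of $V$ sits inside $\ker(A)$, yielding $\operatorname{rank}(V)\le m-\operatorname{rank}(A)$; it therefore suffices to prove $\operatorname{rank}(A)\ge m-12/\delta-1$. The key structural features of $A$ are that (i) each row has sparsity exactly $3$, (ii) every column $i$ is nonzero in at least $\delta(m-1)$ rows (just those indexed by $(i,j)$ for $j\in S_i$), and (iii) any two columns $i\neq j$ are simultaneously nonzero in only $O(1)$ rows, since such a row must come from a triple of indices containing both $i$ and $j$, and for each ordered pair there are only boundedly many such triples by construction.

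To finish, I would invoke the rank lower bound for complex design matrices proven in \cite{DSW12}: any matrix with row sparsity $q$, column support at least $t$, and pairwise column intersection at most $s$ has rank at least $m-O(q^2 s m/t)$. Plugging in $q=3$, $t\ge\delta(m-1)$, $s=O(1)$ gives $m-\operatorname{rank}(A)=O(1/\delta)$, and a careful bookkeeping of the absolute constants in the design-matrix theorem yields the stated bound $\operatorname{rank}(V)\le 12/\delta+1$.

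The main obstacle is the design-matrix rank bound itself, not the reduction above. Over $\mathbb{R}$ one can use a Motzkin--Rabin style inductive argument, but over $\C$ (where the classical Sylvester--Gallai theorem already fails, by Kelly) the tight bound requires Sinkhorn-type \emph{matrix scaling}: one has to rescale the rows and columns of $A$ so that all column norms are equal while no row blows up, and then analyze the Gram matrix $A^*A$ via Cauchy--Schwarz on each pair of columns to show that it is close to a scalar multiple of the identity, which forces the rank to be almost $m$. Proving existence of the scaling and making the spectral estimate quantitative is the technically demanding part; once that black box is accepted, the theorem follows from the setup above.
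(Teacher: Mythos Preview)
The paper does not give its own proof of this statement; it is quoted from \cite{DSW12} and used as a black box. Your outline is exactly the Dvir--Saraf--Wigderson approach (design matrix plus matrix scaling), and your reduction $AV=0 \Rightarrow \operatorname{rank}(V)\le m-\operatorname{rank}(A)$, together with your identification of the scaling-based rank bound as the hard ingredient, are both correct.

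There is, however, a real gap at your step~(iii). With an \emph{arbitrary} choice of third point $k(i,j)$ it is simply false that two columns $i,j$ share support in only $O(1)$ rows. If the line through $v_i,v_j$ contains $\ell$ points of $V$, then for each of the other $\ell-2$ points $v_p$ on that line you may have chosen $k(i,p)=j$, so columns $i$ and $j$ are both nonzero in row $(i,p)$; combining the symmetric roles gives up to $\Theta(\ell)$ common rows, and $\ell$ can be as large as $m$. With unbounded $s$ the design-matrix rank bound yields nothing. The repair, which is a separate combinatorial lemma in \cite{barak2013fractional,DSW12}, is to choose the triples carefully: for each special line with $r$ points one builds (via a Latin-square / modular-sum construction) a family of $r(r-1)$ triples in which every point occurs $r-1$ times as the first coordinate but every \emph{pair} of points occurs together in at most six triples. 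Taking these triples as the rows of $A$ gives $q=3$, $s\le 6$, and column support $\ge\delta(m-1)$, after which the matrix-scaling rank bound indeed delivers $m-\operatorname{rank}(A)\le 12/\delta+1$. So the missing piece is not the rank bound you flagged but this triple-selection lemma that makes~(iii) true.
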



%
%
%

\paragraph{Algebraic generalizations of Sylvester-Gallai type theorems:}

Although the Sylvester-Gallai theorem and
Theorem~\ref{thm:DSW} 
are formulated in the setting of discrete geometry, there is a very natural algebraic formulation:
If a finite set of pairwise linearly independent vectors, $\cS \subset \C^n$, has the property that every two vectors span a third vector in the set, then the dimension of $\calS$ is at most $3$. The proof is immediate from Kelly's theorem: pick a subspace $H$ of codimension $1$, which is in general position with respect to the vectors in $\cS$. The intersection points $p_i = H \cap \spn{s_i}$, for $s_i\in \cS$, satisfy the SG-condition over $\C$. Therefore, $\dim(S) \leq 3$.  An equivalent formulation, in the case of linear functions, is the following: If a finite set of pairwise linearly independent linear forms, $\calL \subset \CRing{x}{n}$, has the property that for every two forms $\ell_i,\ell_j \in \calL$ there is a third  $\ell_k \in \calL$, such that $\ell_k=0$ whenever $\ell_i=\ell_j=0$,  then the linear dimension
of $\calL$ is at most $3$. To see the equivalence note that it must be the case that $\ell_k \in \spn{\ell_i,\ell_j}$ and thus the coefficient vectors of the forms in the set satisfy the condition for the (vector version of the) SG-theorem, and the bound on the dimension follows. Observe that the last example shows that in the case of linear functions the PSG-condition and the SG-condition are equivalent. 
The last formulation can now be generalized to higher degree polynomials. In particular, the following conjecture was raised by Gupta \cite{Gupta14}.


\begin{definition}[PSG-configuration]\label{def:PSG-condition}
\sloppy	Let $\cT\subset\C[x_1,\ldots,x_n]$ be a set of polynomials.
	We say that $Q_1,Q_2\in \cT$ satisfy the \emph{Polynomial Sylvester-Gallai condition} (PSG-condition for short) if there is a third polynomial $Q_3(\vx)\in \cT$ such that $Q_3$ vanishes whenever $Q_1$ and $Q_2$ vanish. 
	
	We say that a set $\cT$ is a PSG-configuration if every two polynomials $Q_1,Q_2\in \cT$ satisfy the PSG-condition.	
\end{definition}

\begin{problem}[Conjecture 2 of \cite{Gupta14}]\label{prob:sg-alg}
	There is a function $\lambda:\N\to \N$ such that  
for any finite set $\cT\subset \CRing{x}{n}$ of pairwise linearly independent and irreducible polynomials, of degree at most $r$, that satisfy the PSG-condition, it holds that the \emph{algebraic rank} of $\cT$ is at most $\lambda(r)$.
\end{problem}

This problem was answered affirmatively, with a stronger conclusion, in the case of quadratic polynomials ($r=2$) in \cite{ShpilkaSG}.

\begin{theorem}[Theorem 1.7 of \cite{ShpilkaSG}]\label{thm:shp-sg}
	There is a constant $\lambda$ such that the following holds for every $n\in\N$. Let $\cT\subset  \CRing{x}{n}$ consist of homogeneous quadratic polynomials, such that each $Q\in\cT$ is either irreducible or a square of a linear function. If $\cT$ satisfies the PSG-condition then  $\dim\bra{\spn{\cT}}\leq \lambda$.
\end{theorem}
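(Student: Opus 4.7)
The plan is to reduce Theorem~\ref{thm:shp-sg} to the classical complex Sylvester--Gallai theorem (Kelly's theorem) via a structural understanding of the PSG-condition for irreducible quadratics.

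\textbf{Step 1: Structure lemma for PSG-triples.} I would first classify the possible shapes of a PSG-triple $(Q_1,Q_2,Q_3)$ of irreducible homogeneous quadratics. By the Nullstellensatz, $Q_3\in\sqrt{\ideal{Q_1,Q_2}}$. Since $Q_1$ and $Q_2$ are irreducible and non-proportional they are coprime, hence a regular sequence, so the complete intersection $\ideal{Q_1,Q_2}$ has degree-$2$ component exactly $\spn{Q_1,Q_2}$. The goal is to show that either
\begin{enumerate}
\item[(a)] $Q_3\in\spn{Q_1,Q_2}$, or
\item[(b)] there exist a universally bounded number $c$ of linear forms $\ell_1,\ldots,\ell_c$ such that $Q_1,Q_2,Q_3\in\C[\ell_1,\ldots,\ell_c]$.
\end{enumerate}
Case~(b) captures precisely the situation when the scheme $V(Q_1)\cap V(Q_2)$ is non-reduced (the two quadrics are tangent along a linear component, or their pencil has a degenerate member). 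Proving this classification is done via simultaneous diagonalization of the symmetric matrix pencil of $Q_1,Q_2$, together with an analysis of which extra quadrics can lie in the ideal of the underlying set $V(Q_1)\cap V(Q_2)$.

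\textbf{Step 2: Kelly's theorem in the linear case.} Suppose that every PSG-witness for every pair in $\cT$ is of type~(a). Viewing $\cT$ as a set of points in the projective space of homogeneous quadratic forms, the condition $Q_3\in\spn{Q_1,Q_2}$ becomes projective collinearity, so $\cT$ is a complex Sylvester--Gallai configuration in that ambient. Kelly's theorem places $\cT$ in a projective $2$-plane, and hence $\dim\bra{\spn\cT}\leq 3$.

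\textbf{Step 3: Reservoir argument for low-rank witnesses.} The real work is when many PSG-witnesses are of type~(b). Here I aim to produce a single bounded-size set $L$ of linear forms such that every $Q\in\cT$ lies in $\C[L]$. Fix a root quadratic $Q_0\in\cT$; for each $Q\in\cT$ the pair $(Q_0,Q)$ together with its PSG-companion yields either a relation handled by Step~2 or a constant-size linear-form ambient $L_Q$. A merging argument, exploiting the PSG-condition applied to triples $(Q,Q',Q'')$, shows that the various $L_Q$'s can be amalgamated into a global $L$ of absolute constant size. Polynomials $Q\in\cT$ that are squares of linear forms are handled by inspection, since they impose rigid constraints on any companion and are easily absorbed into the reservoir.

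\textbf{Main obstacle.} The central difficulty is Step~1: making the constant $c$ absolute and independent of $n$. One must rule out infinite families of degenerate quadric pencils in unboundedly many variables whose underlying intersection admits extra degree-$2$ vanishers. A secondary challenge is preventing Step~3 from blowing $\lambda$ into a super-constant quantity; this requires ordering the mergers by quadratic rank and exploiting the transitive pressure that the PSG-condition places on the linear forms appearing across $\cT$.
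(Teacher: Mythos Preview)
Note first that \autoref{thm:shp-sg} is quoted from \cite{ShpilkaSG} rather than proved in this paper; the paper proves the robust generalization \autoref{thm:psg-robust} and sketches the original argument in Section~\ref{sec:proof-idea}. Both rest on the classification \autoref{thm:structure}, and comparing your Step~1 to it reveals a genuine gap: your dichotomy (a)/(b) is false. The correct result is a \emph{trichotomy}, and the case you are missing is \autoref{thm:structure}\ref{case:rk1}, in which some nontrivial combination $\alpha Q_1+\beta Q_2$ equals $\ell^2$ for a single linear form $\ell$. In that case $Q_1,Q_2,Q_3$ need not live in any bounded ambient of linear forms. For instance, for any $k\ge 2$ take $Q_2=\sum_{i=1}^{k}x_{2i}x_{2i+1}$, $Q_1=x_1^2+Q_2$, and $Q_3=x_1x_2+Q_2$. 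Then $Q_1-Q_2=x_1^2$, the polynomial $Q_3$ is irreducible and lies in $\sqrt{\ideal{Q_1,Q_2}}$, yet $Q_3\notin\spn{Q_1,Q_2}$ and all three quadrics essentially involve $2k+1$ variables. So neither your~(a) nor your~(b) applies, your reduction to Kelly in Step~2 does not cover such triples, and the bounded reservoir $L_Q$ posited in Step~3 simply does not exist when case~\ref{case:rk1} is in play.

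The actual proofs handle case~\ref{case:rk1} by a separate mechanism (see \autoref{cla:Q_2-dim} and \autoref{cla:case_rk1-dim} in this paper, which specialize to the $\delta=1$ argument of \cite{ShpilkaSG}): if a fixed $Q$ satisfies case~\ref{case:rk1} with many partners, the linear forms $\ell$ arising as square roots in the pencils through $Q$ are shown to span a bounded-dimensional space $V$, which places those partners in $\spn{Q,\C[V]_2}$ --- crucially not in $\C[V]_2$ itself. One then partitions $\cT$ according to which of the three cases of \autoref{thm:structure} dominates for each polynomial and treats the three parts with different, interacting arguments. Your~(a) matches case~\ref{case:span} and your~(b) matches case~\ref{case:2} (where indeed $\rank_s\le 2$ forces a bounded ambient), but case~\ref{case:rk1} is the missing idea in your proposal.
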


Motivated by applications for the polynomial identity testing problem, Gupta \cite{Gupta14} and Beecken, Mittmann and Saxena \cite{DBLP:journals/iandc/BeeckenMS13} also raised the following colorful variant, which generalizes the Edelstein-Kelly theorem.

\begin{conjecture}[Conjecture 30 of \cite{Gupta14}]\label{con:gupta-ek}

	There is a function $\lambda:\N\to \N$ such that the following holds for every $r,n\in\N$.
	Let $R, B, G$ be finite disjoint sets of pairwise linearly independent, irreducible, homogeneous polynomials in $\C[x_1,\ldots,x_n]$ of degree $\leq r$  such that   for every pair $Q_1,Q_2$ from distinct sets there is a $Q_3$
	in the remaining set so that whenever $Q_1$ and $Q_2$ vanish then also $Q_3$ vanishes. Then the algebraic rank of $(R \cup B \cup G)$ is at most $\lambda(r)$.
\end{conjecture}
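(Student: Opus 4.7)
Since Conjecture~\ref{con:gupta-ek} is still open, I will outline how I would approach the quadratic case $r=2$, which is the regime put within reach by the tools of this paper. The overall plan is to reduce the colorful hypothesis on the union $\cT \eqdef R \cup B \cup G$ to a fractional (non-colorful) PSG hypothesis, and then invoke the robust Sylvester-Gallai theorem for quadratics announced in the abstract, falling back on Theorem~\ref{thm:shp-sg} as the non-robust backbone. Since all polynomials are homogeneous and irreducible, bounding the linear dimension of $\spn{\cT}$ will also bound the algebraic rank.

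The first step is to observe that the colorful PSG condition automatically produces many PSG-triples inside $\cT$: for each $Q_1 \in R$ and each $Q_2 \in B$, the hypothesis furnishes a $Q_3 \in G$ vanishing on the common zero set of $Q_1$ and $Q_2$, and analogously for the other two cross-color pairs. Consequently, every $Q \in R$ has at least $|B|+|G|$ PSG-partners inside $\cT \setminus \{Q\}$, and likewise for $B$ and $G$. If the three color classes are balanced, say $|R|,|B|,|G| \ge c m$ for $m=|\cT|$ and an absolute $c>0$, then every element of $\cT$ has a $\delta$-fraction of PSG-partners for $\delta = \Omega(1)$, and the robust theorem directly yields $\dim\bra{\spn{\cT}} = O(1)$.

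The second step is to handle the unbalanced regime, say $|R| \gg |B|+|G|$. For each fixed $Q_R \in R$, varying $Q_B \in B$ produces, via the cross-color hypothesis, many triples $(Q_R,Q_B,Q_G)$ in PSG position. Using the structural classification of PSG-triples of quadratics underlying the proof of Theorem~\ref{thm:shp-sg} (roughly: $Q_G \in \spn{Q_1,Q_2}$, a common linear factor, or a square/product pattern of linear forms), I would pigeonhole on the structural ``type'' of these triples, and thus show that $Q_R$ is algebraically determined by bounded-type data drawn from $B\cup G$. This would bound $\dim\bra{\spn{R}}$ in terms of $\dim\bra{\spn{B\cup G}}$, which is handled recursively (either by induction on $|\cT|$, or by a direct appeal to Theorem~\ref{thm:shp-sg} applied to $B\cup G$ after re-establishing an ordinary or fractional PSG condition inside it).

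The main obstacle, I expect, is the intra-color gap: two polynomials of the same color are not required by the hypothesis to satisfy PSG. This is harmless in the balanced regime because the cross-color PSG-partners already constitute an $\Omega(1)$-fraction of $\cT$, but in the unbalanced regime the non-generic structural cases of Theorem~\ref{thm:shp-sg} (where $Q_G \notin \spn{Q_R,Q_B}$) must be propagated consistently across the three color classes. Controlling these degenerate triples---in particular ruling out large ``webs'' of shared linear factors or squared forms that interact nontrivially across $R$, $B$, and $G$---is, I expect, the crux of the argument, and is precisely the analogue of the delicate case analysis that drives the proof of Theorem~\ref{thm:shp-sg}.
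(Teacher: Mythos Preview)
This statement is an open \emph{conjecture} in the paper; there is no proof here to compare against. You correctly note this and target the $r=2$ case instead, but that case is \autoref{thm:shp-ek}, which the paper cites from \cite{ShpilkaSG} rather than proving.

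Your balanced-case reduction is valid: if $|R|,|B|,|G|\ge cm$ then every element of $\cT=R\cup B\cup G$ has at least $2cm$ cross-color PSG-partners, so $\cT$ is a $(2c)$-PSG configuration and \autoref{thm:psg-robust} applies. The unbalanced case, however, has a genuine gap. When $|B|+|G|=o(m)$ the bound from \autoref{thm:psg-robust} degrades to $O\!\left((m/(|B|+|G|))^{16}\right)$, which is vacuous, and your proposed repair---pigeonhole each $Q_R$ into one of the three cases of \autoref{thm:structure} and argue it is ``determined by data from $B\cup G$''---does not bound $\dim(\spn{R})$. For example, if every triple lands in Case~\ref{case:2} you learn only that each $Q_R\in\ideal{\ell_1,\ell_2}$ for some pair of linear forms that may vary arbitrarily with $Q_R$. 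Your fallback of applying \autoref{thm:shp-sg} recursively to $B\cup G$ also fails: the cross-color hypothesis gives no PSG relation internal to $B\cup G$, since the third polynomial in every such triple lies in $R$. The actual proof of \autoref{thm:shp-ek} in \cite{ShpilkaSG} does not pass through the non-colorful theorem; it carries all three colors through the structural case analysis simultaneously.
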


This problem was also answered affirmatively, with the same stronger conclusion, in \cite{ShpilkaSG}, for the case of quadratic polynomials.

\begin{theorem}[Theorem 1.8 of \cite{ShpilkaSG}]\label{thm:shp-ek}
	There is a constant $\lambda$ such that the following holds for every $n\in\N$.
	Let $\cT_1,\cT_2$ and $\cT_3$ be finite sets of homogeneous quadratic polynomials over $\C$ satisfying the following properties:
	\begin{itemize}
		\item Each $Q\in\cup_i\cT_i$ is either irreducible or a square of a linear function.
		\item No two polynomials are multiples of each other (i.e., every pair is linearly independent).
		\item For every two polynomials $Q_1$ and $Q_2$ from distinct sets there is a polynomial $Q_3$ in the third set so that whenever $Q_1$ and $Q_2$ vanish then also $Q_3$ vanishes. 
	\end{itemize}
	Then $\dim\bra{\spn{\cup_i\cT_i}}$ has dimension $O(1)$.
\end{theorem}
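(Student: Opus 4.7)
The plan is to mirror the strategy used for the monochromatic quadratic Sylvester--Gallai theorem (Theorem~\ref{thm:shp-sg}), combining a structural dichotomy for the PSG-condition on irreducible quadratics with the classical (linear) Edelstein--Kelly theorem as the base case. The technical heart of the argument is a structural lemma classifying the ways in which a homogeneous quadratic $Q_3$ can lie in $\sqrt{(Q_1,Q_2)}$, when $Q_1,Q_2$ are irreducible homogeneous quadratics. Using Nullstellensatz together with an analysis of the projective variety $V(Q_1)\cap V(Q_2)$, one should be able to show that such a $Q_3$ must fall into one of a small number of rigid cases, essentially: either $Q_3\in\spn{Q_1,Q_2}$ (the ``linear'' case), or $Q_1,Q_2,Q_3$ all share a common linear factor (only possible when some are squares of linear forms), or there exists a small set of linear forms $\ell_1,\ldots,\ell_k$ (with $k$ an absolute constant) such that all three quadrics become ``simple'' modulo $\ideal{\ell_1,\ldots,\ell_k}$, e.g.\ splitting as products of linear forms in the remaining variables.

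First I would establish this dichotomy as a standalone algebraic statement, since it governs the whole argument. Next, to prove Theorem~\ref{thm:shp-ek}, I would take any $Q\in\cT_1$ and partition the pairs $(Q,P)$ with $P\in\cT_2$ according to which case of the dichotomy the forced third polynomial $R\in\cT_3$ lies in, and repeat symmetrically for the other color pairings. If for each $Q\in\cup_i\cT_i$ a positive fraction of its partners fall into the linear case, then modulo the $O(1)$-dimensional span of the polynomials appearing in the other cases, the problem descends to an Edelstein--Kelly configuration on the $\spn{\cT_i}$'s viewed as subspaces of quadratic forms, and the classical linear Edelstein--Kelly theorem (applied after a generic hyperplane restriction \`a la Kelly) bounds this dimension by $O(1)$.

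Otherwise, some $Q$ has many partners falling into the ``common linear factor'' or ``low-rank modulo few linear forms'' cases. I would then identify a constant-size set $L$ of linear forms that serves as ``witnesses'' for a large fraction of the PSG-triples, and project out by working in $\C[\vx]/\ideal{L}$. The structural lemma ensures that many polynomials become reducible or even zero in the quotient, and an averaging/double-counting argument should show that each such reduction strictly decreases the residual complexity of the configuration. After a constant number of such projections the surviving configuration lives in the linear regime above, and an $O(1)$ bound follows.

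The main obstacle I expect is twofold. First, proving the structural dichotomy in a clean usable form requires a careful case analysis of quadric pencils and their singular loci, and one must handle the case of reducible quadratics (squares of linear forms) uniformly. Second, in the colorful setting one must ensure that reductions performed to simplify triples across, say, $\cT_1$--$\cT_2$, do not destroy the Edelstein--Kelly structure of triples involving $\cT_3$; this is where the symmetry of the hypothesis and careful bookkeeping of which linear forms are ``spent'' at each reduction step become crucial, and it is the reason the final bound is only $O(1)$ rather than an explicit small constant.
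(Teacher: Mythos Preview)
This theorem is not proved in the present paper: it is quoted as Theorem~1.8 of \cite{ShpilkaSG} and used only as background. There is therefore no proof here to compare your proposal against. What the paper does contain is a description (Section~\ref{sec:proof-idea}) of the strategy of \cite{ShpilkaSG} for the monochromatic analogue, Theorem~\ref{thm:shp-sg}, and the precise structural dichotomy you allude to, stated as Theorem~\ref{thm:structure}.

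Your high-level plan is in the right spirit but is not yet a proof, and your version of the dichotomy is inaccurate in ways that matter. The actual trichotomy (Theorem~\ref{thm:structure}) says that if $C\in\sqrt{\ideal{A,B}}$ for irreducible quadratics, then either (i) $C\in\spn{A,B}$, or (ii) some nontrivial $\alpha A+\beta B$ is a perfect square $\ell^2$, or (iii) $A,B,C$ all lie in an ideal $\ideal{\ell_1,\ell_2}$ generated by two linear forms. There is no ``common linear factor'' case for irreducible quadratics, and case~(ii) is about a rank-one pencil member, not about factoring. Getting these cases right is essential, because the subsequent argument treats them very differently: case~(ii) is handled by showing the relevant $\ell$'s span a low-dimensional space, while case~(iii) forces $\rank_s\le 2$ and is handled by building a small space $V$ with $\cQ_3\subset\ideal{V}$.

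More seriously, your reduction step (``project out by $L$, residual complexity decreases, iterate'') is precisely where all the difficulty lies, and your sketch gives no mechanism for it. The approach in \cite{ShpilkaSG}, as summarized here, does not iterate generic projections; it first builds a constant-dimensional $V$ and a constant-size set $\cJ$ with $\cQ\subset\spn{\cJ,\,\cQ\cap\ideal{V}}$, and then bounds $\dim(\cQ\cap\ideal{V})$ by mapping $V$ to a single new variable $z$ so that each $Q_i\in\ideal{V}$ becomes $z\ell_i$, reducing to the \emph{linear} Sylvester--Gallai (or Edelstein--Kelly) theorem on the $\ell_i$'s. Your proposal does not identify this $z$-projection step, nor the bookkeeping needed to show the $\ell_i$'s inherit an SG/EK structure; without it the argument does not close.
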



\paragraph{PIT and Sylvester-Gallai type theorems:}

The PIT problem asks to give a deterministic algorithm that given an arithmetic circuit as input determines whether it computes the identically zero polynomial. The circuit can be given either via a description of its graph of computation (white-box model) or via oracle access to the polynomial that it computes (black-box model). This is a fundamental problem in theoretical computer science that has received a lot of attention from researchers in the last two decades. Besides of being a natural and elegant question, the PIT problem is important due  its connections to lower bounds for arithmetic circuits (hardness-randomness tradeoffs) \cite{DBLP:conf/stoc/HeintzS80,DBLP:conf/fsttcs/Agrawal05,DBLP:journals/cc/KabanetsI04,DBLP:journals/siamcomp/DvirSY09,DBLP:conf/coco/ChouKS18}; its relation to other derandomization problems such as finding perfect matching deterministically, in parallel, \cite{DBLP:journals/cacm/FennerGT19,DBLP:conf/focs/SvenssonT17}, derandomizing factoring algorithms \cite{DBLP:journals/cc/KoppartySS15}, derandomization questions in geometric complexity theory \cite{Mulmuley-GCT-V,DBLP:conf/approx/ForbesS13}; its role in algebraic natural proofs \cite{DBLP:journals/toc/ForbesSV18,DBLP:journals/corr/Grochow0SS17}. In particular, PIT appears to be  the most general algebraic derandomization problem. For more on the PIT problem see  \cite{DBLP:journals/fttcs/ShpilkaY10,Saxena09,Saxena14, ForbesThesis}. For a survey of algebraic hardness-randomness tradeoffs see  \cite{DBLP:journals/eatcs/0001S19}.

A beautiful line of work has shown that deterministic algorithms for the PIT problem for homogeneous depth-$4$ circuits or for depth-$3$ circuits would lead to deterministic algorithms for general circuits \cite{DBLP:conf/focs/AgrawalV08,KKS16Jour}. This makes small depth circuits extremely interesting for the PIT problem. This is also the setting where  Sylvester-Gallai type theorems play an important role. 
 The relation between (colored-versions of the) SG-theorem and deterministic PIT algorithms for depth-$3$  circuits was observed in \cite{DBLP:journals/siamcomp/DvirS07}. The work of \cite{DBLP:conf/focs/KayalS09,DBLP:journals/jacm/SaxenaS13} used this relation to obtain polynomial- and quasi-polynomial-time PIT algorithms for depth-$3$ circuits, depending on the characteristic. Currently, the best algorithm for PIT of depth-$3$ circuits was obtained through a different yet highly related approach in \cite{DBLP:journals/siamcomp/SaxenaS12}. As the SG-theorem played such an important role in derandomizing PIT for depth-$3$ circuits, it was asked whether a similar approach could work for depth-$4$ circuits. This motivated  \cite{DBLP:journals/iandc/BeeckenMS13,Gupta14} to raise Problem~\ref{prob:sg-alg} and Conjecture~\ref{con:gupta-ek}. In \cite{Peleg-Shpilka-PIT} we gave a positive answer to  Conjecture~\ref{con:gupta-ek} for the case of degree-$2$ polynomials ($r=2$). Interestingly, \autoref{thm:DSW} played a crucial role in the proof, as well as in the proofs of \cite{ShpilkaSG,Peleg-Shpilka-SG}. Studying the proofs of \cite{ShpilkaSG,Peleg-Shpilka-SG,Peleg-Shpilka-PIT} leads to the conclusion that in order to solve Problem~\ref{prob:sg-alg} and Conjecture~\ref{con:gupta-ek} for degrees larger than $2$, we must first obtain a result analogous to \autoref{thm:DSW}.

\paragraph{Our results:}

In this work we are able to prove an analog of \autoref{thm:DSW} for quadratic polynomials. We hope that this result will lead to an extension of the works 
\cite{ShpilkaSG,Peleg-Shpilka-SG,Peleg-Shpilka-PIT} to higher degree polynomials.

\begin{definition}[$\delta$-PSG-configuration]\label{def:delta-PSG-condition}
	Let $\cQ\subset\C[x_1,\ldots,x_n]$ be a set of polynomials.
	We say that a finite set of polynomials $\cQ$ is a $\delta$-PSG configuration if for every $Q\in\cQ$ there are at least $\delta\cdot |\cQ|$ polynomials $P\in\cQ$ such that $Q$ and $P$ satisfy the PSG condition.
\end{definition}

\begin{theorem}\label{thm:psg-robust}
	Let $\cQ\subset\C[x_1,\ldots,x_n]$ a finite set of irreducible quadratic polynomials. If $\cQ$ is  a $\delta$-PSG configuration then $\dim(\spn{\cQ})=O(1/\delta^{16})$.
\end{theorem}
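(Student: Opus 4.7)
The proof plan is to reduce \autoref{thm:psg-robust} to the vector Sylvester--Gallai theorem of \cite{DSW12} (\autoref{thm:DSW}) combined with the quadratic PSG theorem of \cite{ShpilkaSG} (\autoref{thm:shp-sg}), by classifying the PSG-witnesses appearing in the $\delta$-PSG configuration. The essential new difficulty, absent in the linear setting of \cite{DSW12}, is that for irreducible quadratics $Q, P \in \cQ$ with PSG-witness $R \in \cQ$ we need not have $R \in \spn{Q, P}$. To handle this, for each pair $(Q, P)$ that satisfies the PSG condition with witness $R = R(Q, P)$, declare the pair \emph{linear} if $R \in \spn{Q, P}$ and \emph{non-linear} otherwise. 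A structural analysis of the non-linear case --- drawing on the classification of quadratic PSG-triples developed in \cite{ShpilkaSG, Peleg-Shpilka-SG} --- shows that irreducible $Q, P, R$ with $R$ vanishing on $V(Q) \cap V(P)$ but $R \notin \spn{Q, P}$ must share a specific algebraic structure, such as a common linear factor (after a suitable change of coordinates or modulo a lower-rank piece) or a shared rank-$2$ quadratic direction. This classification is the decisive structural ingredient.

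With this dichotomy in hand, I would argue by pigeonhole on $Q \in \cQ$: either at least a $\delta/2$-fraction of the PSG-partners of $Q$ give linear witnesses, or at least a $\delta/2$-fraction give non-linear witnesses. In the linear subcase, restrict to a subset $\cQ' \subseteq \cQ$ of density $\Omega(\delta)$ for which the coefficient vectors in $\C^{N}$ (with $N$ the dimension of the space of quadratics in $n$ variables) form a $\Theta(\delta)$-SG configuration in the sense of \autoref{def:delta-SGConf}. Applying \autoref{thm:DSW} then bounds $\dim(\spn{\cQ'}) = O(1/\delta)$. One then removes $\spn{\cQ'}$ from $\cQ$, passes to the quotient, and iterates with a slightly degraded density parameter; each such round contributes a power of $1/\delta$ to the overall bound but terminates after a number of rounds bounded by a small polynomial in $1/\delta$.

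The non-linear subcase is the hard case and drives the exponent $16$. Here the structural classification forces many triples $(Q, P, R(Q, P))$ to share common linear forms or shared rank-$2$ directions. A further pigeonhole over the possible structural types should produce a large subset of $\cQ$ all sharing a common linear form $\ell$ in their structural decomposition; restricting to this subset, one either (i) quotients by $\ell$ and reduces to a subproblem on polynomials of lower effective degree or dimension, or (ii) restricts to a sub-configuration that satisfies the hypothesis of \autoref{thm:shp-sg} and is therefore of constant dimension. Iterating these reductions while carefully tracking the cumulative $\delta$ losses --- across the linear/non-linear dichotomy, the pigeonhole over structural types, and the repeated subspace extractions --- should yield the stated bound $O(1/\delta^{16})$. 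I expect the main technical obstacles to be (a) formulating the structural theorem for non-linear PSG quadratic triples with enough quantitative precision to drive the iteration, (b) controlling the depth of the recursion so that the $\delta$ losses do not compound catastrophically (the exponent $16$ suggests roughly four to five nested pigeonhole steps), and (c) handling degenerate configurations, such as many polynomials in $\cQ$ simultaneously sharing several linear structures, on an equal footing with the generic case without introducing circular reasoning.
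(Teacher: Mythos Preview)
Your proposal identifies some of the right high-level ingredients --- the dichotomy into ``span-type'' witnesses (Case~\ref{case:span} of \autoref{thm:structure}) versus the other cases, and the use of \autoref{thm:DSW} for the span case --- but it is a plan rather than a proof, and several of the steps you outline would not go through as stated.

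First, the ``non-linear'' case is not one phenomenon but two: \autoref{thm:structure} gives three cases, and the paper treats Case~\ref{case:rk1} (a rank-one linear combination) and Case~\ref{case:2} (a common linear ideal) by completely different arguments (\autoref{cla:Q_2-dim} and \autoref{cla:VforP3}). Your pigeonhole ``over structural types'' would need to be this three-way split, and each branch requires its own construction; in particular Case~\ref{case:rk1} does not yield a ``shared linear form $\ell$'' in the sense you describe.

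Second, and more seriously, the decisive new technical ingredient of the paper is absent from your plan. Once one has shown $\cQ \subset \spn{\cJ, \ideal{V}}$ for small $\cJ$ and $V$, the hard part is bounding the polynomials in $\ideal{V}$ that are \emph{not} in $\C[V]_2$. The paper's mechanism is a strengthening of Case~\ref{case:2} (\autoref{cla:case3-strong} and \autoref{cor:unique-T}) which proves that when $P \in \ideal{V} \setminus \C[V]_2$ and $Q \in \C[V]_2$ satisfy the PSG condition, the witness $T \in \sqrt{\ideal{P,Q}}$ is \emph{unique} as $Q$ varies. This uniqueness is what makes the iteration terminate: adding $\MS(P)$ to $V$ drags $\Omega(\delta m)$ distinct witnesses into $\C[V]_2$. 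Nothing in your outline produces such a uniqueness statement, and without it there is no reason the recursion depth is $O(1/\delta)$ rather than unbounded.

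Third, two of your proposed reductions are problematic on their face. Invoking \autoref{thm:shp-sg} on a sub-configuration is circular: that theorem requires $\delta=1$, and no amount of restriction will turn a $\delta$-PSG configuration into a full PSG configuration (a polynomial's PSG-partners need not lie in the restricted set). And ``remove $\spn{\cQ'}$ and pass to the quotient'' does not preserve the PSG property: if $R \in \sqrt{\ideal{P,Q}}$ and you quotient out a subspace containing $R$, the image of $P$ may have no PSG-partner left. The paper never quotients; it instead enlarges $V$ and $\cJ$ while tracking the four sets $\LRV, \LRIdealV, \HRV, \HRIdealV$ and proving that the two ``bad'' sets can be made small (\autoref{cla:decreaseCideal}, \autoref{cla:decreaseJideal}) and then must in fact be empty (\autoref{cla:C,J-large}).
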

%

\begin{remark}
	The same conclusion holds even if we allow irreducible polynomials of degree
	 at most $2$ (i.e. if we allow linear functions). The proof is similar in nature, with more case
	analysis, and so we decided to omit it.
\end{remark}

Note that this is robust version of \autoref{thm:shp-sg} in the same sense that \autoref{thm:DSW} is a robust version of the SG-theorem. 

\begin{remark}
	While the result in \autoref{thm:DSW} tight (up to the constant in the big Oh), we do not believe that the result of \autoref{thm:psg-robust} is tight. In particular, we believe that the  upper bound should be  $O(1/\delta)$.
\end{remark}
\begin{remark}
After this work was completed, we learned from Garg, Oliviera and Sengupta that they have independently obtained the same result and are in the process of writing it \cite{GOS}. Both proofs heavily rely on case analysis and are similar in nature with small technical differences. The main difference between the proofs is that we relied on Theorem 1.10 of \cite{ShpilkaSG} (see \autoref{thm:structure}) while they proved a generalization of  that theorem.
\end{remark}

\subsection{Proof idea}\label{sec:proof-idea}

To explain the proof we will use some notation that we define in Section~\ref{sec:notation}. For example, $\ideal{\cdot}$ denotes an ideal, $\sqrt{\ideal{\cdot}}$ denotes the radical of the ideal, and $\C[V]_2$ denotes the space of all quadratic polynomials defined only using the linear forms in $V$.

At the heart of all previous work lies an algebraic theorem, classifying the cases in which a quadratic polynomial vanishes when two other quadratics vanish (actually, for \cite{Peleg-Shpilka-SG,Peleg-Shpilka-PIT} a more general result was needed - a  characterization of the different cases in which a product of quadratic polynomials vanishes whenever two other quadratics vanish).

\begin{theorem}[Theorem 1.10 of \cite{ShpilkaSG}]\label{thm:structure}
	Let $A,B$ and $C$ be $n$-variate, homogeneous, quadratic polynomials, over $\C$, such  that whenever $A$ and $B$ vanish then so does $C$. Then, one of the following cases must hold:
	\begin{enumerate}[label={(\roman*)}]
		\item \label{case:span} 
		$C$ is in the linear span of $A$ and $B$.  
		\item \label{case:rk1}
		There exists a non trivial linear combination of the form $\alpha A+\beta B = \ell^2$ for some linear form $\ell$. 
		\item \label{case:2} 
		There exist two linear forms $\ell_1$ and $\ell_2$ such that when setting $\ell_1=\ell_2=0$ we get that $A$ and $B$ (and consequently $C$) vanish.  
	\end{enumerate}
\end{theorem}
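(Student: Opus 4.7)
The plan is to apply Hilbert's Nullstellensatz over $\C$, turning the vanishing hypothesis into the algebraic statement $C \in \sqrt{\ideal{A,B}}$, and then carry out a case split on the pencil $\calP = \cbra{\alpha A + \beta B \;:\; [\alpha{:}\beta]\in\mathbb{P}^1}$. Both $\ideal{A,B}$ and the question ``does $C$ lie in $\spn{A,B}$?'' are invariant under replacing $(A,B)$ by another basis of $\spn{A,B}$, so I am free to pick convenient representatives inside $\calP$. I will also use the classical fact that a quadratic form over $\C$ is irreducible iff its rank is at least $3$, factors as a product of two linearly independent linear forms iff its rank equals $2$, and equals a perfect square $\ell^2$ iff its rank equals $1$.

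First I dispose of the degenerate situation where $A$ and $B$ share a common linear factor $\ell$: both vanish on the hyperplane $\{\ell=0\}$, so case (iii) is satisfied with $\ell_1 = \ell$ and any independent $\ell_2$. Assume henceforth that $A,B$ are coprime and split on the minimum rank appearing in $\calP$. If rank $1$ occurs then case (ii) is immediate. If the minimum rank is $2$, re-choose the basis so that $A = \ell_1\ell_2$ with $\ell_1,\ell_2$ linearly independent. Then $V(A)\cap V(B) = V(\ell_1,B)\cup V(\ell_2,B)$, and $C$ must vanish on each piece; restricting to $\ell_i = 0$ and applying the Nullstellensatz gives $C|_{\ell_i=0}\in\sqrt{\ideal{B|_{\ell_i=0}}}$. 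In the typical sub-case $B|_{\ell_i=0}$ is irreducible as a quadratic in $n-1$ variables and hence generates a prime ideal, so $C|_{\ell_i=0}=\lambda_i B|_{\ell_i=0}$, equivalently $\ell_i\mid (C-\lambda_i B)$. If $\lambda_1 = \lambda_2$ then $A = \ell_1\ell_2$ divides $C-\lambda_1 B$, so $C\in\spn{A,B}$ (case (i)); if $\lambda_1 \neq \lambda_2$, subtracting the two divisibilities yields $B \in \ideal{\ell_1,\ell_2}$, so setting $\ell_1 = \ell_2 = 0$ annihilates both $A$ and $B$, which is case (iii). The exceptional sub-case $B|_{\ell_i=0} = m^2$ can, by writing $B = \ell_i L + m^2$ in suitable coordinates, be shown either to exhibit a rank-$1$ member of $\calP$ (case (ii)) or to force $B\in\ideal{\ell_i,m}$ (case (iii)).

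The main technical obstacle is the remaining case, in which every nonzero element of $\calP$ has rank at least $3$ and is therefore irreducible. Here I would show that the reduced variety $V(A,B)_{\mathrm{red}}$ imposes enough conditions on degree-$2$ polynomials that the only quadratic forms vanishing on it lie in $\spn{A,B}$, giving case (i). Concretely, I would invoke the Weierstrass/Kronecker canonical form for pencils of quadratic forms over $\C$: the absence of low-rank elements in $\calP$ restricts the possible elementary divisors, yields a simultaneous block-diagonalisation of $A$ and $B$, and ensures that $V(A,B)_{\mathrm{red}}$ is not contained in any proper linear subspace and imposes independent conditions on quadrics. The passage $C\in\sqrt{\ideal{A,B}} \Rightarrow C\in\spn{A,B}$ then follows by a direct dimension count on the space of quadrics through $V(A,B)_{\mathrm{red}}$. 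The heart of the proof is the recognition that the list (i)--(iii) precisely captures the obstructions to this picture: each low-rank element in $\calP$ is accounted for by case (ii) or case (iii), and everything else falls into case (i).
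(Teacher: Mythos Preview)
The paper does not prove this theorem; it is quoted as Theorem~1.10 of \cite{ShpilkaSG} and used as a black-box structural tool throughout. There is therefore no proof in the present paper to compare your proposal against. (The paper does sharpen case~\ref{case:2} in \autoref{cla:case3-strong} and \autoref{cor:case-3-strong}, but those results \emph{assume} \autoref{thm:structure} and refine its conclusion; they do not re-prove it.)

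As for your proposal on its own merits: the reduction via the Nullstellensatz and the case split on the minimum rank appearing in the pencil $\calP$ is a sound organizing principle, and your treatment of the cases where $\calP$ contains a rank-$1$ or rank-$2$ member is essentially correct. Two minor remarks. First, in the rank-$2$ case your ``typical sub-case'' actually goes through whenever $B|_{\ell_i=0}$ has rank $\geq 2$, not only when it is irreducible, since a squarefree quadratic generates a radical principal ideal and the degree-$2$ part of that ideal is still one-dimensional. Second, your ``exceptional sub-case'' $B|_{\ell_i=0}=m^2$ always lands in case~\ref{case:2} directly: writing $B=\ell_i L+m^2$ one has $B\in\ideal{\ell_i,m}$ and $A=\ell_1\ell_2\in\ideal{\ell_i}\subseteq\ideal{\ell_i,m}$ (with $m\notin\spn{\ell_i}$ by coprimality of $A,B$), so the detour through case~\ref{case:rk1} is unnecessary.

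The genuine gap is the last case, where every nonzero element of $\calP$ has rank $\geq 3$. You gesture at the Kronecker--Weierstrass canonical form for symmetric pencils and a dimension count on quadrics through $V(A,B)_{\mathrm{red}}$, but this is exactly the substantive part of the theorem and you have not carried it out. It is not automatic that ``no low-rank member in $\calP$'' forces the degree-$2$ part of $\sqrt{\ideal{A,B}}$ to equal $\spn{A,B}$: one must actually control the primary decomposition of $\ideal{A,B}$ and rule out components whose radical could contribute an extra quadric. Your outline is plausible, but until the canonical-form or dimension-count argument is actually executed it remains a sketch rather than a proof.
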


\sloppy
The high level idea in the proof of \autoref{thm:shp-sg} (which was generalized in \cite{Peleg-Shpilka-SG,Peleg-Shpilka-PIT}), includes two steps; The first step  constructs a linear space of linear forms $V$, and a subset $\cJ\subset Q$, both of constant dimension such that a vast majority of the polynomials in $\cQ$ are in $\spn{\cJ,\cQ\cap\ideal{V}}$.\footnote{ \cite{ShpilkaSG} had different notations, and $|\cJ| = 1$.} Implementing this idea requires a lot of  case analysis, according to \autoref{thm:structure}. In the second step the dimension of $\cQ\cap \ideal{V}$ is upper bounded.

The idea outlined above  heavily relies on the fact that when $\delta=1$, the set $\cQ\cap\ideal{V}$ is a PSG-configuration in itself. Indeed, let $Q_1, Q_2\in \cQ\cap\ideal{V}$. When $\delta =1$ it follows that there is $Q_3\in \cQ$ such that $Q_3\in \sqrt{\ideal{Q_1,Q_2}}\subseteq \ideal{V}$.
In order to bound  the dimension of $\cQ\cap \ideal{V}$, \cite{ShpilkaSG} "projected" $V$ to a one dimensional space $\spn{z}$ (where $z$ is a new variable). Every polynomial $Q_i\in \cQ\cap \ideal{V}$ is mapped to a polynomial of the form $z\cdot \ell_i$, for some linear form $\ell_i$. Then, it is proved that the $\ell_i$'s form an SG-condition.\footnote{The reader should take note that this is  a very high-level simplification of one part in the proof. For more details see the "easy-case" in \cite{Peleg-Shpilka-SG,Peleg-Shpilka-PIT}.}

This  technique fails when $\delta\in(0,1)$. First, we cannot expect to prove that $\cQ\cap \ideal{V}$ is a $\delta'$-PSG configuration by itself (even when we allow smaller, yet fixed, $\delta'\leq \delta$). For example, since $\delta<1$, it may be the case that (many polynomials) $Q\in \cQ\cap \ideal{V}$ have \emph{all} of their neighbors outside  $\cQ\cap \ideal{V}$. 
Furthermore, even if we knew that $\cQ\cap \ideal{V}$ is a $\delta'$-PSG configuration, then it is not clear that by  following the lines of \cite{ShpilkaSG} and mapping $\ideal{V}$ to $\spn{z}$,  the resulting $\ell_i$s, form a $\delta'$-PSG configurations. The reason for that is a bit subtle: note that it may be the case that many polynomials $Q\in \cQ\cap \ideal{V}$ were mapped to  $\spn{z^2}$. Thus, it may be the case that all the neighbors of some $z\cdot \ell$ are in $\spn{z^2}$, which gives us no information at all about $\ell$. In contrast, in \cite{ShpilkaSG}, since $\delta=1$, we could get information about $\ell$ by its interaction with polynomials not in $\spn{z^2}$.


In order to overcome these issues, we needed to develop new techniques, and improve the characterization given in  \autoref{thm:structure}\ref{case:2} (see \autoref{cor:case-3-strong}). Next, we present the outline of the proof in more details.

 We start with the same line of  constructing a linear space of linear forms $V$, and a subset $\cJ\subset Q$, both of dimension $O(\mathsf{poly}(\frac{1}{\delta}))$ such that $\cQ\subseteq \spn{\cJ, \ideal{V}}$. We partition $\cQ$ to four sets: $\LRV=\cQ\cap \C[V]_2$;   $\LRIdealV=(\cQ\cap \ideal{V}) \setminus  \LRV$; $\HRV=\cQ  \cap \spn{\cJ \cup C[V]_2}$; and the remaining set $\HRIdealV=\cQ \cap \spn{\cJ \cup\ideal{V}}\setminus \HRV$. We already know that $\dim(\LRV\cup\HRV)$ is small, so we only have to bound the dimension of $\LRIdealV\cup \HRIdealV$. 

Let us focus on $\LRIdealV$. We would like to prove that we can add a few linear functions to $V$ to get a subspace $U$ such that  $\LRIdealV\subset\C[U]_2$. Let $P\in\LRIdealV$.
First we  consider the case that many of $P$'s neighbors (i.e. those polynomials with which $P$ satisfies the PSG-condition) are in $\LRV\cup\LRIdealV$. To handle this case we   strengthen  \autoref{thm:structure}\ref{case:2} and  use it to show that if $Q\in \LRV$ is a neighbor of $P$ then the polynomial $Q'\in\sqrt{\ideal{P,Q}}$ is unique (see \autoref{cor:unique-T}). This means that by moving the linear functions on which $P$ depends to $U$, we move many polynomials from $\LRIdealV$ to $\C[V+U]_2$. 

Next we consider the case where $P$ has ``many'' neighbors in $\HRV\cup \HRIdealV$. To handle this case we first prove that $P$ can only satisfy \cref{thm:structure}\ref{case:span} with polynomials in $\HRV\cup \HRIdealV$. We prove that under this condition, there is a ``large'' subset of $\LRIdealV$ that is of constant dimension. Thus, by adding a few linear functions to $U$, we move many polynomials from $\LRIdealV$ to  $\C[V+U]_2$ (see \autoref{cla:decreaseCideal}). We can continue this process as long as $\LRIdealV$ is large enough, as the amount of polynomials that we move at any step depends on $|\LRIdealV|$. Therefore, when this process terminates we still have to deal with a set $\LRIdealV$ that is not large but not too small either (it is of size $\Omega(\delta m)$). Now, we turn our attention to ${\HRIdealV}$. Using similar arguments, and relying on the fact that $\card{\LRIdealV}$ is small, we prove that we can add a few linear functions to $U$ and make $\card{\cJ_{\ideal{V+U}}}$ small. Having achieved that, we prove that if both $\card{\cC_{\ideal{V+U}}}$ and $\card{\cJ_{\ideal{V+U}}}$ are small then they are in fact, empty (see \autoref{cla:C,J-large}).

\subsection{Discussion}\label{sec:discuss}

There are two distinct goals to the line of work \cite{ShpilkaSG,Peleg-Shpilka-SG,Peleg-Shpilka-PIT}, including this paper. The first is obtaining higher degree geometric extensions of the Sylvester-Gallai and Edelstein-Kelley theorems.

 From the complexity theoretic point of view, the goal is to eventually obtain PIT algorithms for   $\Sigma^{[k]}\Pi^{[d]}\Sigma\Pi^{[r]}$ circuits, for any $k,r=O(1)$. 
Currently we  have a polynomial time PIT algorithm only for the case  $k=3$ and $r=2$ \cite{Peleg-Shpilka-PIT}.
%
To understand such a difficult question one has to start somewhere, and  the case $k=3$ and $r=2$ was a natural starting point for the investigation (especially as no subexponential time PIT algorithm, even for $\Sigma^{[3]}\Pi^{[d]}\Sigma\Pi^{[2]}$ circuits, was known prior to  \cite{Peleg-Shpilka-PIT}). Since so little is known, we believe that a natural approach for advancing is to first extend the results of \cite{Peleg-Shpilka-PIT} to higher degrees (i.e. higher values of $r$), and then for a higher top fan-in (i.e. higher values of $k$). 
 Before we explain the difficulties in going to higher degrees we recall that \cite{Peleg-Shpilka-PIT} needed the following strengthening of  \autoref{thm:shp-ek} for their  PIT algorithm.

\begin{theorem}[Theorem 1.6 in \cite{Peleg-Shpilka-PIT}]\label{thm:PS-PIT}
	There exists a universal constant $\lambda$ such that the following holds. 
	Let $\cT_1,\cT_2, \cT_3\subset\C[x_1,\ldots,x_n]$ be finite sets of pairwise linearly independent homogeneous polynomials satisfying the following properties:
	\begin{itemize}
		\item Each $Q\in\cup_{j\in[3]}\cT_j$ is either irreducible quadratic  or a square of a linear function.
		\item Every two polynomials $Q_1$ and $Q_2$ from distinct sets satisfy that whenever they vanish then the product of all the polynomials in the third set  vanishes as well. 
	\end{itemize}
	Then,  $\dim(\spn{\cup_{j\in[3]}\cT_j})\leq \lambda$.
\end{theorem}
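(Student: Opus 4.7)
The plan is to adapt the proof of Theorem \ref{thm:shp-ek} to the weaker hypothesis that the \emph{product} of polynomials in the third set vanishes, rather than a single witness. The main reduction is geometric: for each pair $(Q_1, Q_2) \in \cT_i \times \cT_j$ with $i \neq j$, the hypothesis says that every irreducible component of the variety $V(Q_1) \cap V(Q_2)$ is contained in $V(R)$ for some $R \in \cT_k$, where $\{i,j,k\}=[3]$. Since $Q_1, Q_2$ are irreducible quadrics, $V(Q_1, Q_2)$ has at most four irreducible components (by B\'ezout), so pigeonhole produces, for each pair, a bounded set of \emph{witnesses} $R \in \cT_k$ that together cover the variety. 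Whenever a single witness $R$ covers all of $V(Q_1, Q_2)$, i.e.\ $R \in \sqrt{\ideal{Q_1, Q_2}}$, I may apply Theorem \ref{thm:structure} directly with $C = R$.

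Given this witness structure, I would follow the two-step strategy of \cite{ShpilkaSG}: first build a linear subspace $V \subset \C^n$ of linear forms and a set $\cJ \subset \cup_i \cT_i$ of quadrics, both of constant size, so that $\cup_i \cT_i \subseteq \spn{\cJ} + \C[V]_2 + \ideal{V}$; then bound the ``ideal part'' $\ideal{V}$ by projecting $V$ onto a single new variable $z$ and invoking Theorem \ref{thm:DSW} on the resulting linear forms. The construction of $(V, \cJ)$ is driven by the three cases of Theorem \ref{thm:structure}: a Case \ref{case:span} witness contributes to $\cJ$; a Case \ref{case:rk1} witness contributes to $V$ the linear form in the rank-one combination; and a Case \ref{case:2} witness contributes the two linear forms to $V$. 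If one can ensure that a constant fraction of pairs admit a \emph{single} witness covering all of $V(Q_1, Q_2)$, the argument of \cite{ShpilkaSG} should go through essentially verbatim, with $|\cT_i|$ playing the role of a density parameter.

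The main obstacle will be the ``multi-witness'' pairs, where $V(Q_1, Q_2)$ splits into several irreducible components that are covered by distinct polynomials in $\cT_k$, so that no single $R \in \cT_k$ lies in $\sqrt{\ideal{Q_1, Q_2}}$ and Theorem \ref{thm:structure} cannot be invoked. I would address this by showing that in this regime the codimension-two components of $V(Q_1, Q_2)$ must be cut out by linear forms appearing as factors of $Q_1$ or $Q_2$, so each witness $R$ necessarily shares a linear factor with one of $Q_1, Q_2$; aggregating this information across pairs lets me augment $V$ by a bounded number of additional linear forms and reduce to the single-witness case. The bookkeeping here is delicate because the witness is not unique and different components may be covered by polynomials of different algebraic type; I expect repeated use of Theorem \ref{thm:DSW} on carefully chosen projections, together with the case analysis of Theorem \ref{thm:structure}, to control the multi-witness pairs within constantly many iterative refinements of $(V,\cJ)$.
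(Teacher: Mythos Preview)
The theorem you are trying to prove is not proved in this paper at all. It is quoted in Section~\ref{sec:discuss} as ``Theorem~1.6 in \cite{Peleg-Shpilka-PIT}'', i.e.\ it is a result imported from a different paper and cited here only as background and motivation for the discussion of future directions. There is therefore no ``paper's own proof'' to compare your proposal against; the present paper proves \autoref{thm:psg-robust} (the robust Sylvester--Gallai theorem for quadratics), not \autoref{thm:PS-PIT}.

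As for the content of your sketch: the overall shape is plausible---the actual proof in \cite{Peleg-Shpilka-PIT} does proceed by extending the structure theorem (\autoref{thm:structure}) to the setting where a \emph{product} of quadratics lies in the radical, and then running a two-step argument in the style of \cite{ShpilkaSG}. However, your treatment of the ``multi-witness'' case is the heart of the matter and is where your sketch is thinnest. Your claim that in the multi-witness regime the codimension-two components ``must be cut out by linear forms appearing as factors of $Q_1$ or $Q_2$'' is not correct in general: $Q_1$ and $Q_2$ are irreducible quadratics (or squares of linear forms), so they need not have linear factors at all, and the components of $V(Q_1,Q_2)$ can be genuinely nonlinear. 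The actual argument requires a substantially more refined classification theorem (an extension of \autoref{thm:structure} to products), not a reduction to linear factors. So while the high-level strategy is in the right spirit, the key technical step you identify is handled by a different mechanism than the one you propose.
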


There are several difficult hurdles in going from $r=2$ to general $r$, or even to $r=3$, if we wish to continue working in the framework of \cite{ShpilkaSG,Peleg-Shpilka-SG,Peleg-Shpilka-PIT} (and this paper). The first is understanding what is the correct generalization of  \autoref{thm:structure} to higher degrees, as this theorem lies at the heart of all these papers. 
A second hurdle is obtaining a robust version of \autoref{thm:PS-PIT}. First for $r=2$ and then for higher degrees.  

For extending \autoref{thm:structure} to higher degrees it seems natural to find an extension to $r=3$. While it seems that such an approach could last forever and lead nowhere (as we will then have to prove a result for $r=4$ etc.), we believe that understanding the case $r=3$  can shed more light on the general case, as sometimes going from degree $2$ to $3$ is as difficult as the general case.

Once we prove such a structural theorem, we will need to extend \autoref{thm:PS-PIT} to higher values of $r$. An important tool in the proof of \autoref{thm:PS-PIT} was a robust version of the EK-theorem. 
\begin{definition}[$\delta$-EK configuration]\label{def:partial-EK}
		We say that the sets $\cT_1,\cT_2,\cT_3\subset \C^n$ form a $\delta$-EK configuration if for every $i \in [3]$ and $p\in \cT_i$ a
		$\delta$ fraction of the vectors $q$ in the union of the two other sets satisfy that  $p$ and $q$ span some vector in the third set (the one not containing $p$ and $q$). 
		We refer to a $1$-EK configuration as simply an EK-configuration.
	\end{definition}
\begin{theorem}[Theorem 3.9 of \cite{Peleg-Shpilka-PIT}]\label{thm:partial-EK-robust}
	Let $0<\delta \leq 1$ be any constant. Let $\cT_1,\cT_2,\cT_3\subset\C^n$ be disjoint finite sets that form a $\delta$-EK configuration.  Then, $\dim(\spn{\cup_i \cT_i}) = O(1/\delta^3)$.
\end{theorem}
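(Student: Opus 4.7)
The plan is to reduce the colorful $\delta$-EK condition to the non-colorful robust Sylvester--Gallai condition and then apply \autoref{thm:DSW}. Let $\cT = \cT_1 \cup \cT_2 \cup \cT_3$ and $m = |\cT|$. Observe that whenever $p \in \cT_i$ and $q \in \cT_j$ with $j \ne i$ form a good EK-pair, their common line contains a third point $r$ in the remaining set $\cT_k$, and in particular a third point of $\cT$; so every $p \in \cT_i$ has at least $\delta(m - |\cT_i|)$ SG-good neighbors inside $\cT$.

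First I would treat the \emph{balanced} regime, where $\max_i |\cT_i| \le (1 - c\delta)m$ for a suitable absolute constant $c$. Here every $p \in \cT$ has $\Omega(\delta^2 m)$ SG-good neighbors, so $\cT$ is an $\Omega(\delta^2)$-SG configuration and \autoref{thm:DSW} gives $\dim(\spn{\cT}) = O(1/\delta^2)$, well inside the target bound.

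The bulk of the work is the \emph{unbalanced} regime, where some set is very large, say $|\cT_1| > (1 - c\delta)m$, so that $|\cT_2 \cup \cT_3| < c\delta m$. The key structural observation is that if $p \in \cT_1$ has a good EK-neighbor $q \in \cT_2 \cup \cT_3$ with third collinear point $r$ in the remaining set, then $p \in \spn{q, r} \subseteq \spn{\cT_2 \cup \cT_3}$. Provided $|\cT_2 \cup \cT_3| \ge 1/\delta$ (otherwise $\dim \le 1/\delta$ trivially), the $\delta$-EK hypothesis forces each $p \in \cT_1$ to have at least one good neighbor, so $\cT_1 \subseteq \spn{\cT_2 \cup \cT_3}$ and hence $\dim(\spn{\cT}) = \dim(\spn{\cT_2 \cup \cT_3})$.

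It then remains to bound $\dim(\spn{\cT_2 \cup \cT_3})$. Each $p \in \cT_2$ has at least $\delta|\cT_1|$ good EK-neighbors in $\cT_1$ (since $|\cT_3| \ll \delta|\cT_1|$), each producing a third collinear point $r \in \cT_3$. Projecting $\C^n$ to the quotient $\C^n/\spn{p_0}$ for a carefully chosen $p_0 \in \cT_2$ collapses $\bar q$ and $\bar r$ for every such pair, yielding a colorful configuration on the images $\bar \cT_1, \bar \cT_2, \bar \cT_3$ with a robust matching between $\bar \cT_1$ and $\bar \cT_3$; recursively applying the balanced-vs-unbalanced split on this residual configuration produces the remaining factor $O(1/\delta)$, so that $\dim(\spn{\cT_2 \cup \cT_3}) = O(1/\delta^3)$. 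The hardest part is the bookkeeping in this projection step: one must recast the $\delta$-EK hypothesis on the image configuration, verify that enough good triples survive the projection so that the quantitative threshold of \autoref{thm:DSW} is preserved, and control how the $\delta$ losses compound across the at most two levels of unbalancedness.
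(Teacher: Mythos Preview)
This theorem is quoted from \cite{Peleg-Shpilka-PIT} and is not proved in the present paper; it appears only in the discussion section as background, so strictly there is no proof here to compare against. That said, your proposal has a genuine gap, and it is worth contrasting with how the cited paper actually argues.

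Your balanced case is correct: if no $\cT_i$ occupies more than a $(1-c\delta)$ fraction of $\cT$, every point has $\Omega(\delta^2 m)$ SG-good neighbors and \autoref{thm:DSW} finishes. The observation $\cT_1\subseteq\spn{\cT_2\cup\cT_3}$ in the unbalanced case is also correct. The gap is the last step. Projecting modulo a single $p_0\in\cT_2$ reduces the target dimension by exactly one, so a recursion of bounded depth (``at most two levels'') cannot by itself yield an $O(1/\delta^3)$ bound; you would need the \emph{image} configuration to already satisfy some strong hypothesis (say, be $\Omega(\delta)$-SG or $\Omega(\delta)$-EK), and you give no mechanism for that. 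The ``robust matching between $\bar\cT_1$ and $\bar\cT_3$'' only says that $\Theta(\delta m_1)$ images $\bar q$ land among the at most $m_3$ points of $\bar\cT_3$ --- a pigeonhole fact about $\cT_1$, not a dimension bound on $\cT_2\cup\cT_3$. After the projection the sets may collapse arbitrarily and the $\delta$-EK condition does not descend in any controlled way, so ``recasting the hypothesis on the image'' is not bookkeeping but the entire content of the argument, and you have not supplied it.

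The argument in \cite{Peleg-Shpilka-PIT} proceeds quite differently in the unbalanced regime. With $m_1\ge m_2\ge m_3$, if $m_3>m_1^{1/3}$ one randomly subsamples $\cT_1$ down to size $\sim m_2$; Chernoff preserves an $\Omega(\delta)$ fraction of good neighbors for every $p\in\cT_2\cup\cT_3$, and the subsampled union is then an $\Omega(\delta)$-SG configuration, so \autoref{thm:DSW} gives $O(1/\delta)$. If $m_3\le m_1^{1/3}$, then for each $p\in\cT_2$ the $\Omega(\delta m_1)$ good $\cT_1$-neighbors lie in the at most $m_3$ two-dimensional spaces $\spn{p,q}$ with $q\in\cT_3$; a disjointness/counting argument shows at most $O(1/\delta^2)$ of these spaces are ``fat'' (meet $\cT_1$ in $\ge(\delta^2/400)m_1$ points), and an iterative selection of $p$'s --- each covering $\Omega(\delta m_1)$ fresh points of $\cT_1$ --- terminates in $O(1/\delta)$ rounds, producing a spanning set of size $O(1/\delta^3)$ for $\cT_2$ and then for all of $\cT$. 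This fat-line accounting is what stands in for your projection step, and I do not see a way to shortcut it along the lines you sketch.
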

Thus, a natural continuation would be to prove a robust version of \autoref{thm:partial-EK-robust} for quadratic polynomials (i.e. a robust version of \autoref{thm:shp-ek}) and then to extend it to a robust version of \autoref{thm:PS-PIT} and to higher degrees.
While in this paper we only prove a robust version of \autoref{thm:shp-sg},  we believe that with some more technical work this can be extended to a robust version of \autoref{thm:shp-ek} as well. Hence, the next immediate challenge would be to obtain a robust version of \autoref{thm:PS-PIT} (or even of the main result of \cite{Peleg-Shpilka-SG}).
If we obtain such an extension and, in addition extend \autoref{thm:structure} to higher values of $r$, then we expect that a PIT algorithm for the case $k=3$ and $r=3$ would follow. 
More importantly, we believe that this will let us gain important understanding on how to generalize the results for arbitrary values of $r$.

%
%
%
%

\ifEK

\section{Preliminaries}\label{sec:prelim}

In this section we explain our notation and present some basic algebraic preliminaries.

\subsection{Notation and basic facts}\label{sec:notation}

Greek letters $\alpha, \beta,\ldots$ denote scalars from $\C$.  Non-capitalized letters $a,b,c,\ldots$  denote linear forms and $x,y,z$ denote variables (which are also linear forms). Bold faced letters denote vectors, e.g. $\vx=(x_1,\ldots,x_n)$ denotes a vector of variables, $\vaa=(\alpha_1,\ldots,\alpha_n)$ is a vector of scalars, and $\vec{0} = (0,\ldots,0)$ is the zero vector. We sometimes do not use a boldface notation for a point in a vector space if we do not use its structure as a vector. Capital letters such as $A,P,Q$  denote quadratic polynomials whereas $V,U,W$ denote linear spaces, and $I,J$ denote ideals. Calligraphic letters $\cal I,J,F,Q,T$  denote sets. For a positive integer $n$ we denote $[n]=\{1,2,\ldots,n\}$. 


We denote with $\CRing{x}{n}$ the ring of $n$-variate polynomials over $\C$. We shall denote with $\CRing{x}{n}_d$ the linear space of homogeneous polynomials of degree $d$. In particular, $\CRing{x}{n}_1$ is the linear space of all linear forms.
An \emph{Ideal} $I\subseteq \CRing{x}{n}$ is an abelian subgroup that is closed under multiplication by ring elements. For $\calS \subset \CRing{x}{n}$, we  denote with $\ideal{\calS}$, the ideal generated by $\calS$, that is, the smallest ideal that contains $\calS$. For example, for two polynomials $Q_1$ and $Q_2$, the ideal $\ideal{Q_1,Q_2}$ is the set $\CRing{x}{n}Q_1 + \CRing{x}{n}Q_2$. For a linear subspace $V\subseteq \CRing{x}{n}$, we have that $\ideal{V}$ is the ideal generated by any basis of $V$. The \emph{radical} of an ideal $I$, denoted $\sqrt{I}$, is the set of all ring elements, $r$, satisfying that for some natural number $m$ (that may depend on $r$), $r^m \in I$. Hilbert's Nullstellensatz  implies that, in $\C[x_1,\ldots,x_n]$, if a polynomial $Q$ vanishes whenever $Q_1$ and $Q_2$ vanish, then $Q\in \sqrt{\ideal{Q_1,Q_2}}$ (see e.g. \cite{CLO}). We shall often use the notation $Q\in \sqrt{\ideal{Q_1,Q_2}}$ to denote this vanishing condition. For an ideal $I\subseteq \CRing{x}{n}$ we denote by $\CRing{x}{n} /I$ the \emph{quotient ring}, that is, the ring whose elements are the cosets of $I$ in $\CRing{x}{n}$ with the proper multiplication and addition operations. For an ideal $I\subseteq \CRing{x}{n}$ we denote the set of all common zeros of elements of $I$ by $\cZ(I)$.
An ideal $I$ is called \emph{prime} if for every $f$ and $g$ such that $fg\in I$ it holds that either $f\in I$ or $g\in I$.  We next present basic facts about prime ideals that are used throughout the proof.
\begin{fact}
	\begin{enumerate}{}
		\item If $F$ is an irreducible polynomial then $\ideal{F}$ is a prime ideal.
		\item For linear forms $\MVar{a}{k}$ the ideal $\ideal{\MVar{a}{k}} = \ideal{\spn{\MVar{a}{k}}}$ is prime.
		\item If $I$ is a prime ideal then $\sqrt{I}=I$.
	\end{enumerate}
\end{fact}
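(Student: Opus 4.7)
The plan is to handle the three items separately, each reducing to a standard fact about the polynomial ring $\C[\vx]$ that is recalled here for convenience rather than proved from scratch.

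For item (1), I would invoke the classical result that $\C[x_1,\ldots,x_n]$ is a unique factorization domain; in a UFD the notions of \emph{irreducible} and \emph{prime} coincide. Concretely, if $F$ is irreducible and $fg\in\ideal{F}$, then $F\mid fg$, and by primality of the irreducible element $F$, either $F\mid f$ or $F\mid g$, giving $f\in\ideal{F}$ or $g\in\ideal{F}$.

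For item (2), the natural approach is a linear change of coordinates. After discarding linearly dependent generators we may assume $a_1,\ldots,a_{k'}$ are linearly independent and generate the same ideal. Extending them to a basis of $\C[\vx]_1$ produces a linear automorphism of $\C[\vx]$ sending $\ideal{a_1,\ldots,a_{k'}}$ to $\ideal{x_1,\ldots,x_{k'}}$, and this automorphism preserves primeness of ideals. Now the quotient
\[
\C[x_1,\ldots,x_n]\,/\,\ideal{x_1,\ldots,x_{k'}}\;\cong\;\C[x_{k'+1},\ldots,x_n]
\]
is an integral domain (again a polynomial ring over a field), and an ideal is prime iff its quotient is a domain, so $\ideal{x_1,\ldots,x_{k'}}$ is prime and hence so is $\ideal{a_1,\ldots,a_k}$.

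For item (3), the containment $I\subseteq\sqrt{I}$ is immediate from the definition. For the reverse direction, take $r\in\sqrt I$ and induct on the smallest $m\ge 1$ such that $r^m\in I$. The base case $m=1$ is trivial. For $m\ge 2$, write $r^m=r\cdot r^{m-1}\in I$; primality of $I$ forces $r\in I$ or $r^{m-1}\in I$, and the inductive hypothesis closes the second branch. There is no real obstacle in any of the three items; the only mild subtlety is in (2), where one must explicitly note that a linear change of variables is a ring automorphism of $\C[\vx]$ that carries ideals to ideals and preserves primeness.
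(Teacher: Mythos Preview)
Your proof is correct. The paper does not actually prove this statement at all: it is labeled a \texttt{fact} in the preliminaries section and is simply stated as a collection of standard properties of prime ideals, to be used later in the paper. Your arguments (UFD for item (1), linear change of coordinates reducing to $\ideal{x_1,\ldots,x_{k'}}$ for item (2), and induction on $m$ for item (3)) are the canonical justifications one would give if asked to supply details, so there is nothing to compare.
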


For $V_1,\ldots,V_k$ linear spaces, we use $\sum_{i=1}^k V_i$ to denote the linear space $V_1 + \ldots + V_k$. For two nonzero polynomials $A$ and $B$ we denote $A\sim B$ if there is a nonzero $\alpha\in\C$ such that $B=\alpha\cdot A$. For a space of linear forms $V = \spn{\MVar{v}{\Delta}}\subseteq \CRing{x}{n}$, we say that a polynomial $P \in \CRing{x}{n}$ depends only on $V$  if the value of $P$ is determined by the values of the linear forms $v_1,\ldots,v_\Delta$. More formally, we say that $P$ depends only on $V$ if there is a $\Delta$-variate polynomial $\tilde{P}$ such that $P \equiv \tilde{P}(v_1,\ldots,v_\Delta)$. We denote by  $\C[V]\subseteq \CRing{x}{n}$ the subring of polynomials that depend only on $V$. Similarly we denote by $\C[V]_2\subseteq \CRing{x}{n}$, the linear subspace of all \textbf{homogeneous} quadratic polynomials that depend only on $V$. 

Another notation that we will use throughout the proof is congruence modulo linear forms.
\begin{definition}\label{def:mod-form}
	Let $V\subset \CRing{x}{n}_1$ be a linear space\footnote{I.e., $V$ is a linear space of linear forms.}  and $P,Q\in \CRing{x}{n}$. We say that $P\equiv_V Q$ if $P-Q \in \ideal{V}$. 
\end{definition}

We end with a simple observation that follows immediately from the fact that the quotient ring $\CRing{x}{n}/{\langle V\rangle}$is a unique factorization domain.
\begin{observation}\label{fact:ufd}
	Let $V\subset \CRing{x}{n}_1$ be a linear space and $P = \prod_{k=1}^t P_k$, $Q = \prod_{k=1}^t Q_k \in \CRing{x}{n}$. If for every $k$, $P_k$ and $Q_k$ are irreducible in $\CRing{x}{n}/{\langle V\rangle}$, and $P \equiv_V Q \not\equiv_V 0$ then, up to a permutation of the indices, $P_k\equiv_V Q_k$ for all $k\in [t]$.
\end{observation}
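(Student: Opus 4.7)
The plan is to work in the quotient ring $R := \CRing{x}{n}/\langle V\rangle$ and reduce the statement to a direct invocation of uniqueness of factorization. First, after a linear change of coordinates we may assume that $V = \spn{x_1,\ldots,x_d}$ where $d = \dim V$. Under this identification $R \cong \C[x_{d+1},\ldots,x_n]$, which is a polynomial ring in $n-d$ variables; in particular it is a UFD whose group of units is $\C^\times$.

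Next, reduce both factorizations modulo $\langle V\rangle$. Writing $\bar{f}$ for the image of $f\in\CRing{x}{n}$ in $R$, the hypothesis $P\equiv_V Q$ becomes $\prod_{k=1}^{t}\bar{P}_k = \prod_{k=1}^{t}\bar{Q}_k$ in $R$. Since $P\not\equiv_V 0$, this common value is a nonzero element of the integral domain $R$, which forces every $\bar{P}_k$ and every $\bar{Q}_k$ to be nonzero (a single vanishing factor would kill the whole product). By hypothesis each of these images is irreducible in $R$, so we have two length-$t$ factorizations of a common nonzero element of $R$ into irreducibles.

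Applying unique factorization in $R$ yields a permutation $\sigma\in S_t$ and units $\alpha_k\in\C^\times$ with $\bar{P}_k = \alpha_k\,\bar{Q}_{\sigma(k)}$ in $R$ for every $k\in[t]$; equivalently, $P_k\equiv_V \alpha_k\,Q_{\sigma(k)}$ in $\CRing{x}{n}$. Since each $\alpha_k$ is a nonzero scalar, rescaling $Q_{\sigma(k)}$ by $\alpha_k$ (which affects neither its irreducibility in $R$ nor the overall equality $P\equiv_V Q$, as $\prod_k\alpha_k$ must equal $1$) delivers the claimed matching $P_k\equiv_V Q_{\sigma(k)}$.

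There is essentially no obstacle here: the entire content of the observation is the passage from $\CRing{x}{n}$ to the UFD $R$, after which the conclusion is the textbook uniqueness-of-factorization statement. The only detail worth explicitly recording is that a linear change of variables identifies $R$ with an honest polynomial ring, so that one may quote unique factorization without further argument.
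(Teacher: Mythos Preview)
Your argument is correct and is exactly the approach the paper has in mind: the paper simply states that the observation ``follows immediately from the fact that the quotient ring $\CRing{x}{n}/\ideal{V}$ is a unique factorization domain,'' and you have just unpacked that sentence by identifying the quotient with $\C[x_{d+1},\ldots,x_n]$ and invoking uniqueness of factorization there. The only cosmetic point is that unique factorization gives $P_k\equiv_V \alpha_k Q_{\sigma(k)}$ for units $\alpha_k\in\C^\times$; the statement (and its uses later in the paper) tacitly absorbs these scalars, so your remark about rescaling is the right way to read it.
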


We shall use this observation implicitly when factoring polynomials modulo  linear spaces of linear forms.

\subsection{Rank of quadratic polynomials}\label{sec:rank}

We next give some facts regarding quadratic polynomials. Many of these facts already appeared in  \cite{Peleg-Shpilka-SG}.

\begin{definition}\label{def:rank-s}
	For a homogeneous quadratic polynomial  $Q$ we denote with $\rank_s(Q)$\footnote{In some recent works  this was defined as $\text{algebraic-rank}(Q)$ or $\text{tensor-rank}(Q)$, but as these notions have different meanings we decided to continue with the notation of \cite{Peleg-Shpilka-SG}.} the minimal $r$ such that there are $2r$  linear forms $\{a_k\}_{k=1}^{2r}$ satisfying $Q=\sum_{k=1}^r a_{2k}\cdot a_{2k-1}$. We call such representation a \emph{minimal representation} of $Q$.
\end{definition}

This is a slightly different notion than the usual one for the rank of a quadratic form,\footnote{The usual definition says that $\rank(Q)$ is the minimal $t$ such that there are $t$ linear forms $\{a_k\}_{k=1}^{t}$, satisfying $Q=\sum_{k=1}^t a_k^2$.}
but it is more suitable for our needs. We note that a quadratic $Q$ is irreducible if and only if $\rank_s(Q)>1$. The next claim shows that a minimal representation is unique in the sense that the
space spanned by the linear forms in it is unique.

\begin{claim}[Claim 2.13 in \cite{Peleg-Shpilka-SG}]\label{cla:irr-quad-span}
	Let $Q$ be a homogeneous quadratic polynomial, and let $Q=\sum_{i=1}^{r}a_{2i-1}\cdot a_{2i}$ and $Q = \sum_{i=1}^{r}b_{2i-1}\cdot b_{2i}$ be two different minimal representations of $Q$. Then, $\spn{\MVar{a}{2r}} =\spn{\MVar{b}{2r}}$.
\end{claim}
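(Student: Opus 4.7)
The plan is to identify both spans with a single canonical subspace depending only on $Q$ itself; the equality of the two spans then follows by symmetry. Concretely, define $W(Q) \eqdef \spn{\partial_{x_1} Q, \ldots, \partial_{x_n} Q}$, the linear space of linear forms spanned by the first-order partial derivatives of $Q$. This is plainly independent of any representation, and the goal will be to show that $\spn{a_1,\ldots,a_{2r}} = W(Q)$ for every minimal representation.

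I would start with a general inclusion valid for \emph{any} representation: if $Q = \sum_{i=1}^s a_{2i-1} a_{2i}$, then by the product rule each partial derivative $\partial_{x_j} Q = \sum_i \bigl((\partial_{x_j} a_{2i-1})\, a_{2i} + a_{2i-1}\, (\partial_{x_j} a_{2i})\bigr)$ is a $\C$-linear combination of $a_1,\ldots,a_{2s}$, so $W(Q) \subseteq \spn{a_1,\ldots,a_{2s}}$. Setting $d \eqdef \dim W(Q)$, this gives $d \le 2r$ for any minimal representation. Next I would upper-bound $\rank_s(Q)$ by diagonalizing $Q$ over $\C$: there is a basis $\ell_1,\ldots,\ell_d$ of $W(Q)$ with $Q = \sum_{j=1}^d \ell_j^2$, and the identity $\alpha^2 + \beta^2 = (\alpha + i\beta)(\alpha - i\beta)$ lets one pair up these squares into $\lceil d/2 \rceil$ products. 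Hence $r \le \lceil d/2 \rceil$, and together with $d \le 2r$ this forces $d \in \{2r-1, 2r\}$.

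The heart of the argument is now to show that in any minimal representation $\spn{a_1,\ldots,a_{2r}}$ has dimension exactly $d$ (and hence, by the inclusion above, equals $W(Q)$). When $d = 2r$ this is immediate. When $d = 2r-1$, suppose for contradiction that $\dim \spn{a_1,\ldots,a_{2r}} = 2r$, and pick a linear form $x \in \spn{a_1,\ldots,a_{2r}} \setminus W(Q)$. Writing $a_i = b_i + c_i x$ with $b_i \in W(Q)$ and $c_i \in \C$, and expanding $Q$, the fact that $Q$ is free of $x$ forces the coefficients of $x^1$ and $x^2$ to vanish, yielding
\[
Q \,=\, \sum_{i=1}^r b_{2i-1} b_{2i}, \qquad \sum_{i=1}^r \bigl(c_{2i-1} b_{2i} + c_{2i} b_{2i-1}\bigr) = 0, \qquad \sum_{i=1}^r c_{2i-1} c_{2i} = 0.
\]
Since some $c_i$ must be nonzero, assume WLOG $c_1 \ne 0$. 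I would then solve the middle relation for $b_2$, substitute into the expression for $Q$, and use the third relation to simplify the resulting $b_1^2$ term; a short computation rewrites $Q = \sum_{i \ge 2} \bigl(b_{2i-1} - c_1^{-1} c_{2i-1} b_1\bigr)\bigl(b_{2i} - c_1^{-1} c_{2i} b_1\bigr)$, a representation of $Q$ with only $r-1$ products --- contradicting minimality.

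The main obstacle is precisely this algebraic manipulation in the $d = 2r-1$ case: one has to use both vanishing conditions simultaneously to absorb the extra direction $x$, and the bookkeeping of the coefficients $c_i$ is delicate. Everything else --- the derivative inclusion, the diagonalization over $\C$, and the dimension counting --- is formal.
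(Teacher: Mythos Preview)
The paper does not actually prove this claim; it is quoted verbatim from \cite{Peleg-Shpilka-SG} as a preliminary fact, so there is no in-paper proof to compare against. Your argument, however, is correct and complete.

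Your identification of the canonical subspace with the span $W(Q)$ of first-order partials is exactly the right invariant: it equals the column space of the symmetric matrix associated to $Q$, hence has dimension equal to the ordinary rank $d$ of the form, and the relation $r=\lceil d/2\rceil$ follows from diagonalization over $\C$ as you say. The delicate case is $d=2r-1$, and your reduction is clean: the two vanishing conditions coming from the $x^{1}$- and $x^{2}$-coefficients are precisely what is needed to collapse the first product into the remaining $r-1$. The computation checks out: with $\lambda=c_{1}^{-1}$,
\[
\sum_{i\ge 2}(b_{2i-1}-\lambda c_{2i-1}b_{1})(b_{2i}-\lambda c_{2i}b_{1})
=\sum_{i\ge 2}b_{2i-1}b_{2i}+\lambda b_{1}(c_{1}b_{2}+c_{2}b_{1})-\lambda^{2}c_{1}c_{2}b_{1}^{2}
=\sum_{i\ge 1}b_{2i-1}b_{2i}=Q,
\]
using the two relations to replace the partial sums over $i\ge 2$. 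Two small points worth making explicit in a write-up: first, the decomposition $a_{i}=b_{i}+c_{i}x$ is valid because under the assumption $\dim\spn{a_{1},\ldots,a_{2r}}=2r$ and $\dim W(Q)=2r-1$ one has $\spn{a_{i}}=W(Q)\oplus\spn{x}$; second, ``$Q$ is free of $x$'' is justified because $Q\in\C[W(Q)]_{2}$ via the diagonal form. With those remarks, the argument is airtight.
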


\autoref{cla:irr-quad-span} shows that the minimal space is well defined, which allows us to define the notion of a \textit{minimal space} of a quadratic polynomial $Q$.
\begin{definition}\label{def:MS}
	Let Q be a quadratic polynomial. Assume that $\rank_s(Q) = r$, and let $Q = \sum \limits_{i=1}^r a_{2i-1}a_{2i}$ be some minimal representation of $Q$.
	We denote $\MS(Q)\eqdef \spn{\MVar{a}{2r}}$. 
	For a set $\{Q_i\}_{i=1}^{k}$ of quadratic polynomials we denote $\MS(\MVar{Q}{k}) = \sum \limits_{i=1}^k \MS(Q_i)$.
\end{definition}

The following fact is easy to verify.

\begin{fact}\label{cor:containMS}
	Let $Q=\sum_{i=1}^{m}a_{2i-1}\cdot a_{2i}$ be a homogeneous quadratic polynomial, then $\MS(Q)\subseteq \spn{\MVar{a}{2m}}$.
\end{fact}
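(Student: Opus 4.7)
The plan is to establish $\MS(Q)\subseteq V\eqdef\spn{\MVar{a}{2m}}$ by pushing an arbitrary minimal representation of $Q$ into $V$ via a projection and then invoking the uniqueness statement of \autoref{cla:irr-quad-span}. Concretely, I would fix a linear retraction $\pi\colon \CRing{x}{n}_1\to V$, built by extending any basis of $V$ to a basis of $\CRing{x}{n}_1$ and sending the complementary basis vectors to $0$; equivalently, after a linear change of coordinates one may take $V=\spn{\MVar{x}{d}}$ with $d=\dim V$ and let $\pi$ be the substitution $x_{d+1},\ldots,x_n\mapsto 0$. This linear map on linear forms extends uniquely to a $\C$-algebra homomorphism $\pi\colon\CRing{x}{n}\to\C[V]$, and since each $a_j$ already lies in $V$ the given representation of $Q$ forces $\pi(Q)=Q$.

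Next, I would pick any minimal representation $Q=\sum_{i=1}^{r}b_{2i-1}b_{2i}$ with $r=\rank_s(Q)$, so that by definition $\MS(Q)=\spn{\MVar{b}{2r}}$. Applying the ring homomorphism $\pi$ to both sides yields
\[
Q \;=\; \pi(Q) \;=\; \sum_{i=1}^{r}\pi(b_{2i-1})\,\pi(b_{2i}),
\]
a representation of $Q$ using the $2r$ linear forms $\pi(b_1),\ldots,\pi(b_{2r})$, all of which lie in $V$.

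The crucial observation is that none of the $\pi(b_j)$ can vanish: if some $\pi(b_{2i-1})$ or $\pi(b_{2i})$ were $0$, the $i$-th summand would drop out, yielding a representation of $Q$ as a sum of fewer than $r$ products of linear forms, contradicting the minimality $r=\rank_s(Q)$. Consequently the displayed expression is itself a minimal representation of $Q$, and \autoref{cla:irr-quad-span} gives $\MS(Q)=\spn{\MVar{b}{2r}}=\spn{\pi(b_1),\ldots,\pi(b_{2r})}\subseteq V$, as desired. I do not anticipate any serious obstacle here; the only point requiring a touch of care is that the rank after projection must still equal $r$, which is precisely what the minimality argument enforces, and that $\pi$ preserves products, which is automatic because it is a $\C$-algebra homomorphism.
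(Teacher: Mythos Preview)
Your argument is correct. The paper does not actually prove this statement --- it is recorded as a Fact with the remark ``easy to verify'' --- so there is no proof to compare against; your projection-plus-\autoref{cla:irr-quad-span} argument is a clean way to make that verification explicit.
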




\begin{claim}[Claim 2.16 in \cite{Peleg-Shpilka-SG}]\label{cla:rank-mod-space}
	Let $Q$ be a homogeneous quadratic polynomial with $\rank_s(Q)=r$, and let $V \subset \CRing{x}{n}$ be a linear space of linear forms such that $\dim(V)=\Delta$. Then, $\rank_s(Q|_{V=0})\geq r-\Delta$.
\end{claim}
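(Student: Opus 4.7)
The plan is to argue by lifting a minimal representation across the quotient by $\ideal{V}$ and then accounting for the extra terms needed to recover $Q$ exactly.

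First I would set $s \eqdef \rank_s(Q|_{V=0})$ and fix a minimal representation $Q|_{V=0} = \sum_{i=1}^{s} \tilde{b}_{2i-1}\tilde{b}_{2i}$ in the quotient ring $\CRing{x}{n}/\ideal{V}$, where each $\tilde b_j$ is a linear form in that quotient. After a change of coordinates $V$ is spanned by a subset of the variables, so the quotient is itself a polynomial ring in $n-\Delta$ variables and the notion of a minimal representation there makes sense in exactly the same way it does in $\CRing{x}{n}$.

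Next I would lift each $\tilde b_j$ to an arbitrary linear form $b_j \in \CRing{x}{n}$ representing the same class. Then the quadratic
\[
R \eqdef Q - \sum_{i=1}^{s} b_{2i-1}b_{2i}
\]
is homogeneous of degree $2$ and lies in $\ideal{V}$. Since $V = \spn{v_1,\ldots,v_\Delta}$, any homogeneous quadratic in $\ideal{V}$ can be written as $\sum_{j=1}^{\Delta} v_j \cdot c_j$ with linear forms $c_j \in \CRing{x}{n}$ (expand $R$ in the monomial basis and group every monomial containing some $v_j$ with that $v_j$).

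Combining the two expressions gives
\[
Q = \sum_{i=1}^{s} b_{2i-1}b_{2i} + \sum_{j=1}^{\Delta} v_j c_j,
\]
a representation of $Q$ as a sum of $s+\Delta$ products of pairs of linear forms. Hence $\rank_s(Q) \leq s+\Delta$, i.e.\ $s \geq r-\Delta$, which is the desired bound. The only delicate point, and the step I would take the most care over, is the lifting: one must be sure that the linear forms $\tilde b_j$ in the quotient genuinely lift to \emph{linear} forms in $\CRing{x}{n}$ (so the quadratic nature of the representation is preserved) and that the residue $R$ really admits a presentation with a $v_j$ on one side of each product. Both are straightforward once the quotient ring is identified with a polynomial ring on a complementary set of linear coordinates, and no appeal to the minimality of the representation of $Q|_{V=0}$ is actually needed for the inequality.
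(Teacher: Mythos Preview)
Your argument is correct. The paper does not actually prove this claim here; it is imported verbatim as Claim~2.16 from \cite{Peleg-Shpilka-SG} without a proof in the present text, and your lift-then-complete argument is precisely the standard proof one would expect to find there.
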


\begin{claim}[Claim 2.17 in \cite{Peleg-Shpilka-SG}]\label{cla:ind-rank}
	Let $P_1\in \CRing{x}{k}_2$, and $P_2 = y_1y_2\in \CRing{y}{2}$. Then $ \rank_s (P_1+P_2) = \rank_s(P_1) + 1$. Moreover, $y_1,y_2 \in \MS(P_1+P_2).$
\end{claim}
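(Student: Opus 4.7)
The plan is to establish the rank equality as two inequalities, and then deduce the statement about the minimal space as an easy corollary. The upper bound $\rank_s(P_1 + y_1 y_2) \leq \rank_s(P_1) + 1$ is immediate: one takes any minimal representation $P_1 = \sum_{i=1}^r c_{2i-1} c_{2i}$ with $r = \rank_s(P_1)$ and appends the product $y_1 \cdot y_2$, producing a representation of $P_1 + y_1 y_2$ with $r+1$ summands. Alternatively, it follows from \autoref{cla:rank-mod-space} applied with $V = \spn{y_1}$.

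For the lower bound, the plan is to argue by contradiction: suppose $\rank_s(P_1 + y_1 y_2) = r' \leq r$ and fix a minimal representation $P_1 + y_1 y_2 = \sum_{i=1}^{r'} b_{2i-1} b_{2i}$. Since the $x$- and $y$-variables are disjoint, each linear form decomposes as $b_i = c_i + \alpha_i y_1 + \beta_i y_2$ with $c_i \in \CRing{x}{k}_1$ and $\alpha_i,\beta_i \in \C$. Setting $y_1 = y_2 = 0$ recovers $P_1 = \sum_{i=1}^{r'} c_{2i-1} c_{2i}$; since $\rank_s(P_1) = r$, this forces $r' = r$ and makes this a minimal representation of $P_1$. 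Setting instead only $y_2 = 0$ yields a second representation $P_1 = \sum_{i=1}^{r} (c_{2i-1} + \alpha_{2i-1} y_1)(c_{2i} + \alpha_{2i} y_1)$, again with $r$ summands and therefore also minimal, now viewed in the ambient ring $\C[x_1, \ldots, x_k, y_1]$.

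The main obstacle is to use these two minimal representations to force the $y$-coefficients to vanish. Here \autoref{cla:irr-quad-span} is the key tool: any two minimal representations of $P_1$ span the same linear space of linear forms, so $\spn{c_1, \ldots, c_{2r}} = \spn{c_i + \alpha_i y_1 : i \in [2r]}$. The left-hand side is a subspace of $\CRing{x}{k}_1$, which contains no $y_1$ component; this forces each generator $c_i + \alpha_i y_1$ of the right-hand side to lie in $\spn{c_1, \ldots, c_{2r}}$, and consequently $\alpha_i = 0$ for every $i$. A symmetric argument (setting $y_1 = 0$) gives $\beta_i = 0$. Thus $b_i = c_i$ for all $i$, so $P_1 + y_1 y_2 = \sum c_{2i-1} c_{2i} = P_1$, giving $y_1 y_2 = 0$---a contradiction.

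Finally, having established $\rank_s(P_1 + y_1 y_2) = r + 1$, the explicit $(r+1)$-summand representation from the upper bound is itself a minimal representation, so by \autoref{cla:irr-quad-span} we have $\MS(P_1 + y_1 y_2) = \spn{c_1, \ldots, c_{2r}, y_1, y_2}$, which contains both $y_1$ and $y_2$ as required.
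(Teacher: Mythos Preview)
Your proof is correct. The paper does not supply its own proof of this claim---it is quoted verbatim as Claim~2.17 from \cite{Peleg-Shpilka-SG}---so there is no in-paper argument to compare against. Your approach (substitute $y_2=0$ and $y_1=0$ separately to obtain two minimal representations of $P_1$ in the extended ring, then invoke \autoref{cla:irr-quad-span} to force the $y$-coefficients to vanish) is clean and uses only tools already available at this point in the text. The one small step worth making explicit is that $\rank_s(P_1)$ is unchanged when $P_1$ is viewed in the larger ring $\C[x_1,\ldots,x_k,y_1]$; you implicitly rely on this when declaring the second representation minimal, and it follows immediately by restricting any putative shorter representation back to $y_1=0$.
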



\begin{corollary}[Corollary 2.18 in \cite{Peleg-Shpilka-SG}]\label{cla:intersection}
	Let $a$ and $b$ be linearly independent linear forms. If $c,d,e$ and $f$ are linear forms such that $ab+cd=ef$ then it must hold that $\dim(\spn{a,b}\cap \spn{c,d})\geq 1$.
\end{corollary}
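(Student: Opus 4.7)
The plan is to argue by contradiction using the symmetric rank $\rank_s$. Assume toward contradiction that $\spn{a,b}\cap\spn{c,d}=\{0\}$ (and, as is standard when speaking of ``linear forms'' $c,d$, that both are nonzero so that $cd\neq 0$). I will show $\rank_s(ab+cd)\geq 2$, which contradicts the equality $ab+cd=ef$, since any product of two linear forms has $\rank_s\leq 1$. The proof naturally splits according to whether $c$ and $d$ are linearly independent.

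If $c$ and $d$ are linearly independent, then the disjointness assumption combined with the linear independence of $a,b$ forces $a,b,c,d$ to be jointly linearly independent, which is precisely the setting of \autoref{cla:ind-rank}: viewing $P_1=ab$ as a quadratic in the two variables $a,b$ and letting $cd$ play the role of the ``fresh'' product $y_1y_2$ in two additional independent variables, that claim yields $\rank_s(ab+cd)=\rank_s(ab)+1=2$, contradicting $\rank_s(ef)\leq 1$. If instead $c$ and $d$ are linearly dependent, I may write $d=\lambda c$ with $\lambda\neq 0$, so that $ab+cd=ab+\lambda c^2$, and the disjointness assumption forces $c\notin\spn{a,b}$; hence $a,b,c$ are linearly independent. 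Here \autoref{cla:ind-rank} cannot be invoked directly because the two factors $c$ and $\lambda c$ of the added term are proportional, so instead I would verify that $ab+\lambda c^2$ is irreducible as a polynomial in the three independent variables $a,b,c$ by ruling out any prospective factorization $(pa+qb+rc)(p'a+q'b+r'c)$: matching coefficients gives $pp'=qq'=0$, $pq'+p'q=1$, and $pr'+p'r=qr'+q'r=0$; taking (WLOG) $p=0$ forces $p',q\neq 0$ and then $q'=r=r'=0$, contradicting $rr'=\lambda\neq 0$. Since an irreducible quadratic has $\rank_s\geq 2$, this again contradicts $\rank_s(ef)\leq 1$.

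The main technical obstacle is the linearly dependent sub-case, since \autoref{cla:ind-rank} does not directly apply and one must supply the short irreducibility argument above; the remainder of the proof is routine linear-independence bookkeeping feeding into that claim.
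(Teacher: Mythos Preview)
Your proof is correct. Note, however, that the paper does not actually give its own proof of this statement: it is quoted verbatim as Corollary~2.18 from \cite{Peleg-Shpilka-SG}, with no argument supplied here. The numbering (immediately following \autoref{cla:ind-rank}, which is Claim~2.17 in the same source) strongly suggests that the intended derivation is exactly the one you carry out in your main case, namely: if $\spn{a,b}\cap\spn{c,d}=\{0\}$ and $c,d$ are independent, then $a,b,c,d$ are jointly independent, so \autoref{cla:ind-rank} gives $\rank_s(ab+cd)=\rank_s(ab)+1=2$, contradicting $\rank_s(ef)\leq 1$.

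Your observation that \autoref{cla:ind-rank} as stated does not directly cover the sub-case $d\in\spn{c}$ is accurate, and your direct irreducibility check for $ab+\lambda c^2$ (with $a,b,c$ independent) is a clean way to close that gap. An alternative that avoids the coefficient-matching is to note that the symmetric matrix of $ab+\lambda c^2$ in the basis $a,b,c$ has nonzero determinant $-\lambda/4$, hence rank $3$, forcing $\rank_s=2$; either route is fine. Your remark that one must implicitly assume $cd\neq 0$ (otherwise the statement is false, e.g.\ take $c=0$ and $d\notin\spn{a,b}$) is also well taken.
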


\begin{claim}[Claim 2.19 in \cite{Peleg-Shpilka-SG}]\label{cla:rank-2-in-V}
	Let $a,b,c$ and $d$ be linear forms, and $V$ be a linear space of linear forms. Assume $\{\vec{0}\} \neq \MS(ab-cd) \subseteq V$. Then, $\spn{a,b} \cap V\neq \{\vec{0}\}$. 
\end{claim}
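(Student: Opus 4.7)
The plan is to perform a case analysis on $\rank_s(Q)$ where $Q \eqdef ab - cd$; throughout I would rely only on the structural facts on the minimal space that appear earlier in this subsection, namely \autoref{cla:irr-quad-span} and \autoref{cla:intersection}.

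First I would handle the case $\rank_s(Q) = 2$. The expression $Q = ab + c\cdot(-d)$ is a sum of exactly $\rank_s(Q)$ products of linear forms, so it is automatically a minimal representation of $Q$. By the uniqueness statement of \autoref{cla:irr-quad-span}, any two minimal representations span the same space, so $\MS(Q) = \spn{a,b,c,d}$. The hypothesis $\MS(Q) \subseteq V$ then gives $a,b \in V$, hence $\spn{a,b} \subseteq V$. Moreover $a$ and $b$ cannot both be zero, since otherwise $Q = -cd$ would have $\rank_s \leq 1$, contradicting $\rank_s(Q)=2$; so $\spn{a,b} \neq \{\vec 0\}$, and we conclude $\spn{a,b} \cap V = \spn{a,b} \neq \{\vec 0\}$.

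Next I would treat the case $\rank_s(Q) \leq 1$. Since $\MS(Q) \neq \{\vec 0\}$ we cannot have $Q \equiv 0$, so $\rank_s(Q) = 1$ and there are linear forms $e,f$ with $Q = ef$ and $\MS(Q) = \spn{e,f} \subseteq V$. Rewriting the identity $ab - cd = ef$ as $ab + e\cdot(-f) = cd$, the hypotheses of \autoref{cla:intersection} are satisfied (under the intended assumption that $a,b$ are linearly independent): the corollary yields $\dim(\spn{a,b}\cap\spn{e,f}) \geq 1$, and since $\spn{e,f} \subseteq V$ this immediately gives $\spn{a,b}\cap V \neq \{\vec 0\}$.

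The main technical obstacle is the degenerate rank-$1$ sub-case in which $a$ and $b$ are linearly dependent, since \autoref{cla:intersection} no longer applies directly. My plan to handle it is to pick a complement $W$ of $V$ in the space of linear forms, write $a = a_V + a_W$ and similarly for $b,c,d$, and exploit the fact that $\MS(Q)\subseteq V$ is equivalent to $Q \in \C[V]_2$. This forces the purely-$W$ component $a_W b_W - c_W d_W$ and the mixed $V$-$W$ component of $Q$ to vanish as polynomial identities, and unique factorization in $\C[W]$ (together with the primality of $\ideal{V}$) then lets one deduce $a_W = 0$ --- equivalently $a \in V$ --- in the situations where the claim is intended to be applied. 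Isolating a clean minimal set of non-degeneracy hypotheses on $a,b,c,d$ under which this last step goes through cleanly is the most delicate point of the argument.
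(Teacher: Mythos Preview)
This paper does not give its own proof of the claim; it is imported from \cite{Peleg-Shpilka-SG} and stated without argument here. Judged on its own, your treatment of the two substantive cases is correct and is the natural one: when $\rank_s(Q)=2$ the expression $ab+c(-d)$ is already a minimal representation, so \autoref{cla:irr-quad-span} gives $\MS(Q)=\spn{a,b,c,d}$ and hence $a,b\in V$; when $\rank_s(Q)=1$ and $a,b$ are linearly independent, writing $Q=ef$ and applying \autoref{cla:intersection} to $ab+e(-f)=cd$ yields $\spn{a,b}\cap\spn{e,f}\neq\{\vec{0}\}$, which suffices since $\spn{e,f}\subseteq V$.

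Your unease about the remaining degenerate sub-case is justified, but your plan to resolve it cannot work: the claim as literally stated is \emph{false} when $a$ and $b$ are linearly dependent. Take $a=b=x_1$, $c=x_1+x_2$, $d=x_1-x_2$, and $V=\spn{x_2}$. Then $ab-cd=x_2^{2}$, so $\{\vec{0}\}\neq\MS(ab-cd)=\spn{x_2}=V$, yet $\spn{a,b}\cap V=\spn{x_1}\cap\spn{x_2}=\{\vec{0}\}$. No decomposition into $V$- and $W$-components will rescue this instance. The intended reading carries the hypothesis that $a$ and $b$ are linearly independent; once that is assumed, your two cases already constitute a complete proof and the ``delicate'' extra analysis you anticipate is unnecessary.
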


\begin{definition}\label{def:proj}
	Let $a$ be a linear form and $V\subseteq \CRing{x}{n}_1$ a linear space. We denote by ${V^\perp}(a)$ the projection of $a$ to $V^\perp$ (e.g., by identifying each linear form with its vector of coefficients and taking the usual projection). We also extend this definition to linear spaces: $V^{\perp}(\spn{\MVar{a}{k}}) = \spn{{V^\perp}(a_1),\ldots, {V^\perp}(a_k)}$.
\end{definition}

\begin{claim}\label{cla:lin-rank-r-U}
	Let $Q,Q'$ be quadratic polynomials, and $U$\ a linear space of linear forms. Let $r \in \N$ be a constant. Then, there exists a linear space of linear forms, $V$, of dimension at most $8r$, with the following property: For every $P\in \C[U]_2$ and every linear combination $\alpha Q+\beta Q' +P$ that satisfy $\rank_s(\alpha Q+\beta Q' +P) \leq r$, it holds that $\MS(\alpha Q+\beta Q'+P)\subseteq V+U$.
\end{claim}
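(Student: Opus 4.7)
The plan is to analyze which coefficient pairs $(\alpha,\beta)\in\C^2$ admit a $P\in\C[U]_2$ for which $\rank_s(\alpha Q+\beta Q'+P)\le r$, and to build $V$ out of at most two ``low-rank witnesses''. The repeated observation that drives everything is: for any quadratic $R_0$ and any $P'\in\C[U]_2$, one has $\MS(R_0+P')\subseteq \MS(R_0)+U$. Indeed, taking a minimal representation $R_0=\sum_i a_{2i-1}a_{2i}$ (so $\MS(R_0)=\spn{a_j}$ by \autoref{cla:irr-quad-span}) and writing $P'=\sum_k u_{2k-1}u_{2k}$ with $u_j\in U$, the expression $R_0 + P' = \sum_i a_{2i-1}a_{2i} + \sum_k u_{2k-1}u_{2k}$ is a (generally non-minimal) representation using only forms in $\MS(R_0)\cup U$, and \autoref{cor:containMS} then gives the inclusion.

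Next I would introduce the cone $\tilde S := \{(\alpha,\beta)\in\C^2: \exists P\in\C[U]_2,\ \rank_s(\alpha Q+\beta Q'+P)\le r\}$ (it is closed under scaling because scaling by $c\ne 0$ preserves $\rank_s$) and split on whether it spans $\C^2$. \emph{Case A:} $\tilde S$ contains two linearly independent pairs $(\alpha_1,\beta_1),(\alpha_2,\beta_2)$. Fix witnesses $R_i := \alpha_i Q+\beta_i Q'+P_i$ with $\rank_s(R_i)\le r$, so $\dim \MS(R_i)\le 2r$, and set $V:=\MS(R_1)+\MS(R_2)$, of dimension at most $4r$. For any other valid triple $(\alpha,\beta,P)$ giving $R:=\alpha Q+\beta Q'+P$ of rank $\le r$, decompose $(\alpha,\beta)=\lambda_1(\alpha_1,\beta_1)+\lambda_2(\alpha_2,\beta_2)$ so that $R=\lambda_1 R_1+\lambda_2 R_2+P''$ with $P'':=P-\lambda_1 P_1-\lambda_2 P_2\in\C[U]_2$. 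Concatenating the (scaled) minimal representations of $R_1,R_2$ with a representation of $P''$ by pairs of forms in $U$ yields a representation of $R$ whose linear forms lie in $\MS(R_1)\cup\MS(R_2)\cup U$; \autoref{cor:containMS} then gives $\MS(R)\subseteq V+U$.

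\emph{Case B:} All of $\tilde S$ lies in a single $1$-dimensional subspace $\C\cdot(\alpha_0,\beta_0)$ (if $\tilde S = \{(0,0)\}$ simply take $V=\{0\}$). Fix a witness $R_*:=\alpha_0 Q+\beta_0 Q'+P_*$ and set $V:=\MS(R_*)$, of dimension at most $2r$. Any valid $R$ has $(\alpha,\beta)=\lambda(\alpha_0,\beta_0)$, so $R=\lambda R_*+(P-\lambda P_*)$ with $(P-\lambda P_*)\in\C[U]_2$, and the preliminary observation gives $\MS(R)\subseteq \MS(\lambda R_*)+U=V+U$ (using $\MS(\lambda R_*)=\MS(R_*)$ when $\lambda\ne 0$; for $\lambda=0$, $R=P\in\C[U]_2$ and $\MS(R)\subseteq U$ trivially). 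The only step requiring a bit of care is the Case A verification that $R=\lambda_1 R_1+\lambda_2 R_2+P''$ yields a legitimate representation of $R$ in forms from $\MS(R_1)\cup\MS(R_2)\cup U$, but this is immediate upon substituting the minimal representations of $R_1,R_2$ and the $U$-representation of $P''$. In all cases $\dim V\le 4r\le 8r$, giving the stated bound.
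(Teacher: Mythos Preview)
Your proof is correct and follows essentially the same idea as the paper's: build $V$ from the minimal spaces of one or two low-rank witnesses, and use \autoref{cor:containMS} to contain all other low-rank combinations.

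The organization differs slightly, and yours is arguably cleaner. The paper first checks whether both $Q$ and $Q'$ are individually ``close'' to $\C[U]_2$ (i.e., whether there exist $T,T'\in\C[U]_2$ with $\rank_s(Q-T),\rank_s(Q'-T')\le 2r$); if so it sets $V=\MS(Q-T)+\MS(Q'-T')$, which can have dimension up to $8r$. If not, it fixes one witness $A_1$, sets $V=\MS(A_1)$, and argues by contradiction that a second witness with $\MS(A_2)\not\subseteq V+U$ would force $(\alpha_1,\beta_1),(\alpha_2,\beta_2)$ to be independent, making $Q$ close to $\C[U]_2$ after all. You instead split directly on whether the cone of admissible coefficient pairs spans $\C^2$, and in the spanning case you take $V=\MS(R_1)+\MS(R_2)$ from two rank-$\le r$ witnesses, which only costs dimension $\le 4r$. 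This avoids the detour through the ``$Q,Q'$ individually close'' case and in fact yields the sharper bound $\dim V\le 4r$ rather than $8r$.
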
 
\begin{proof}
	If there are $T,T' \in \C[U]_2$ such that $\rank_s(Q-T),\rank_s(Q'-T') \leq 2r$ then let $V = \MS(Q-T) + \MS(Q'-T')$ and the statement clearly holds.
	Thus, assume without loss of generality that for every $T \in \C[U]_2$, $\rank_s(Q-T) > 2r$.
	Let $A_1 = \alpha_1 Q+ \beta_1 Q' + P_1$ satisfy $\rank_s(A_1) \leq r$, for some $P_1\in \C[U]_2$. Set $V = \MS(A_1)$. If $V$ does not satisfy the statement then let $P_2\in \C[U]_2$ and $A_2 = \alpha_2 Q+ \beta_2 Q' + P_2$ be such that $\rank_s(A_2) \leq r$ and $\MS(A_2) \not \subseteq V+U$. In particular, the vectors  $(\alpha_1, \beta_1)$ and $(\alpha_2, \beta_2)$ are linearly independent. Hence, $Q\in \spn{A_1,A_2, P_1,P_2}$. Consequently, there is $T\in \spn{P_1,P_2}\subseteq \C[U]_2$ such that $\rank_s(Q-T) \leq 2r$, in contradiction.
\end{proof}

\begin{claim}\label{cla:lin-rank-r}
	Let $Q,Q'$ be quadratic polynomials and let $r \in \N$ be a constant. Then, there exists a linear space of linear forms, $V$, of dimension at most $8r$, such that for every linear combination satisfying $\rank_s(\alpha Q+\beta Q') \leq r$ it holds that $\MS(\alpha Q+\beta Q')\subseteq V$
\end{claim}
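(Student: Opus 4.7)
The plan is to derive this claim as the specialization of \autoref{cla:lin-rank-r-U} to the trivial subspace $U = \{\vec{0}\}$. With $U = \{\vec{0}\}$, the only quadratic polynomial that depends only on $U$ is the zero polynomial, so $\C[U]_2 = \{0\}$. Hence for any scalars $\alpha,\beta$ satisfying $\rank_s(\alpha Q+\beta Q') \leq r$, we may take $P = 0 \in \C[U]_2$ in the hypothesis of \autoref{cla:lin-rank-r-U} and obtain a linear space $V$ of dimension at most $8r$ such that $\MS(\alpha Q + \beta Q') \subseteq V + U = V$. This is exactly the desired conclusion, with the same numerical bound.

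For completeness I would also outline the short direct argument, which mirrors the proof of \autoref{cla:lin-rank-r-U}. Split into two cases. If $\rank_s(Q), \rank_s(Q') \leq 2r$, simply set $V = \MS(Q) + \MS(Q')$, which has dimension at most $8r$ and, by linearity of the minimal representation, contains $\MS(\alpha Q + \beta Q')$ for every linear combination. Otherwise, assume without loss of generality that $\rank_s(Q) > 2r$ and fix any combination $A_1 = \alpha_1 Q + \beta_1 Q'$ with $\rank_s(A_1) \leq r$; set $V = \MS(A_1)$, of dimension at most $2r$. If some other low-rank combination $A_2 = \alpha_2 Q + \beta_2 Q'$ had $\MS(A_2) \not\subseteq V$, then $(\alpha_1,\beta_1)$ and $(\alpha_2,\beta_2)$ would be linearly independent, whence $Q \in \spn{A_1, A_2}$ and therefore $\rank_s(Q) \leq \rank_s(A_1) + \rank_s(A_2) \leq 2r$, contradicting our assumption.

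Since the statement is essentially a restatement of the preceding claim in the degenerate setting $U = \{\vec 0\}$, there is no substantive obstacle: the only thing to verify is that the dimension bound $8r$ is inherited verbatim, which is immediate because $\dim(V + U) = \dim(V)$ when $U = \{\vec 0\}$.
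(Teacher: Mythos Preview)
Your proposal is correct and matches the paper's own proof, which is the one-line observation that the claim follows from \autoref{cla:lin-rank-r-U} with $U=\{\vec{0}\}$. Your added direct argument is also sound and simply unwinds the proof of \autoref{cla:lin-rank-r-U} in this degenerate case.
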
 

\begin{proof}
	This claim follows immediately from \autoref{cla:lin-rank-r-U} with $U = \{\vec{0}\}.$
\end{proof}

\begin{claim}\label{cla:case-rk1-gen}
	Let $P$ be a homogeneous irreducible quadratic polynomial and let $a$ and $b$ be linear forms. Assume that $\prod_{i\in\cI} T_i \in \sqrt{\ideal{P,ab}}$, for some set $\cI$. Then, either $\rank_s(P) = 2$ and $a \in \MS(P)$ or there is $i\in \cI$ such that $T_i = \alpha P + ac$ for some linear form $c$ and  $\alpha \in \C$. 
\end{claim}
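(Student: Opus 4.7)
The plan is to use Hilbert's Nullstellensatz to reduce the hypothesis to one about $\sqrt{\ideal{P,a}}$ alone, and then to split into two cases depending on how $P$ factors modulo $a$.

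First I would observe that $\cZ(P,ab) = \cZ(P,a) \cup \cZ(P,b)$, so by the Nullstellensatz $\sqrt{\ideal{P,ab}} = \sqrt{\ideal{P,a}} \cap \sqrt{\ideal{P,b}}$. In particular the hypothesis gives $\prod_{i \in \cI} T_i \in \sqrt{\ideal{P,a}}$; note that we discard the symmetric information about $b$, which matches the asymmetric form of the conclusion. Since $P$ is irreducible, $P \notin \ideal{a}$, so the image of $P$ in $\C[\vx]/\ideal{a}$ is a nonzero homogeneous quadratic; such a form is either irreducible in $\C[\vx]/\ideal{a}$ or factors as a product of two linear forms. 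These two alternatives produce the two possible conclusions of the claim.

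Suppose first that the image of $P$ in $\C[\vx]/\ideal{a}$ is irreducible. Then the quotient ring $\C[\vx]/\ideal{P,a}$ is an integral domain, so $\ideal{P,a}$ is prime and hence radical. It follows that $\prod_{i \in \cI} T_i \in \ideal{P,a}$, and by primality some individual $T_i$ already belongs to $\ideal{P,a}$; write $T_i = \alpha P + a \cdot g$ for some $\alpha \in \C$ and $g \in \C[\vx]$. Since the $T_i$ are quadratic, $g$ must be a linear form, so $T_i = \alpha P + ac$, as required.

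Now suppose that $P \equiv q_1 q_2 \pmod{\ideal{a}}$ for linear forms $q_1, q_2$ (possibly linearly dependent). Choosing the lifts $q_1, q_2$ not to involve $a$ and collecting the terms of $P$ divisible by $a$, we obtain $P = a c_0 + q_1 q_2$ for some linear form $c_0$. This writes $P$ as a sum of two products of linear forms, so $\rank_s(P) \leq 2$; and since $P$ is irreducible, $\rank_s(P) \geq 2$. Hence $\rank_s(P) = 2$ and the displayed expression is a minimal representation. \autoref{cla:irr-quad-span} then identifies $\MS(P) = \spn{a, c_0, q_1, q_2}$, and in particular $a \in \MS(P)$, giving the first conclusion of the claim.

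The main subtlety I expect is in the degenerate variants of the second case (e.g., $q_1$ and $q_2$ linearly dependent, or various dependencies among $\{a, c_0, q_1, q_2\}$): the argument still goes through because \autoref{cla:irr-quad-span} and the definition of $\MS$ make no independence assumption on the linear forms appearing in a minimal representation. The other place to be careful is verifying that the dichotomy ``irreducible versus product of two linear forms'' really exhausts the possibilities for a nonzero homogeneous quadratic over $\C$ modulo the linear ideal $\ideal{a}$, which is standard for quadratic forms.
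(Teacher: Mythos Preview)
Your proof is correct and follows essentially the same approach as the paper: both arguments reduce to working modulo $a$ and split on whether $P|_{a=0}$ is irreducible (giving a prime ideal and hence some $T_i\in\ideal{P,a}$) or reducible (forcing $\rank_s(P)=2$ with $a\in\MS(P)$). Your version is simply more explicit about the Nullstellensatz step and about why $a$ lands in $\MS(P)$, whereas the paper states these conclusions tersely.
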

\begin{proof}
Since $P$ is irreducible we have  that $\rank_s(P) \geq 2$. We consider two cases concerning $P|_{a=0}$. If $P$ becomes reducible when setting $a = 0$ then  $\rank_s(P|_{a=0})=1$. Therefore it must hold that $\rank_s(P) = 2$ and  $a\in \MS(P)$.
	If $P$ remains irreducible after setting $a = 0$ then   
	$\ideal{P|_{a=0}}$ is a prime ideal. Hence, 
	$\sqrt{\ideal{P|_{a=0}}}=\ideal{P|_{a=0}}$ and thus there is $i\in \cI$ with $T_i|_{a=0}\in \ideal{P|_{a=0}}$. In particular,  $T_i =\alpha P + ac$ for some linear form $c$.
\end{proof}


\ifEK

In \cite{Peleg-Shpilka-SG} the following claim was proved.

\begin{claim}[Claim 2.20 in \cite{Peleg-Shpilka-SG}]\label{cla:linear-spaces-intersaction }
	Let $V = \sum_{i=1}^m V_i$ where $V_i$ are linear subspaces, and for every $i$, $\dim(V_i) = 2$. If for every $i\neq j \in [m]$, $\dim(V_i\cap V_j) = 1$, then either $\dim(\bigcap_{i=1}^m V_i) = 1$ or $\dim(V)=3$. 
\end{claim}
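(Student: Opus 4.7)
My plan is to argue by establishing a dichotomy for triples of such subspaces and then bootstrapping to the full collection. The contrapositive is easiest to handle: assuming $\dim\bigl(\bigcap_{i=1}^{m} V_i\bigr) = 0$, I will show that $\dim(V) = 3$. Note that since each $V_i$ is two-dimensional and any two distinct $V_i, V_j$ meet in a line, we automatically have $\dim\bigl(\bigcap V_i\bigr) \le 1$, so the conclusion of the claim is really an ``either/or'' on whether the pairwise common lines agree.

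The key local step is the following triple analysis. Fix $V_1, V_2$ and set $L_{12} = V_1 \cap V_2$, so that $V_1 + V_2$ is three-dimensional. For any third subspace $V_k$, the two lines $V_k \cap V_1$ and $V_k \cap V_2$ both lie inside $V_k$. If they are distinct they must span $V_k$, forcing $V_k \subseteq V_1 + V_2$; if they coincide, then that common line sits inside $V_1 \cap V_2 = L_{12}$, i.e.\ $L_{12} \subseteq V_k$. This gives the clean dichotomy: for every $k \ge 3$, either $L_{12} \subseteq V_k$, or $V_k \subseteq V_1 + V_2$.

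Now I bootstrap. Assuming $\bigcap V_i = \{\vec 0\}$, there must exist some index $k_0 \ge 3$ with $L_{12} \not\subseteq V_{k_0}$ (otherwise $L_{12}$ lies in every $V_i$, contradicting the assumption). By the dichotomy this forces $V_{k_0} \subseteq V_1 + V_2$. Then for any remaining $V_k$ I apply the dichotomy again with respect to the pair $(V_1, V_2)$: if $V_k \subseteq V_1 + V_2$ we are done for that index, and otherwise $L_{12} \subseteq V_k$, which is precisely the case I need to rule out.

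Ruling out the last case is where I expect the only real subtlety, and it is handled by a second application of the triple analysis using $V_{k_0}$ as a witness. Indeed, suppose $L_{12} \subseteq V_k$. The intersection $V_k \cap V_{k_0}$ is a line contained in $V_{k_0} \subseteq V_1 + V_2$. If this line is distinct from $L_{12}$, then together with $L_{12}$ it spans $V_k$, placing $V_k$ inside $V_1 + V_2$. If instead $V_k \cap V_{k_0} = L_{12}$, then $L_{12} \subseteq V_{k_0}$, contradicting the choice of $k_0$. Either way $V_k \subseteq V_1 + V_2$. Summing over all $k$ yields $V = V_1 + V_2$, hence $\dim(V) = 3$, completing the proof. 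The main obstacle is really just this final sub-case, and it is resolved cleanly by exploiting $V_{k_0}$ as a pivot.
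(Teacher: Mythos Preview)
Your proof is correct. The dichotomy you establish for each $V_k$ versus the pair $(V_1,V_2)$ is sound: the two lines $V_k\cap V_1$ and $V_k\cap V_2$ either span $V_k$ (forcing $V_k\subseteq V_1+V_2$) or coincide with $L_{12}$, and the pivot argument using $V_{k_0}$ cleanly disposes of the remaining case.

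There is nothing to compare against here: the paper states this claim as a citation (Claim~2.20 of \cite{Peleg-Shpilka-SG}) and gives no proof of its own. Your argument is a standard and clean way to establish the result.
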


\subsection{Projection mapping}\label{sec:z-map}

This section collects some facts from \cite{Peleg-Shpilka-SG} concerning projections of linear spaces and the effect on relevant quadratic polynomials.


\begin{definition}[Definition 2.21 of \cite{Peleg-Shpilka-SG}]\label{def:z-mapping}
	Let $V = \spn{\MVar{v}{\Delta}}\subseteq \spn{x_1,\ldots,x_n}$ be a $\Delta$-dimensional linear space of linear forms, and let $\{\MVar{u}{{n-\Delta}}\}$ be a basis for $V^\perp$. For $\vaa = (\MVar{\alpha}{\Delta})\in \C^{\Delta}$ we define $T_{\vaa, V} : \CRing{x}{n} \mapsto \C[\MVar{x}{n},z]$, where $z$ is a new variable, to be the linear map given by the following action on the basis vectors: $T_{\vaa, V}(v_i) = \alpha_i z$ and $T_{\vaa, V}(u_i)=u_i$.
\end{definition}

Thus, if we pick $\vaa$ at random, then $T_{\vaa,V}$ projects $V$ to $\spn{z}$ in a random way while keeping the perpendicular space intact. 
Clearly $T_{\vaa, V}$ is a linear transformation, and it defines a ring homomorphism from $\CRing{x}{n}$ to $\C[\MVar{x}{n},z]$ in the natural way. 

\begin{claim}[Claim 2.23 of \cite{Peleg-Shpilka-SG}]\label{cla:res-z-ampping}
	Let $V\subseteq \spn{x_1,\ldots,x_n}$ be a $\Delta$-dimensional linear space of linear forms. Let $F$ and  $G$ be two polynomials that share no common irreducible factor. Then, with probability $1$ over the choice of $\vaa \in [0,1]^{\Delta}$ (say according to the uniform distribution), every common factor of $T_{\vaa, V}(F)$ and $T_{\vaa, V}(G)$ (if there is such) must be in $\C[z]$.
\end{claim}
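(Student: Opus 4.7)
The plan is to reinterpret $T_{\vaa, V}$ as the specialization $\alpha \mapsto \vaa$ of a universal ring homomorphism and then reduce the statement to preservation of coprimality under a pure transcendental extension of the coefficient field. After a linear change of coordinates I may assume $V = \spn{v_1, \ldots, v_\Delta}$ and $V^\perp = \spn{u_1, \ldots, u_{n-\Delta}}$, so that $F, G \in \C[v, u]$ are coprime and $T_{\vaa, V}$ acts by the substitution $v_i \mapsto \alpha_i z$, $u_j \mapsto u_j$. Introduce the universal (injective) ring homomorphism
\[\tilde\psi \colon \C[v, u] \hookrightarrow \C[\alpha_1, \ldots, \alpha_\Delta, z, u_1, \ldots, u_{n-\Delta}]\]
given by the same formulas, now with $\alpha = (\alpha_1, \ldots, \alpha_\Delta)$ treated as formal indeterminates, so that $T_{\vaa, V}$ is $\tilde\psi$ followed by the specialization $\alpha \mapsto \vaa$.

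The heart of the argument is to show that every common factor of $\tilde\psi(F)$ and $\tilde\psi(G)$ in $\C[\alpha, z, u]$ lies in $\C[\alpha, z]$ --- equivalently, that $\tilde\psi(F)$ and $\tilde\psi(G)$ are coprime in $\C(\alpha, z)[u]$. The induced map on fraction fields $\C(v) \hookrightarrow \C(\alpha, z)$ sending $v_i \mapsto \alpha_i z$ has image $\C(\alpha_1 z, \ldots, \alpha_\Delta z)$ of transcendence degree $\Delta$ over $\C$, while $\alpha_1$ is transcendental over this image; hence $\C(\alpha, z) = \C(v)(\alpha_1)$ is a pure transcendental extension of degree one. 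Gauss's lemma gives that $F, G$ remain coprime in $\C(v)[u]$. Viewing $\C(\alpha, z)[u] = \C(v)(\alpha_1)[u]$ as a localization of the polynomial ring $\C(v)[\alpha_1][u]$, any irreducible factor of $F$ or $G$ in $\C(v)[u]$ stays irreducible in $\C(v)[\alpha_1][u]$ --- a factorization of an $\alpha_1$-free polynomial must, by comparing $\alpha_1$-degrees, have both factors $\alpha_1$-free --- and localization only converts $u$-free factors into units. Hence $F, G$ stay coprime in $\C(\alpha, z)[u]$.

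Writing $\tilde\psi(F) = H(\alpha, z) \cdot P$ and $\tilde\psi(G) = H(\alpha, z) \cdot Q$ with $H = \gcd \in \C[\alpha, z]$ and $P, Q \in \C[\alpha, z, u]$ coprime, I would conclude by invoking the standard generic-specialization fact: if $P, Q$ are coprime in $\C[\alpha, z, u]$ then the variety $V(P, Q)$ has codimension at least $2$ in $\C^{n+1}$, so for $\vaa$ outside a proper Zariski-closed subset of $\C^\Delta$ the fiber at $\vaa$ has codimension at least $2$ in $\C^{n-\Delta+1}$, whence $P(\vaa, z, u)$ and $Q(\vaa, z, u)$ are coprime in $\C[z, u]$. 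This bad $\vaa$-set has Lebesgue measure zero in $[0, 1]^\Delta$, and outside it $\gcd(T_{\vaa, V}(F), T_{\vaa, V}(G))$ equals $H(\vaa, z)$ up to a unit and so lies in $\C[z]$. The main obstacle is the transcendental-extension step --- carefully ruling out the emergence of new common irreducible factors when the coefficient field is enlarged from $\C(v)$ to $\C(\alpha, z)$ --- after which the rest is routine bookkeeping.
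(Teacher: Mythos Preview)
The paper does not contain its own proof of this claim; it is quoted as Claim~2.23 of \cite{Peleg-Shpilka-SG} and used as a black box, so there is no in-paper argument to compare against.

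Your argument is correct. The identification $\C(\alpha,z)=\C(v)(\alpha_1)$ via $v_i\mapsto\alpha_i z$ is valid (transcendence-degree count, plus the observation that $z=(\alpha_1 z)/\alpha_1$ and $\alpha_i=(\alpha_i z)/z$ recover all of $\C(\alpha,z)$), and your $\alpha_1$-degree comparison correctly shows that irreducibles in $\C(v)[u]$ stay irreducible after adjoining the transcendental $\alpha_1$, so $F,G$ remain coprime in $\C(\alpha,z)[u]$. The step you flag as the ``main obstacle'' is in fact an instance of the general and easier fact that coprimality in $K[u_1,\ldots,u_m]$ is preserved under \emph{any} field extension $K\subset L$, since $\dim_{\mathrm{Krull}} L[u]/(F,G)=\dim_{\mathrm{Krull}} K[u]/(F,G)$ and coprimality is equivalent to this dimension being at most $m-2$; your hands-on argument for the purely transcendental case is perfectly fine but you need not have worried. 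Your specialization step is also correct, if stated tersely: upper semicontinuity of fiber dimension for the projection $V(P,Q)\to\C^\Delta$, together with $\dim V(P,Q)\le n-1$, forces the set of $\vaa$ with fiber dimension $\ge n-\Delta$ to be proper and Zariski-closed, hence of Lebesgue measure zero in $[0,1]^\Delta$; and a fiber of codimension $\ge 2$ in $\C^{n-\Delta+1}$ indeed precludes a nontrivial common factor of $P(\vaa,\cdot),Q(\vaa,\cdot)$ in $\C[z,u]$ (and in particular forces both to be nonzero).
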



\begin{claim}[Corollary 2.24 of \cite{Peleg-Shpilka-SG}]\label{cla:still-indep}
	Let $V$ be a $\Delta$-dimensional linear space of linear forms. Let $F$ and  $G$ be two linearly independent, irreducible quadratics, such that $\MS(F),\MS(G)\not\subseteq V$. Then, with probability $1$ over the choice of $\vaa \in [0,1]^{\Delta}$ (say according to the uniform distribution), $T_{\vaa, V}(F)$ and $T_{\vaa, V}(G)$ are linearly independent.
\end{claim}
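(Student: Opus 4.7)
The plan is to reduce to \autoref{cla:res-z-ampping} and then verify that the images $T_{\vaa,V}(F)$ and $T_{\vaa,V}(G)$ are generically nondegenerate, in the sense that neither lies in $\C[z]$. The rough structure of the argument: if $T_{\vaa,V}(F)$ and $T_{\vaa,V}(G)$ were linearly dependent (and both nonzero), then $T_{\vaa,V}(F) = c\cdot T_{\vaa,V}(G)$ for some $c\in\C$, which would make $T_{\vaa,V}(F)$ itself a common factor, and by \autoref{cla:res-z-ampping} force $T_{\vaa,V}(F)\in\C[z]$. Ruling this out generically is the content of the proof.

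First I would invoke \autoref{cla:res-z-ampping}. Since $F$ and $G$ are linearly independent irreducibles, any common non-constant factor would equal both $F$ and $G$ up to scalars, contradicting linear independence; hence they share no common irreducible factor. Therefore with probability 1 every common factor of $T_{\vaa,V}(F)$ and $T_{\vaa,V}(G)$ lies in $\C[z]$.

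Next, the key step is to show that with probability 1, $T_{\vaa,V}(F) \notin \C[z]$ (and similarly for $G$). Pick a basis $u_1,\ldots,u_{n-\Delta}$ of $V^\perp$, so that together with $v_1,\ldots,v_\Delta$ it spans $\CRing{x}{n}_1$. Decompose
$$F = F_{VV} + F_{VU} + F_{UU},$$
where $F_{VV}\in\C[V]_2$, $F_{UU}\in\C[V^\perp]_2$, and $F_{VU}$ collects the mixed monomials $v_iu_j$. Since $\MS(F)\not\subseteq V$ we have $F\notin\C[V]_2$, so $F_{VU}+F_{UU}\neq 0$. Applying $T_{\vaa,V}$ yields
$$T_{\vaa,V}(F) = p_2(\vaa)\,z^2 + z\cdot L(\vaa)(u) + F_{UU}(u),$$
where $p_2(\vaa)$ is a quadratic form in $\vaa$ and $L(\vaa)(u)$ is a linear form in the $u_j$'s whose coefficients are linear in the $\alpha_i$'s. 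The condition $T_{\vaa,V}(F)\in\C[z]$ forces both $F_{UU}\equiv 0$ and $L(\vaa)(u)\equiv 0$ as a polynomial in $u$. If $F_{UU}\neq 0$ the condition is already infeasible. Otherwise $F_{VU}\neq 0$, and $L(\vaa)(u)\equiv 0$ translates to a system of linear equations on $\vaa$ with at least one nontrivial equation (coming from any nonzero coefficient of $F_{VU}$), so its solution set has Lebesgue measure zero in $[0,1]^\Delta$. The same reasoning applies to $G$.

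Combining the two steps, with probability 1 over $\vaa$ we have that $T_{\vaa,V}(F), T_{\vaa,V}(G)$ are both nonzero, neither lies in $\C[z]$, and every common factor of them lies in $\C[z]$. If they were linearly dependent, being both nonzero they would be proportional, and then $T_{\vaa,V}(F)$ itself would be a common factor in $\C[z]$, a contradiction. I do not anticipate a substantial obstacle here: the heart of the matter is \autoref{cla:res-z-ampping}, and the generic nondegeneracy of $T_{\vaa,V}(F)$ is a direct linear-algebraic verification using $\MS(F)\not\subseteq V$.
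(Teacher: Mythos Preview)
Your proof is correct. The paper does not actually prove this claim; it is quoted verbatim as Corollary~2.24 of \cite{Peleg-Shpilka-SG} and used as a black box, so there is no in-paper proof to compare against. Your argument is exactly the intended one: linear independence of two irreducibles implies they are coprime, so \autoref{cla:res-z-ampping} forces any common factor of the images into $\C[z]$; the hypothesis $\MS(F)\not\subseteq V$ (equivalently $F\notin\C[V]_2$) guarantees generically that $T_{\vaa,V}(F)\notin\C[z]$, and then proportionality of the images would make $T_{\vaa,V}(F)$ a common factor lying in $\C[z]$, a contradiction.
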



\begin{claim}[Claim 2.25 of \cite{Peleg-Shpilka-SG}]\label{cla:z-map-rank}
	Let $Q$ be an irreducible quadratic polynomial, and $V$  a $\Delta$-dimensional linear space.
	Then for every $\vaa \in \C^{\Delta}$,  $\rank_s(T_{\vaa, V}(Q)) \geq \rank_s(Q)-\Delta$.
\end{claim}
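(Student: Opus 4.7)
The plan is to reduce this to \autoref{cla:rank-mod-space}, which bounds the drop in $\rank_s$ when a polynomial is restricted to a codimension-$\Delta$ subspace. The key observation is that $T_{\vaa,V}$ acts on linear forms with a large kernel inside $V$, so up to a relabeling of one variable it is essentially a restriction to a subspace of codimension at most $\Delta-1$.

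First I would handle the degenerate case $\vaa = \vec{0}$. Here $T_{\vaa,V}(v_i) = 0$ for every $i$, so $T_{\vaa,V}(Q) = Q|_{V=0}$ and \autoref{cla:rank-mod-space} gives $\rank_s(T_{\vaa,V}(Q)) \geq \rank_s(Q) - \Delta$ immediately.

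For the main case $\vaa \neq \vec{0}$, let $W \subseteq V$ denote the kernel of $T_{\vaa,V}$ restricted to $V$: explicitly, $W = \{\sum_i \beta_i v_i \in V : \sum_i \beta_i \alpha_i = 0\}$. Since $\vaa \neq \vec{0}$, $\dim(W) = \Delta - 1$. Choose a basis $w_1,\ldots,w_{\Delta-1}$ of $W$ and extend it to a basis $w_1,\ldots,w_\Delta$ of $V$; by rescaling $w_\Delta$ if necessary we may assume $T_{\vaa,V}(w_\Delta) = z$. Together with the fixed basis $u_1,\ldots,u_{n-\Delta}$ of $V^\perp$ this gives a basis of $\CRing{x}{n}_1$ in which $T_{\vaa,V}$ simply kills $w_1,\ldots,w_{\Delta-1}$, sends $w_\Delta$ to $z$, and fixes each $u_j$. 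Consequently, viewing $T_{\vaa,V}(Q)$ as a polynomial in $z,u_1,\ldots,u_{n-\Delta}$ it coincides with $Q|_{W=0}$ after renaming $w_\Delta$ to $z$; in particular the two polynomials have the same $\rank_s$. Applying \autoref{cla:rank-mod-space} with the $(\Delta-1)$-dimensional space $W$ yields
\[
\rank_s(T_{\vaa,V}(Q)) \;=\; \rank_s(Q|_{W=0}) \;\geq\; \rank_s(Q) - (\Delta-1) \;\geq\; \rank_s(Q) - \Delta,
\]
as required.

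There is no real obstacle here: the only subtle point is justifying the identification of $T_{\vaa,V}(Q)$ with $Q|_{W=0}$ up to a change of variable name, which is just the observation that a ring homomorphism substituting the basis elements of a linear space of linear forms into scalar multiples of a single fresh variable factors (for the purpose of counting the symmetric rank) as a restriction to the kernel followed by a linear renaming, neither of which changes $\rank_s$.
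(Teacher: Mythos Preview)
Your proof is correct. The paper does not give its own proof of this claim; it is imported verbatim as Claim~2.25 of \cite{Peleg-Shpilka-SG}, so there is nothing to compare against here. Your reduction to \autoref{cla:rank-mod-space} via the kernel of $T_{\vaa,V}$ on linear forms is exactly the natural argument, and in fact for $\vaa\neq\vec{0}$ you obtain the slightly sharper bound $\rank_s(T_{\vaa,V}(Q))\geq \rank_s(Q)-(\Delta-1)$.
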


\begin{claim}[Claim 2.26 of \cite{Peleg-Shpilka-SG}]\label{cla:z-map-dimension}
	\sloppy 
	Let $\cQ$ be a set of quadratics, and $V$ be a $\Delta$-dimensional linear space. Then, if there are linearly independent vectors, $\{\vaa^1,\dots, \vaa^{\Delta}  \}\subset \C^{\Delta}$ such that for every $i$,\footnote{Recall that $\MS(T_{\vaa^i,V}(\cQ))$ is the space spanned by $\cup_{Q\in\cQ}\MS(T_{\vaa^i,V}(\cQ))$.} $\dim(\MS(T_{\vaa^i,V}(\cQ)))\leq \sigma$ then $\dim(\MS(\cQ))\leq (\sigma+1) \Delta$.
\end{claim}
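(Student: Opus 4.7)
The plan is to reduce the statement to a quadratic-by-quadratic identity using the matrix (Gram) representation of $\MS$. For each $Q\in\cQ$, write $Q(\vx)=\vx^{T}M_{Q}\vx$ with $M_{Q}$ symmetric; then (via $\C$-congruent diagonalization of $M_{Q}$ together with \autoref{cla:irr-quad-span}) the minimal space $\MS(Q)$ may be identified with the column span $M_{Q}\C^{n}$, regarded as a subspace of $\CRing{x}{n}_{1}\cong\C^{n}$. Let $\pi\colon \CRing{x}{n}_{1}\to V^{\perp}$ denote the natural projection along $V$.

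The key identity I would aim to prove is that for every $Q\in\cQ$ and every basis $\vaa^{1},\ldots,\vaa^{\Delta}$ of $\C^{\Delta}$,
\begin{equation*}
\pi(\MS(Q))\;=\;\sum_{i=1}^{\Delta}\pi\bigl(\MS(T_{\vaa^{i},V}(Q))\bigr).
\end{equation*}
To establish it, block-decompose $M_{Q}$ along the splitting $V\oplus V^{\perp}$ into blocks $A$, $B$, $B^{T}$, $C$, so that a direct computation gives $\pi(\MS(Q))=B^{T}(\C^{\Delta})+C(V^{\perp})$. Another direct computation shows that the Gram matrix of $T_{\vaa,V}(Q)$ in the basis $(z,u_{1},\ldots,u_{n-\Delta})$ of $\spn{z}\oplus V^{\perp}$ equals $T_{\vaa,V}\,M_{Q}\,T_{\vaa,V}^{T}$, whose blocks are $\vaa^{T}A\vaa$, $\vaa^{T}B$, $B^{T}\vaa$ and $C$; hence $\pi(\MS(T_{\vaa,V}(Q)))=\spn{B^{T}\vaa}+C(V^{\perp})$. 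Summing over $i=1,\ldots,\Delta$ and using that $\{\vaa^{i}\}_{i}$ spans $\C^{\Delta}$, the terms $\spn{B^{T}\vaa^{i}}$ fill out $B^{T}(\C^{\Delta})$, recovering $\pi(\MS(Q))$.

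Summing the per-$Q$ identity over $Q\in\cQ$ and using linearity of $\pi$ then yields $\pi(\MS(\cQ))=\sum_{i=1}^{\Delta}\pi(\MS(T_{\vaa^{i},V}(\cQ)))$, whose dimension is at most $\sum_{i}\dim(\MS(T_{\vaa^{i},V}(\cQ)))\leq\sigma\Delta$ by hypothesis. Combined with the trivial bound $\dim(\MS(\cQ)\cap V)\leq\dim(V)=\Delta$ and the kernel-image decomposition $\dim(\MS(\cQ))=\dim(\MS(\cQ)\cap V)+\dim(\pi(\MS(\cQ)))$ of $\pi$ restricted to $\MS(\cQ)$, this gives $\dim(\MS(\cQ))\leq(\sigma+1)\Delta$. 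The only mildly delicate point I foresee is justifying the matrix identification $\MS(Q)=M_{Q}\C^{n}$ cleanly, covering rank-deficient and reducible $Q$, but this is standard once one works over an algebraically closed field, so no substantial obstacle is expected.
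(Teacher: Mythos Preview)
The paper does not include its own proof of this claim; it is quoted verbatim from \cite{Peleg-Shpilka-SG} as a preliminary fact, so there is nothing in the present paper to compare against.

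That said, your argument is correct and self-contained. The identification $\MS(Q)=M_Q\C^n$ is valid over $\C$: writing $M_Q=P^TDP$ with $D=\mathrm{diag}(1,\dots,1,0,\dots,0)$ gives $Q=\sum_{i=1}^{r} y_i^2$ with $y_i=P^Te_i$, so \autoref{cor:containMS} yields $\MS(Q)\subseteq\spn{y_1,\dots,y_r}=M_Q\C^n$; conversely any representation $Q=\sum a_{2k-1}a_{2k}$ forces $M_Q\C^n\subseteq\spn{a_k}_k$, hence $M_Q\C^n\subseteq\MS(Q)$. The block computation for the Gram matrix of $T_{\vaa,V}(Q)$ is exactly $S^TM_QS$ with $S=\begin{pmatrix}\vaa&0\\0&I\end{pmatrix}$, giving the blocks you wrote, and the rest is clean linear algebra.

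One cosmetic point: as written, $\pi$ has domain $\CRing{x}{n}_1=V\oplus V^\perp$, but you apply it to $\MS(T_{\vaa,V}(Q))\subseteq\spn{z}\oplus V^\perp$. What you mean (and use) is the projection $\spn{z}\oplus V^\perp\to V^\perp$ killing $z$; it would be cleaner to give it its own name. This does not affect the argument.
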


\section{Robust-SG theorems in $\C^n$}\label{sec:robust-C}
We shall need the following  generalizations of \autoref{thm:DSW}. We postpone the proofs to \autoref{sec:appendix}.

\begin{theorem}\label{thm:lin-rob-space}
	Let $0 < \delta \leq 1$ be any constant. Let $W \subset \C^n$
	be an $r$-dimensional space. Let $\mathcal{W} \subset W$ and 
	  $\cK\subset  \C^n \setminus W$ be finite subsets such that no two vectors in $\cT= \cK \cup \cW$ are linearly dependent.
	Assume further that all the elements in $\cK$ satisfy the following relaxed EK-property: For every $p \in \cK$, for at
	least $\delta$ fraction of the points $q \in \cT$
	the span of $p$ and $q$ contains a point in $\cT\setminus \{p,q\}$.
	 Then,	$\dim(\spn{\cT}) \leq O(r + \frac{1}{\delta})$. 
\end{theorem}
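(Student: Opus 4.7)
The plan is to quotient out $W$ and reduce to Theorem~\ref{thm:DSW}. Set $m := |\cT|$ and, for any subspace $V \supseteq W$, let $\pi_V\colon \C^n \to \C^n/V$ denote the canonical projection. Since $\cW \subseteq W$ we have $\spn{\cT} \subseteq \spn{\cK} + W$, so it suffices to construct an enlarged subspace $W' \supseteq W$ with $\dim W' \le r + O(1/\delta)$ such that $\dim\spn{\cK \setminus W'} = O(1/\delta)$; the bound on $\dim\spn{\cT}$ then follows by $\dim\spn{\cT} \le \dim W' + \dim\spn{\cK\setminus W'}$.

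First I catalogue the possible witness structures, valid for any $V \supseteq W$. Fix $p \in \cK \setminus V$, a neighbor $q \in \cT$, and a witness $r = \alpha p + \beta q \in \cT \setminus \{p,q\}$. If $q \in \cT \cap V$, then $\pi_V(q) = 0$, so $\pi_V(r) = \alpha\,\pi_V(p)$; the hypothesis that no two elements of $\cT$ are linearly dependent forces $\alpha \neq 0$, hence $r \in \cT \setminus V$ with $\pi_V(r) \in \spn{\pi_V(p)} \setminus \{0\}$, and distinct such $q$'s produce distinct $r$'s (by another linear-independence argument). If $q \in \cT \setminus V$ with $\pi_V(q) \in \spn{\pi_V(p)}$, then $q$ itself lies in the $\pi_V$-equivalence class of $p$. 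Finally, if $q \in \cT \setminus V$ and $\pi_V(q) \notin \spn{\pi_V(p)}$, one checks $r \in \cT \setminus V$ with $\pi_V(r)$ nonzero and not parallel to either $\pi_V(p)$ or $\pi_V(q)$---call such a $q$ a \emph{good} neighbor of $p$ relative to $V$.

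I iteratively enlarge $W$ to keep all equivalence classes small. Set $W_0 := W$; at step $i$, for each $p \in \cK \setminus W_i$ define $C_p^{(i)} := (\spn{p} + W_i) \cap (\cK \setminus W_i)$. As long as some $|C_p^{(i)}| \ge \delta m/4$, set $W_{i+1} := W_i + \spn{p}$, which absorbs all of $C_p^{(i)}$ into $W_{i+1}$. Since $C_p^{(i)} \subseteq \cK \setminus W_i$ is disjoint from $\cT \cap W_i$, each iteration moves at least $\delta m/4$ new elements of $\cT$ into the enlarged subspace. The process therefore halts in at most $4/\delta$ steps, yielding $W'$ with $\dim W' \le r + 4/\delta$ and $|C_p^{(W')}| < \delta m/4$ for every $p \in \cK \setminus W'$.

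Finally I verify that $\cT' := \cK \setminus W'$ is a $(\delta/2)$-SG configuration in $\C^n$. Fix $p \in \cT'$. By the witness analysis with $V = W'$, distinct neighbors of $p$ in $\cT \cap W'$ produce distinct witnesses in $C_p^{(W')} \setminus \{p\}$, so $p$ has at most $|C_p^{(W')}| - 1 < \delta m/4$ such neighbors; trivially, at most $|C_p^{(W')}| - 1 < \delta m/4$ neighbors lie in $C_p^{(W')} \setminus \{p\}$. The remaining at least $\delta m - 2(\delta m/4) = \delta m/2 \ge (\delta/2)\,|\cT'|$ neighbors are good, and for each such $q$ the witness lies in $\cT'$ as well. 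Applying Theorem~\ref{thm:DSW} to $\cT'$ gives $\dim\spn{\cT'} = O(1/\delta)$, and combining with $\dim W' \le r + 4/\delta$ yields $\dim\spn{\cT} \le \dim W' + \dim\spn{\cT'} = O(r + 1/\delta)$. The delicate step is the witness analysis---specifically the injection from $\cT \cap W'$ neighbors into distinct witnesses in $C_p^{(W')}$---which is what converts the iteration's small-class guarantee into a quantitative lower bound on the number of good neighbors and hence into the $\Omega(\delta)$-SG condition needed for Theorem~\ref{thm:DSW}.
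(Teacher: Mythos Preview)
Your proof is correct and follows essentially the same strategy as the paper's: iteratively enlarge $W$ one dimension at a time, each step absorbing $\Omega(\delta m)$ points of $\cK$, until the remaining set $\cK\setminus W'$ is an $\Omega(\delta)$-SG configuration to which Theorem~\ref{thm:DSW} applies. The only difference is the iteration trigger---you enlarge $W$ whenever an equivalence class $C_p$ modulo $W$ is large, whereas the paper enlarges $W$ whenever some $p\in\cK$ has many neighbors inside $W$---but the two criteria are tied together by the very injection (neighbors of $p$ in $W$ $\hookrightarrow$ distinct witnesses in $C_p\setminus\{p\}$) that both arguments rely on.
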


We also use the following bi-partitive version of \cite[Corollary 1.11]{DSW12} this is a slight variation of the formulation presented in their paper. 
\begin{claim}\label{cla:case-1-cut}
	
	Let $V = \MVar{v}{n} \subset \C^d$	be a set of $n$ distinct points. Suppose that there is $\cB\subseteq V$ such that there are at least $\delta n^2$  pairs in $\cB\times( V\setminus \cB)$ that lie on a special line. Then there exists a subset $\cB'\subseteq \cB$ such that $\card{\cB'} \geq (\delta/6)n$ and $\text{affine-dim}(\cB') \leq  O(1/\delta)$.

\end{claim}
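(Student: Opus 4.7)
The plan is to adapt the proof of \cite[Corollary 1.11]{DSW12} to the bi-partite setting.

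\textbf{Step 1 (Averaging).} For $p\in\cB$, let $d(p):=\card{\cbra{q\in V\setminus\cB : \text{the line through }p,q\text{ contains a third point of }V}}$, so the hypothesis gives $\sum_{p\in\cB}d(p)\geq\delta n^2$. Setting $\cB':=\cbra{p\in\cB : d(p)\geq \delta n/2}$, the inequality $\delta n^2\leq |\cB'|\cdot n + |\cB|\cdot(\delta n/2)$ (using $d(p)\leq n$) yields $|\cB'|\geq \delta n/2\geq (\delta/6)n$. So $\cB'$ meets the required size bound, and every $p\in\cB'$ is incident to $\geq \delta n/2$ special lines with second endpoint in $V\setminus\cB$.

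\textbf{Step 2 (Design matrix).} For each $p\in\cB'$ and each special pair $(p,q)$ with $q\in V\setminus\cB$, choose a third collinear point $r_{p,q}\in V$ and record the affine dependence $v_p-\lambda_{p,q}v_q-(1-\lambda_{p,q})v_{r_{p,q}}=0$ as a row of a matrix $M\in\C^{m\times n}$ whose columns are indexed by $V$. Every row of $M$ has three nonzero entries, every column indexed by a point of $\cB'$ has at least $\delta n/2$ nonzero entries, and the coordinate matrix $X$ of $V$ satisfies $MX=0$.

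\textbf{Step 3 (Rank bound).} I would then invoke the design matrix rank lower bound of \cite{barak2013fractional}, as used in the proof of \cite[Theorem 1.9]{DSW12}. Only the $\cB'$-columns of $M$ are a priori heavy, so I would first eliminate the $V\setminus\cB'$-coordinates from $MX=0$: writing $M=[A\mid B]$ according to the partition $V=\cB'\sqcup(V\setminus\cB')$ and projecting onto the cokernel of $B$, one obtains a reduced system $\widetilde{A}\,X_{\cB'}=0$ purely among the $\cB'$-coordinates. The bi-partite hypothesis ensures that the ``nuisance'' second endpoints $q\in V\setminus\cB$ lie outside $\cB'$, so their elimination should preserve the $\Omega(\delta n)$-column weight at every $\cB'$-column of the reduced matrix $\widetilde{A}$; the BDWY bound then yields $\operatorname{rank}(\widetilde{A})\geq |\cB'|-O(1/\delta)$, which forces $\text{affine-dim}(\cB')\leq O(1/\delta)$.

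The main technical obstacle lies in Step 3: one must verify that the projection does not destroy the design property of the $\cB'$-columns, i.e., that the reduced $\cB'$-columns of $\widetilde{A}$ retain $\Omega(\delta n)$ entries with enough linear independence for the BDWY bound to apply. In the original DSW Corollary 1.11 every column of the design matrix is uniformly heavy (because the averaging is symmetric across $V$), whereas here only one side of the bi-partition is heavy; the reduction to a balanced design matrix must therefore be carried out carefully, making sure no column mass is lost when the ``outside'' coordinates are eliminated. This is precisely what makes the statement a ``slight variation'' of \cite[Cor 1.11]{DSW12} rather than an immediate corollary.
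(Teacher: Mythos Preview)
Your averaging in Step~1 is fine and Step~2 correctly sets up a design matrix. The gap is Step~3, and it is a genuine one, not a routine verification: multiplying $A$ on the left by the projection $P$ onto $\ker(B^{*})$ replaces each row of $A$ by a dense linear combination of rows, so $\widetilde{A}=PA$ has neither bounded row support nor any controllable pairwise column overlap. The BDWY/DSW rank bound requires all three design conditions (sparse rows, heavy columns, bounded pairwise overlap) simultaneously, and none of them survive the projection. There is no reason the ``column mass'' at the $\cB'$-columns of $A$ should translate into a usable rank lower bound for $\widetilde{A}$; the bi-partite hypothesis says nothing about the structure of $P$. So the proposed route does not close, and the obstacle you flag is not a technicality but the whole difficulty.

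The paper avoids the design-matrix machinery entirely and instead reduces to \autoref{thm:DSW} via a greedy pruning of \emph{all} of $V$, not just $\cB$. One repeatedly deletes any vertex $v$ (on either side) whose cross-degree satisfies $|\Gamma(v)|<(\delta/2)n$. A short count shows this removes at most $4|\Gamma(v)|<2\delta n$ from the total cross-degree sum, so the average cross-degree, which starts at $\geq 2\delta n$, strictly increases at every step. When the process halts, every surviving vertex has cross-degree $\geq(\delta/2)n\geq(\delta/2)|V'|$, so the surviving set $V'$ is a $(\delta/2)$-SG configuration and \autoref{thm:DSW} gives $\dim(V')=O(1/\delta)$. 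Since any surviving vertex in $V'\setminus\cB$ has all of its $\geq(\delta/2)n$ cross-neighbors inside $\cB\cap V'$, one reads off $|\cB\cap V'|\geq(\delta/2)n$, and $\cB':=\cB\cap V'$ does the job. The key idea your approach is missing is that by pruning \emph{both} sides simultaneously one lands in the balanced setting where the already-proved robust SG theorem applies directly, rather than attempting an unbalanced design-matrix argument from scratch.
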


The important difference between \autoref{cla:case-1-cut} and  \cite[Corollary 1.11]{DSW12} is that \autoref{cla:case-1-cut} guarantees the existence of a low-dimensional subspace that  contains a constant fraction of the points in $\cB$, whereas from \cite[Corollary 1.11]{DSW12} we do not get any guarantee about the fraction of points from $\cB$ in the low-dimensional space. 

\section{Strengthening Case~\ref{case:2} of Theorem~\ref{thm:structure} }

The following claim strengthens \autoref{thm:structure}\ref{case:2} by providing more information on the polynomial in the radical.

\begin{claim}\label{cla:case3-strong}
	Let $P,Q$ and $T$ be irreducible homogeneous quadratic polynomials, such that $T\in \sqrt{\ideal{P,Q}}$. Furthermore, assume that they satisfy \autoref{thm:structure}\ref{case:2} and not any other case, that is, there are linear forms $v_1,v_2$ such that $T,P,Q \in \ideal{v_1,v_2}$. Finally, assume $\MS(P)\not \subseteq \MS(Q)$. Then there are linear forms $v'_1, v'_2 \in \spn{v_1,v_2}$ such that the following holds:
	\begin{itemize}
		\item $P = v'_1\ell + {v'}_2^2$ for some linear form $\ell$.
		\item $Q= v'_1u-{v'}_2^2$ for some linear form $u$.
		\item $T = {v'}_2(\ell+u) + \alpha P + \beta Q$ for some constants $\alpha,\beta \in \C$,
	\end{itemize}

	where the qualities holds up to a constant non zero factor.
\end{claim}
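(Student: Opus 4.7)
My plan is to first establish $\dim(\MS(P)) = 3$ (equivalently, $a \parallel b$ where $a, b$ are the components perpendicular to $\spn{v_1, v_2}$ in a decomposition $P = v_1 A + v_2 B$), and then to extract the normal form $P = v'_1\ell + (v'_2)^2$ directly; afterwards I would match $Q$ and $T$ using the pencil structure.

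Write $P = v_1A + v_2B$, $Q = v_1C + v_2D$, and $T = v_1E + v_2F$ for linear forms, and denote by $a, b, c, d, e, f$ their components perpendicular to $\spn{v_1, v_2}$. The decomposition $P = v_1 A + v_2 B$ gives $\rank_s(P) = 2$ and $\MS(P) = \spn{v_1, v_2, a, b}$; since the target form has $\dim(\MS) \leq 3$, the first main step is to show $a \parallel b$. To do this, I would reduce $T^k \in \ideal{P, Q}$ modulo $v_2$: the containment becomes $v_1^{k-1}(E|_{v_2=0})^k \in \ideal{A|_{v_2=0}, C|_{v_2=0}}$ in $\C[\vx]/\ideal{v_2}$. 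When $A|_{v_2=0}, C|_{v_2=0}$ are linearly independent, the right-hand ideal is prime and does not contain $v_1$ (one checks that $v_1 \in \spn{A|_{v_2=0}, C|_{v_2=0}}$ fails whenever $a, c$ are independent in $V^\perp$), forcing $E|_{v_2=0} \in \spn{A|_{v_2=0}, C|_{v_2=0}}$. Lifting, together with the symmetric reduction modulo $v_1$ and the failure of Cases~(i) and (ii), deduces $a \parallel b$ and $c \parallel d$ after careful accounting. In the degenerate subcases where these reductions produce non-prime ideals (some of $a, b, c, d$ dependent), a more direct argument shows that a linear combination of $P$ and $Q$ is divisible by a linear form in $\spn{v_1, v_2}$, from which the same conclusion follows.

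Once $b = \gamma a$, setting $v'_1 := v_1 + \gamma v_2 \in \spn{v_1, v_2}$ gives $P \equiv d\, v_2^2 \pmod{v'_1}$ for some scalar $d$; irreducibility of $P$ forces $d \neq 0$, and setting $v'_2 := \sqrt{d}\, v_2 \in \spn{v_1, v_2}$ yields $P = v'_1 \ell + (v'_2)^2$ for some linear form $\ell$. Running the analogous argument for $Q$ produces a decomposition $Q = w'_1 u + d'(w'_2)^2$; matching via $\MS(P) \not\subseteq \MS(Q)$, the uniqueness of rank-$2$ decompositions (\autoref{cla:irr-quad-span}), and the exclusion of Case~(ii), forces $w'_1 = v'_1$ and $w'_2 = v'_2$ up to nonzero scalars and $d' = -1$, giving $Q = v'_1 u - (v'_2)^2$. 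For the formula for $T$: since $T \in \sqrt{\ideal{P, Q}}$ and $T \in \ideal{v_1, v_2}$, analyzing $T - \alpha P - \beta Q$ on the non-trivial component of $V(P, Q)$ (which lies in $V(v'_1)$ up to $V(v_1, v_2)$) forces the residue to be a multiple of $v'_2 \cdot (\ell + u)$, using the identity $v'_1 (\ell + u) = P + Q$ that comes from summing the expressions for $P$ and $Q$.

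The main obstacle will be the first step: establishing $a \parallel b$ rigorously requires a case analysis on the dependencies among $a, b, c, d$, and it is essential to verify that the hypotheses $\MS(P) \not\subseteq \MS(Q)$ and ``only Case~(iii) holds'' are applied consistently in each subcase so that configurations inadvertently forcing Case~(i) or Case~(ii) back into play are ruled out. The remaining steps, once the rank constraint is in hand, reduce to linear algebra inside $\spn{v_1, v_2}$ and $\MS(P) \cap \MS(Q)$.
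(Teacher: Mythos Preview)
Your overall strategy—first pin down $\dim\MS(P)=3$ (i.e.\ $a\parallel b$), then normalize $P$, then $Q$, then $T$—is a natural plan, but the first step as written has a real gap. The reductions of $T^k\in\ideal{P,Q}$ modulo $v_1$ and modulo $v_2$ give constraints on $E$ and $F$ (the components of $T$): in the nondegenerate case you obtain $e\in\spn{a,c}$ and $f\in\spn{b,d}$. These are constraints on $T$, not on $P$, and two reductions alone are not enough to force the four scalars to match up so that $T-\alpha P-\gamma Q\in\C[v_1,v_2]_2$; for that you would need to vary over \emph{all} $w\in\spn{v_1,v_2}$ (equivalently, extract the $\lambda^2$ coefficient of the identity $(-\lambda e+f)\wedge(-\lambda a+b)\wedge(-\lambda c+d)=0$). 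Even once you have $T-\alpha P-\gamma Q\in\C[v_1,v_2]_2$, what this yields in the fully generic case ($a,b,c,d$ spanning a $4$-dimensional subspace of $(\spn{v_1,v_2})^\perp$) is a contradiction to ``not Case~(i)''—it does not directly produce $a\parallel b$. You still have to carry out the full case analysis on the possible linear dependencies among $a,b,c,d$, which your proposal acknowledges but does not perform; this is where almost all the content lies. Separately, the ``analogous argument for $Q$'' does not have the symmetric hypothesis $\MS(Q)\not\subseteq\MS(P)$ available, and the claim that the normal form of $Q$ must use the \emph{same} $v'_1,v'_2$ as $P$ (and with the opposite sign on $(v'_2)^2$) is asserted but not argued.

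The paper's proof proceeds quite differently and avoids trying to establish $\dim\MS(P)=3$ as a first step. It splits on whether $\spn{\ell_1,\ell_2}$ meets $\MS(Q)$ (not $\spn{v_1,v_2}$). When the intersection is trivial, it exhibits an explicit point in $\cZ(P)\cap\cZ(Q)$ at which $T$ does not vanish, contradicting $T\in\sqrt{\ideal{P,Q}}$. In the remaining case it does not yet know $\dim\MS(P)=3$; instead, after normalizing so that $P=v_1\ell_1+v_2 v$ with $v\in\MS(Q)$ and $\ell_1\notin\MS(Q)$, it proves a short but crucial lemma: any element of $\sqrt{\ideal{P,Q}}$ that lies in $\C[\MS(Q)]$ must already lie in $\ideal{Q}$. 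Applying this to the polynomial $v_2\cdot P - v_1\cdot T'$ (which is visibly in $\C[\MS(Q)]$) forces a factorization that pins down $Q$ in the required shape and, in the same stroke, forces $P$ into the form $v'_1\ell+(v'_2)^2$. The structure of $T$ then falls out from the computation of $T'$. So the paper reaches $\dim\MS(P)=3$ only at the end, as a by-product of the algebraic identity produced by that lemma, rather than as an opening move.
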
 

\begin{proof}
	Denote $P = v_1 \ell_1 + v_2 \ell_2$, $Q = v_1u_1+v_2u_2$ and  $V = \spn{v_1,v_2,u_1,u_2}=\MS(Q)$. From the fact that $Q$ is irreducible it follows, without loss of generality, that $u_1\notin \spn{v_1,v_2}$. 
	
	\paragraph{\textbf{Case 1: $\spn{\ell_1,\ell_2}\cap V = \{\vec{0}\}$:}}
	
%
%
	
	First we note that since $P$ is irreducible we must have that $\ell_1$ and $\ell_2$ are linearly independent and therefore, $\MS(P)=\spn{v_1,v_2,\ell_1,\ell_2}$ and $\dim(\MS(P))=4$. Observe further that $Q|_{\ell_1=\ell_2=0}$ is irreducible (as $\spn{\ell_1,\ell_2}\cap V = \{\vec{0}\}$). 
	
	 
\begin{claim}	 
	There exist $\alpha\in\C$ and linear forms  $v',v'' \in \spn{v_1,v_2}$ such that $T=\alpha Q + v'\ell_1 + v''\ell_2$.
\end{claim}

\begin{proof}
Since $P \equiv_{\ell_1,\ell_2}0$ we get that $T|_{\ell_1=\ell_2=0} \in \ideal{Q|_{\ell_1=\ell_2=0}}$. Irreducibility of $Q|_{\ell_1=\ell_2=0}$ implies that for some $\alpha\in\C$ and linear forms $w'$ and $w''$, we have that $T=\alpha Q + w'\ell_1 + w''\ell_2$. Denote $T' = w'\ell_1 + w''\ell_2\in \ideal{\ell_1,\ell_2}$. 
	 As $T' = T-\alpha Q \in \ideal{v_1,v_2}$ it follows that $w'\ell_1 \equiv_{v_1,v_2}- w''\ell_2$. 
 The assumption $\spn{\ell_1,\ell_2}\cap V = \{\vec{0}\}$ implies that $w' \in \spn{\ell_2,v_1,v_2}$ and $w'' \in \spn{\ell_1,v_1,v_2}$. 
Denote $w'=\beta' \ell_2 + v'$ and $w''=\beta'' \ell_1+v''$ for $v',v''\in\spn{v_1,v_2}$. Thus, $T'=(\beta'+\beta'')\ell_1 \ell_2 + \ell_1\cdot v' + \ell_2\cdot v''$. Since $T' \in \ideal{v_1,v_2}$ it follows that $\beta'+\beta''=0$. Hence,  $T'= \ell_1\cdot v' + \ell_2\cdot v''$ as claimed.
 \end{proof}

	Since  $T' \in \sqrt{\ideal{P,Q}}\subseteq \ideal{v_2,u_1,\ell_1}$ it follows that $\ell_2\cdot v''\in \ideal{v_2,u_1,\ell_1}$. Note that since $\spn{\ell_1,\ell_2}\cap \MS(Q) = \{\vec{0}\}$ it must be the case that $v'' \in \spn{v_2,u_1,\ell_1}$. The same argument shows that $v'' \in \spn{v_2,u_1}$ and since $u_1\notin \spn{v_1,v_2}$ we get that $v''\in\spn{v_2}$. Denote $v''=\beta'' v_2$. We have that $T'-\beta''P = \ell_1\cdot(v'-\beta'' v_1) = \ell_1\cdot(\gamma v_1 + \delta v_2)$, for some $\gamma,\delta\in\C$. From the assumption that $T\not\in\spn{P,Q}$ we see that at least one among $\gamma,\delta$ is not zero. 
	
	Let $\lambda$ be such that both $\lambda \cdot \gamma  + \delta \neq 0$ and $\lambda u_1 + u_2 \not\in\spn{v_1,v_2}$ hold. As $u_1,v_1,v_2,\ell_1,\ell_2$ are linearly independent it follows that there is an assignment $\vaa$ such that  $v_2(\vaa)=\ell_1(\vaa) = 1$, and $v_1(\vaa) = -\ell_2(\vaa) = \lambda$. Observe that $P(\vaa) = 0$ and $T'(\vaa) = (\gamma \lambda+\delta) \neq 0$. Also note that our choice of $\lambda$ implies that we can choose $\vaa$ so that we also have $Q(\vaa) = 0$. Consequently, $P( \vaa)=Q( \vaa) = 0$ and $T(\vaa)=T'(\vaa)\neq 0$, in contradiction to the fact that $T \in \sqrt{\ideal{Q,P}}$.
	 
	
	\paragraph{\textbf{Case 2: $\spn{\ell_1,\ell_2}\cap V \neq \{\vec{0}\}$:}}
	
	From the requirement  $\MS(P)\not\subseteq\MS(Q)$ we can assume, without loss of generality, that $\ell_1\notin \MS(Q)$. It follows that  $v_1\notin \spn{\ell_1,u_1,v_2}$. 
As 	$\spn{\ell_1,\ell_2}\cap V \neq \{\vec{0}\}$ we have that for some $\alpha'\in \C$ and $v\in V$, $\ell_2=\alpha' \ell_1+v$. By applying a change of basis to $\spn{v_1,v_2}$ (if necessary), we can assume, without loss of generality, that $\alpha=0$. Concluding, we have that $P = v_1\ell_1 + v_2v$ and $Q= v_1u_1+v_2u_2$, where $v\in V$. Note that $v\not \in \spn{v_1}$ as otherwise $P$ would be reducible. Observe further that   $\spn{\ell_1,u_1}\cap \spn{v_1,v_2} = \{0\}$.
	
	
	
	Denote $T= 	v_1t_1+v_2t_2$. As $T \in \sqrt{\ideal{Q,P}} \subseteq \ideal{v_2,\ell_1,u_1}$, it follows that $t_1 \in \spn{v_2,\ell_1,u_1}$. Denote $t_1= \alpha \ell_1 + \beta u_1 +\gamma v_2$. Let $T' = T -\alpha P -\beta Q = v_2(\gamma v_1+t_2-\alpha v -\beta u_2)$. Let $t = \gamma v_1+t_2-\alpha v -\beta u_2$. As $T\not\in\spn{P,Q}$, we have that $t\neq 0$. The following claim implies that  $t \notin V$. 
	
	\begin{claim}\label{cla:ideal-V}
		Let $A \in \sqrt{\ideal{P,Q}}$ be a polynomial. If $A\in \C[V]$ then $A \in \ideal{Q}$.
	\end{claim}

\begin{proof}\sloppy
	 Observe that if for every $\vaa \in \cZ(Q) \setminus \cZ(v_1)$ it holds that $\vaa \in \cZ(A)$ then $\cZ(Q)= \overline{\cZ(Q) \setminus \cZ(v_1)}  \subseteq \cZ(A)$\footnote{We take closure with respect to Zariski topology.}, and thus by 
	Hilbert's Nullstellensatz it follows that $A \in \sqrt{\ideal{Q}} = \ideal{Q}$ as claimed. 
	
	So let $\vaa \in (\cZ(Q) \setminus \cZ(v_1))\setminus \cZ(A)$. As $\ell_1\not\in V$ and $v_1(\vaa)\neq 0$, we can choose $\vaa$ that in addition satisfies $P(\vaa) =0$. Thus, we have that $P(\vaa)=Q(\vaa) = 0$ but $A(\vaa) \neq 0$, contradicting the assumption $A \in \sqrt{\ideal{P,Q}}$. 
\end{proof}

Observe that if $t\in V$ then $T'=v_2\cdot t\in \sqrt{\ideal{P,Q}}$ and $\MS(T') \subseteq V$. The claim implies that $T'\in \ideal{Q}$ in contradiction (as $Q$ is irreducible and $T'\neq 0$).

We now note that $\ell_1\in\spn{V,t}$. Indeed, if this is not the case then whenever $v_1\neq 0$ we can make sure that $P$ vanishes (by substituting $\ell_1 = - v_2\cdot v/v_1$) without affecting the values of $Q$ and $T$. A similar argument as in the proof of the claim implies that in this case too we have $T\in\ideal{Q}$ in contradiction.
We can therefore assume that $t=\ell_1 + w$ for some  $w\in V$ (this is possible as we can rescale $v_2$ if needed).
 Summarizing,  we have that:
\begin{align*}
	P &= v_1\ell_1 + v_2v\\
	Q &= v_1u_1 + v_2u_2\\
	T' & = v_2(\ell_1 + w) = v_2\ell_1+v_2w
\end{align*}
Let $A = v_2\cdot P - v_1\cdot T' = v_2(v_1\cdot w-v \cdot v_2)$. Since $P$ is irreducible we have that $A\neq 0$. As $A$ depends only on the linear functions in $V$, and clearly $A\in\sqrt{\ideal{P,Q}}$, \autoref{cla:ideal-V} implies  $A\in \ideal{Q}$. Therefore, $v_2(v_1\cdot  w-v\cdot v_2)= A = aQ$ for some non zero linear form $a$. Rescaling if necessary, irreducibility of $Q$ implies that $v_2 = a$ and $  (v_1\cdot w-v\cdot v_2) = Q$. Consequently, $v_1( w- u_1) = v_2( u_2+ v)$, which in turns implies that there is a non-zero $\epsilon\in \C$ such that $\epsilon v_1 = ( u_2+ v)$ and $\epsilon v_2 = ( w- u_1)$. Hence, $u_1 =  w-\epsilon v_2$ and $u_2= \epsilon v_1 - v$. Therefore, $Q = v_1( w-\epsilon v_2) + v_2(\epsilon v_1 - v) =  v_1w - v_2v$. 
As $P,Q\in \ideal{v,v_1}$ we deduce that $T' = v_2(\ell_1 + w)\in \ideal{v,v_1}$. Hence, $v_2 \in \spn{v_1,v}$. As $v, v_2\not \sim v_1$ it follows that we can write $v = \delta v_2 + \gamma v_1$ and up to a proper rescaling $v = v_2 + \gamma v_1$. We conclude that 
\begin{align*}
P &= v_1\ell_1 + v_2(v_2 + \gamma v_1) = v_1(\ell_1 +\gamma v_2)+ v_2^2\\
Q &= v_1w - v_2(v_2 + \gamma v_1) = v_1(w -\gamma v_2)- v_2^2\\
T' &=  v_2(\ell_1 + w) = v_2(\ell_1 +\gamma v_2 + w -\gamma v_2),
\end{align*} 
as claimed.
\end{proof}

As a consequence of the claim we can deduce the following uniqueness property.

\begin{corollary}\label{cor:unique-T}
	Let $P,Q,Q', T$ be pairwise linearly independent irreducible quadratics such that $T\in \sqrt{\ideal{P,Q}}$. Let $T'$ be such that $T' \not \sim P,Q,Q'$ and such that   $T'\in \sqrt{\ideal{P,Q'}}$. Assume further that $P \in \ideal {\MS(Q)+\MS(Q')}$ but $\MS(P) \not \subseteq \MS(Q)+\MS(Q')$. Then $T\neq T'$. In addition, $\MS(T),\MS(T')\not\subseteq{\MS(Q)+\MS(Q')}$.
\end{corollary}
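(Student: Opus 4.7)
Let $W := \MS(Q)+\MS(Q')$. The hypothesis gives $P \in \ideal{W}$ and $\MS(P) \not\subseteq W$; in particular $\MS(P) \not\subseteq \MS(Q)$ and $\MS(P) \not\subseteq \MS(Q')$, so \autoref{cla:case3-strong} is applicable to each triple $(P,Q,T)$ and $(P,Q',T')$ whenever Case~\ref{case:2} of \autoref{thm:structure} occurs. My plan is to establish both conclusions by case analysis driven by \autoref{thm:structure} applied to the two triples, using \autoref{cla:case3-strong} to extract explicit formulas whenever Case~\ref{case:2} is the relevant one.

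First I would show $\MS(T) \not\subseteq W$ (the argument for $T'$ being symmetric). In Case~\ref{case:span}, $T = \alpha P + \beta Q$ with both $\alpha,\beta \neq 0$ since $T\not\sim P,Q$. Solving for $P$ yields $\MS(P) \subseteq \MS(T) + \MS(Q) \subseteq \MS(T)+W$, so $\MS(T) \not\subseteq W$. In Case~\ref{case:rk1}, some $\alpha P + \beta Q = \ell^2$ with $\alpha \neq 0$ (else $Q$ is reducible), whence $\MS(P) \subseteq \spn{\ell}+\MS(Q)$ forces $\ell \notin W$; by Nullstellensatz $\sqrt{\ideal{P,Q}} = \sqrt{\ideal{\ell,Q}}$, and applying (the argument behind) \autoref{cla:case-rk1-gen} to $T$ shows $T \equiv \gamma Q \pmod{\ell}$ for some $\gamma \in \C$, so $\ell \in \MS(T)$. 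In Case~\ref{case:2}, \autoref{cla:case3-strong} gives
\[
P = v'_1\ell + {v'_2}^2,\qquad Q = v'_1 u - {v'_2}^2,\qquad T = v'_2(\ell+u) + \alpha P + \beta Q,
\]
and expanding the expression for $T$ gives $T = \ell(v'_2+\alpha v'_1) + u(v'_2+\beta v'_1) + (\alpha-\beta){v'_2}^2$. Since $v'_1,v'_2 \in \MS(Q)\subseteq W$ are linearly independent and $\ell \notin W$ (forced by $\MS(P)\not\subseteq W$), the coefficient of $\ell$ is a nonzero linear form outside $\spn{v'_1,v'_2}$, so $\ell \in \MS(T)$ by \autoref{cla:ind-rank} applied suitably, hence $\MS(T) \not\subseteq W$.

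For $T \neq T'$, I would argue by contradiction. Assume $T = T'$, so $T$ lies in both radicals and each triple falls in one of Cases~\ref{case:span}--\ref{case:2}. If both triples are in Case~\ref{case:span} then $T \in \spn{P,Q}\cap\spn{P,Q'}$; since $Q,Q'$ are pairwise linearly independent and neither is in $\spn{P}$, this intersection is $\spn{P}$, contradicting $T\not\sim P$. In the remaining mixed scenarios I would use the explicit formulas from \autoref{cla:case3-strong} (and the analogous simple formulas in Cases~\ref{case:span},\ref{case:rk1}) to equate the two representations of $T$, producing a relation of the form $(\alpha-\tilde\alpha)P + \beta Q - \tilde\beta Q' = \tilde v'_2(\tilde\ell+\tilde u) - v'_2(\ell+u)$ whose right-hand side has $\rank_s \leq 2$, and to combine this with the uniqueness of the minimal-space representation of $P$ (\autoref{cla:irr-quad-span}) applied to the two decompositions $P = v'_1\ell+{v'_2}^2 = \tilde v'_1\tilde\ell+\tilde v_2'^2$ to force a dependency among $P,Q,Q'$ contradicting their pairwise linear independence.

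The main obstacle will be the bookkeeping in the last paragraph: the case analysis for $T=T'$ must cover all nine combinations of Cases~\ref{case:span}--\ref{case:2}, and Case~\ref{case:2} combined with Case~\ref{case:rk1} is the most delicate since \autoref{cla:case3-strong} has the "not any other case" hypothesis, so we have to either rule out the overlap or argue directly using the $\ell$-reduction. The key leverage throughout is that the "outside-$W$" direction of $P$ (the linear form $\ell$ in the minimal representation) is essentially canonical by \autoref{cla:irr-quad-span}, which ties together the structural information coming from the two different triples.
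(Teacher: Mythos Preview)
Your overall plan---case analysis via \autoref{thm:structure} and invoking \autoref{cla:case3-strong} for Case~\ref{case:2}---is the same as the paper's. But you miss the paper's first and most important move: the hypothesis $P\in\ideal{W}$ rules out Case~\ref{case:rk1} entirely. If $\alpha P+\beta Q=\ell^2$ then $\ell^2\in\ideal{W}$ (both $P$ and $Q$ lie there), hence $\ell\in W$, giving $\MS(P)\subseteq\MS(Q)+\spn{\ell}\subseteq W$, a contradiction. This collapses your nine-case analysis to four and, crucially, removes the obstacle you flag about \autoref{cla:case3-strong} requiring ``not any other case.'' You never use $P\in\ideal{W}$ in your argument, which is why Case~\ref{case:rk1} looks delicate to you when in fact it cannot occur.

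A couple of smaller issues. In the both-\ref{case:span} case, $\spn{P,Q}\cap\spn{P,Q'}$ need not be $\spn{P}$ from pairwise independence alone; the paper instead observes that $T$ lying in both spans forces $P\in\spn{Q,Q'}$ (else $Q\sim Q'$), whence $\MS(P)\subseteq W$. In your Case~\ref{case:2} argument for $\MS(T)\not\subseteq W$, the coefficient $v'_2+\alpha v'_1$ lies \emph{in} $\spn{v'_1,v'_2}\subseteq W$, not outside it; the correct point is simply that this coefficient is nonzero (since $v'_1\not\sim v'_2$) and $\ell\notin W$, so $\ell(v'_2+\alpha v'_1)\notin\C[W]_2$. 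Your appeal to \autoref{cla:ind-rank} is misplaced.

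Finally, the genuinely hard step is the \ref{case:2}--\ref{case:2} case, and your sketch (``force a dependency among $P,Q,Q'$'') understates what is needed. The paper equates the two \autoref{cla:case3-strong} decompositions $P=v_1\ell+v_2^2=u_1\ell'+u_2^2$ to deduce $\spn{v_1,v_2}=\spn{u_1,u_2}$, then carries out about a page of coordinate computations matching the two expressions for $T$, successively pinning down $\ell',u',u_2$ in terms of $\ell,u,v_2$ and the scalars, and finally concluding $Q'=u_1u'-u_2^2=u_1u-v_2^2=Q$. The contradiction is $Q=Q'$, not merely a linear dependence, and the route there is a careful elimination rather than a single rank observation.
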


\begin{proof}
	Set $V ={\MS(Q)+\MS(Q')}$. If $P,Q$ satisfy \autoref{thm:structure}\ref{case:rk1} then there is a linear form $\ell$ such that $\ell^2 = \alpha P+\beta Q\in \ideal{V}$. This implies $\ell\in V$ and therefore $\MS(P) \subseteq \MS(Q)+\spn{\ell}\subseteq V$, in contradiction. Similarly $P,Q'$ do not satisfy \autoref{thm:structure}\ref{case:rk1}.

	Assume towards a contradiction that $T=T'$.  
	
	From pairwise  independence, if $T \in \spn{P,Q'}\cap \spn{P,Q}$  then $P\in \spn{Q,Q'}$, and again we get $\MS(P)\subseteq V$, in contradiction. Thus, without loss of generality, $P$ and $Q$ satisfy \autoref{thm:structure}\ref{case:2} and not any other case of \autoref{thm:structure}.  \autoref{cla:case3-strong} implies  (up to rescaling of $P,Q$ and $T$) that for some $v_1,v_2,u\in V$ and $\alpha,\beta\in\C$:  	
	\begin{align}
	P &= v_1\ell+ v_2^2 \nonumber\\
	Q & = v_1u- v_2^2 \label{eq:Pv1}\\
	T &= v_2(\ell + u) + \alpha P + \beta Q \nonumber \;,
	\end{align}
	where $\ell\not\in V$.
	If $T \in \spn{P,Q'}$ we deduce that $v_2(\ell + u) + \alpha P + \beta Q = T=\gamma P+\delta Q'$. Rearranging we obtain that $\ell(v_2+(\alpha -\gamma)v_1) \in \C[V]_2$ which implies that  either $v_1\sim v_2$, in contradiction to the assumption that the polynomials are irreducible, or, that $\ell \in V$, in contradiction to the assumption that $\MS(P)\not \subseteq  V$. Thus, it follows that  $P$ and $Q'$ satisfy \autoref{thm:structure}\ref{case:2} and not any other case of \autoref{thm:structure}. Applying  \autoref{cla:case3-strong} again we get  that for some $u_1,u_2,u'\in V$ and $\nu,\alpha',\beta'\in\C$:	
	\begin{align}
	P &= u_1\ell'+ u_2^2 \nonumber\\
	Q' & = u_1u'- u_2^2 \label{eq:nu}\\
	T &= \nu(u_2(\ell' + u')) + \alpha' P + \beta' Q'.\nonumber
	\end{align}		
	Considering both representations of $P$ we deduce that $v_1 \ell + v_2^2 \in \ideal{u_1,u_2}$. As $\ell \notin \spn{u_1,u_2,v_1,v_2}\subseteq V$ it follows that $\spn{v_1,v_2} = \spn{u_1,u_2}$. Thus for some $\delta, \delta',\gamma,\gamma'\in\C$ we have $v_1 = \delta' u_1+\gamma' u_2$ and $v_2 = \delta u_1 + \gamma u_2$. Substituting we obtain 
	$ u_1\ell'+ u_2^2 = P = v_1\ell+ v_2^2=(\delta' u_1+\gamma' u_2)\ell + (\delta u_1 + \gamma u_2)^2 $.
	Thus,
	\[ u_1\ell'+  u_2^2=u_1(\delta'\ell+\delta^2 u_1 + 2\delta\gamma u_2) +u_2(\gamma'\ell+\gamma^2u_2)\]
	and therefore 
	\begin{equation}\label{eq:beta=0}
		u_1(\ell' - (\delta'\ell+\delta^2 u_1 + 2\delta\gamma u_2)) = u_2(\gamma'\ell+(\gamma^2-1)u_2)\;.	
	\end{equation}
	Since $u_1\not \sim u_2$, it follows that $\gamma'\ell+(\gamma^2-1)u_2\sim u_1$, which implies $\gamma' = 0$ (as otherwise we will have $\ell \in V$) and $\gamma=\pm 1$. Rescaling $u_2$ and both $u_1$ and $u$ if needed, we can assume without loss of generality that $\gamma=1$ and $\delta'=1$. Concluding, we have that  
	\[v_1=u_1 \quad \text{ and } \quad v_2 = \delta u_1 + u_2 \;.
	\]
	Equation~\eqref{eq:beta=0} now gives   $u_1(\ell' - (\ell+ \delta^2 u_1 + 2 \delta u_2))=0$ which implies 	
	\begin{equation}
		 \ell' = \ell+\delta^2 u_1 + 2\delta  u_2 = \ell +\delta(v_2+u_2) \;.\label{eq:l'}
		 \end{equation}	
Similarly, considering both representations of $T$ we get
\begin{equation} \nu(u_2(\ell' + u')) + \alpha' P + \beta' Q' =  T = (\delta u_1 + u_2)(\ell + u) + \alpha  P + \beta  Q \;. \label{eq:Q=Q'}
\end{equation}
	Rearranging and using Equation~\eqref{eq:l'} gives
	\[\ell \cdot \left((\nu  - 1) u_2 + (\alpha' - \delta-\alpha )u_1\right) \in\C[V]_2\;. \]
	As $\ell\not\in V$ and $u_1\not\sim u_2$, it must hold that
	\[\nu=1 \quad  \text{ and } \quad \alpha'=\delta+\alpha \;.\] 
	Substituting in Equation~\eqref{eq:Q=Q'} we get
		 \[
u_2(\delta(v_2+u_2)+u') + \delta v_2^2+ \beta'(u_1 u' - u_2^2)  = u v_2 + \beta(u_1 u -v_2^2)\;.\]
 Rearranging, 
 	\begin{align*}	 
	u'(u_2 + \beta' u_1) - u (v_2 +\beta u_1) = -(\beta+\delta)v_2^2 - u_2(\delta (v_2+u_2)  -\beta' u_2) 
\end{align*} 
		and as the second term is a homogeneous quadratic in two linear forms it follows that it is of the form $ww'$ for $w,w'\in \C[\spn{v_2,u_2}]_2$. Therefore, there is a non trivial linear combination $u'= \epsilon  u + u_3$ for $u_3\in \spn{v_2,u_2}$. Substituting we get that 
		\[ u (\epsilon u_2+\epsilon\beta' u_1 - (\delta+\beta)u_1-u_2)\in \spn{u_1,u_2} \;.\]
		If $u\in\spn{u_1,u_2}$ then $Q$ is reducible in contradiction. Therefore, as $u_1 \not\sim u_2$, it must hold that 
		\[\epsilon=1 \; (\text{hence } u'=u+u_3), \quad \text{ and } \quad \beta'=\beta+\delta \;.\] 
		Looking again at Equation~\eqref{eq:Q=Q'} we obtain
		\begin{align*}
	u_2(\ell'+u' )+\delta(P+Q') &= v_2(\ell + u)+\beta(Q-Q')\\
	(u_2+\delta u_1)(\ell'+u' )  &= v_2 (\ell + u)+\beta(Q-Q')\\
	v_2 (\ell'+u' -\ell-u)  &=  \beta(u_1(u-u')+u_2^2-v_2^2 )\\
	v_2 (\delta (v_2  + u_2) +u_3)  &=  -\beta u_1 ( u_3+\delta (v_2+u_2) )\;.
\end{align*}
	Setting $u_1=0$ we conclude that for some $\mu$, $u_3+2\delta u_2 = \mu u_1$. Therefore, 
			\begin{align*}
	v_2 (\delta (v_2  + u_2) +u_3)  &=  -\beta u_1 ( u_3+\delta (v_2+u_2) )\\
	u_3(v_2+\beta u_1) &=-\delta(v_2+u_2)( v_2 +\beta u_1) \;.
\end{align*}
	As $v_2\not \sim u_1$ it follows that 
	\begin{align*}
		\mu u_1 - 2\delta u_2 &=u_3=-\delta(v_2+u_2)\\
		\mu u_1 &= \delta(u_2-v_2)=-\delta^2 u_1
	\end{align*}
	Hence, $\delta^2+\mu=0$. Therefore, 
	\begin{align*}
		u_1 u' &= u_1 u + u_1u_3\\
		&=u_1 u - \delta^2 u_1^2-2\delta u_1 u_2\\
		&= u_1 u -v_2^2+u_2^2\;.
	\end{align*}
	Rearranging we get 
		$Q'=u_1 u' - u_2^2 = u_1 u - v_2^2=Q$,
		in contradiction.
%
\end{proof}

We finish this section by formulating the improvement for \autoref{thm:structure} which follows immediately from \autoref{cla:case3-strong}

\begin{corollary}[Improvement of Theorem 1.10 of \cite{ShpilkaSG}]\label{cor:case-3-strong}
	Let $A,B$ and $C$ be $n$-variate, homogeneous, quadratic polynomials, over $\C$, such that $C\in\sqrt{\ideal{A,B}}$. Then, one of the following cases must hold:
	\begin{enumerate}[label={(\roman*)}]
		\item 
		$C$ is in the linear span of $A$ and $B$.  
		\item 
		There exists a non trivial linear combination of the form $\alpha A+\beta B = \ell^2$ for some linear form $\ell$. 
		\item 
		If none of the above hold, then there exist two linear forms $\ell_1$ and $\ell_2$ such that  $A, B,C\in\ideal{\ell_1,\ell_2}$. Furthermore, we have that either $\MS(P)\subseteq\MS(Q)$ or
			\begin{itemize}
				\item $A = \ell_1 a + \ell_2^2$ for some linear form $a$.
				\item $B= \ell_1b-\ell_2^2$ for some linear form $b$.
				\item $C = {\ell}_2(a+b) + \alpha A + \beta B$ for some constants $\alpha,\beta \in \C$.
			\end{itemize}
	 	\end{enumerate}
\end{corollary}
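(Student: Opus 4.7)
The plan is a direct splicing of \autoref{thm:structure} and \autoref{cla:case3-strong}. Given $A, B, C$ with $C \in \sqrt{\ideal{A,B}}$, I would first invoke \autoref{thm:structure}, which hands us one of three cases. Cases \ref{case:span} and \ref{case:rk1} of that theorem are verbatim items (i) and (ii) of the corollary, so nothing further is required there.

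The work is concentrated in the remaining situation: only case \ref{case:2} of \autoref{thm:structure} applies, and in particular neither case \ref{case:span} nor case \ref{case:rk1} holds. In this regime, \autoref{thm:structure} already provides linear forms $\ell_1, \ell_2$ with $A, B, C \in \ideal{\ell_1, \ell_2}$, matching the coarse conclusion of item (iii). To promote this to the three-bullet description whenever $\MS(A) \not\subseteq \MS(B)$, I would feed the triple $(A, B, C)$ straight into \autoref{cla:case3-strong}: its hypotheses --- namely $C \in \sqrt{\ideal{A,B}}$, case \ref{case:2} holding exclusively, and the noninclusion $\MS(A) \not\subseteq \MS(B)$ --- are exactly the assumptions we have just secured, and its conclusion delivers the promised decomposition $A = \ell'_1 a + (\ell'_2)^2$, $B = \ell'_1 b - (\ell'_2)^2$, $C = \ell'_2(a+b) + \alpha A + \beta B$ with $\ell'_1, \ell'_2 \in \spn{\ell_1, \ell_2}$.

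The one subtlety to verify is irreducibility: \autoref{cla:case3-strong} presumes all three polynomials are irreducible, while the corollary makes no explicit assumption of this kind. This is not a real obstacle. If $A$ had $\rank_s(A) = 1$, say $A = \ell^2$, then the trivial combination $1\cdot A + 0\cdot B = \ell^2$ would realize case \ref{case:rk1}, contradicting the assumption that only case \ref{case:2} applies. A factorization $A = a_1 a_2$ with independent factors can be handled similarly: since $\ideal{\ell_1,\ell_2}$ is prime, one of $a_1, a_2$ lies in $\spn{\ell_1, \ell_2}$, and then $C \in \sqrt{\ideal{A,B}}$ together with the vanishing behavior of $A$ on the hyperplane $a_1 = 0$ forces a collapse into case \ref{case:span}. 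The same reasoning applies symmetrically to $B$ and $C$. Once irreducibility is secured, \autoref{cla:case3-strong} closes the argument, so the entire technical weight of this corollary already sits in the claim that precedes it.
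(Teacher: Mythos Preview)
Your approach is the same as the paper's, which simply records that the corollary ``follows immediately from \autoref{cla:case3-strong}'': apply \autoref{thm:structure}, and in the residual case~\ref{case:2} feed the triple into \autoref{cla:case3-strong}.

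You go further than the paper by attempting to verify the irreducibility hypothesis of \autoref{cla:case3-strong}, which the corollary does not explicitly assume. Your treatment of the square case ($A=\ell^2$ lands in case~\ref{case:rk1}) is correct. However, your claim that a factorization $A=a_1a_2$ with independent factors ``forces a collapse into case~\ref{case:span}'' is not right. Take $A=x_1x_2$, $B=x_1x_3+x_2x_4$, $C=x_1x_3+2x_2x_4$: one checks that $C\in\sqrt{\ideal{A,B}}$, that no nontrivial combination $\alpha A+\beta B$ is a square, and that $C\notin\spn{A,B}$, yet $A$ is reducible with independent factors. (The corollary is still satisfied here because $\MS(A)=\spn{x_1,x_2}\subseteq\spn{x_1,x_2,x_3,x_4}=\MS(B)$, landing in the ``either'' branch of item~(iii); but your argument does not establish that.) The paper does not address this point at all; given that the corollary is only ever applied to irreducible quadratics in this work, and that the undefined $P,Q$ in the statement are leftover notation from \autoref{cla:case3-strong}, irreducibility is almost certainly an implicit standing assumption and your extra paragraph is more scrupulous than the source --- just not quite airtight in the non-square reducible case.
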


\section{Robust Sylvester-Gallai theorem for quadratic polynomials.}

We divide $\cQ = \cQ_1\cup\cQ_2\cup \cQ_3$ as following:


\begin{equation}\label{eq:Q_1}
\cQ_1 =\left\lbrace Q \in \cQ \;\middle|\;
\begin{tabular}{@{}l@{}}
Q satisfies  \autoref{thm:structure}\ref{case:span} with  at least\\$\delta/100$ fraction of the polynomials in $\cQ$
\end{tabular}
\right\rbrace,
\end{equation}
\begin{equation}\label{eq:Q_2}
\cQ_2 =\left\lbrace Q \in \cQ \;\middle|\;
\begin{tabular}{@{}l@{}}
Q satisfies  \autoref{thm:structure}\ref{case:rk1} with  at least\\$\delta/100$ fraction of the polynomials in $\cQ$
\end{tabular}
\right\rbrace,
\end{equation}
\begin{equation}\label{eq:Q_3}
\cQ_3 =\left\lbrace Q \in \cQ \;\middle|\;
\begin{tabular}{@{}l@{}}
Q satisfies  \autoref{thm:structure}\ref{case:2} with  at least\\$\delta/100$ fraction of the polynomials in $\cQ$
\end{tabular}
\right\rbrace.
\end{equation}
We will also use the following notation: Let $Q\in \cQ$, and $t\in \{(i),(ii),(iii)\}$ we denote \[\Gamma_t(Q) = \{P\in \cQ \mid Q,P \text{ satisfiy case t of \autoref{thm:structure}}\}\,.\]
Finally we set $\cQ_1 = \cQ_1 \setminus (\cQ_2\cup \cQ_3)$. This implies that if $P\in \cQ_1$ then at least a $\delta/100$ fraction of the polynomials in $\cQ$  satisfy  \autoref{thm:structure}\ref{case:span} with $P$ and no other case. 

\begin{observation}\label{obs:graph}
The definition of $\Gamma_t$ naturally defines an undirected graph with an edge between $P$ and $Q$ if for some $t$, $Q\in\Gamma_t(P)$ (which is equivalent to saying $P\in\Gamma_t(Q)$). Thus, when we speak of ``edges'' and ``neighbors'' this graph is the one that we refer to.
\end{observation}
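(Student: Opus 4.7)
The content of the observation is that the binary relation ``$P$ and $Q$ satisfy case $t$ of \autoref{thm:structure}'' is symmetric in $P$ and $Q$, so that $Q \in \Gamma_t(P)$ if and only if $P \in \Gamma_t(Q)$, and hence the three relations $\Gamma_{(i)}, \Gamma_{(ii)}, \Gamma_{(iii)}$ (and their union, the PSG-neighbor relation) do define an undirected graph on the vertex set $\cQ$. The plan is therefore to inspect each of the three cases in \autoref{thm:structure} (equivalently \autoref{cor:case-3-strong}) and confirm that the characterizing condition on the ``assumption'' pair $(A,B)$, together with the existence of a witness $C\in\cQ\setminus\{A,B\}$, is invariant under swapping $A$ and $B$.

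Concretely, first I would observe that Case~\ref{case:span} asks for $C\in\spn{A,B}$, which is symmetric because $\spn{A,B}=\spn{B,A}$. Next, Case~\ref{case:rk1} requires a nontrivial linear combination $\alpha A+\beta B=\ell^2$, and swapping $A$ with $B$ simply swaps the roles of $\alpha$ and $\beta$, so again the condition is symmetric. Finally, Case~\ref{case:2} asks for linear forms $\ell_1,\ell_2$ with $A,B,C\in\ideal{\ell_1,\ell_2}$, which is manifestly symmetric in $A$ and $B$. Therefore each $\Gamma_t$ is a symmetric binary relation on $\cQ$, and the same is true of $\bigcup_t\Gamma_t$, which is exactly the PSG-neighbor relation.

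From this it follows at once that we may speak of an undirected graph on $\cQ$ whose edges are the PSG-pairs, and the hypothesis of \autoref{thm:psg-robust} is the statement that every vertex has degree at least $\delta\cdot|\cQ|$ in this graph. I would also remark in passing that the sets $\Gamma_t(P)$ need not be disjoint across $t$, since a single pair may witness more than one case (for instance, Case~\ref{case:rk1} can coexist with Case~\ref{case:2}); the disjointness needed for the tripartition $\cQ_1\cup\cQ_2\cup\cQ_3$ in the main argument is forced by the later reassignment $\cQ_1:=\cQ_1\setminus(\cQ_2\cup\cQ_3)$, together with a pigeonhole over $t$ (every vertex has at least $\delta m$ PSG-neighbors, so at least one of the three $\Gamma_t$-degrees exceeds $\delta m/3\ge \delta m/100$).

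There is no genuine obstacle to the observation itself: it is a bookkeeping statement fixing notation for the case analysis to come. The real difficulty lies downstream, in using this graph structure together with \autoref{cor:case-3-strong}, \autoref{cor:unique-T}, the robust-SG result \autoref{thm:lin-rob-space}, the bipartite variant \autoref{cla:case-1-cut}, and the projection-mapping machinery of \autoref{sec:z-map} to control each of the four subsets $\LRV$, $\LRIdealV$, $\HRV$, $\HRIdealV$ described in the proof outline, ultimately yielding the $\poly(1/\delta)$ dimension bound in \autoref{thm:psg-robust}.
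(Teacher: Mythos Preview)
Your proposal is correct. The paper does not give a proof of this observation at all; it is stated as a self-evident piece of bookkeeping, and your case-by-case check that each of \autoref{thm:structure}\ref{case:span}, \ref{case:rk1}, \ref{case:2} is symmetric in the two polynomials is exactly the justification the paper leaves implicit.
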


Throughout the proof, we will use the following simple claim.

\begin{claim}\label{cla:edge-removal}
	Let $P, T\in \cQ$.  Removing $T$ from $\cQ$, causes the removal of at most two polynomials from $\Gamma_{(i)}(P)$, and this happens only in the case that  $P\in \Gamma_{(i)}(T)$ and $|\cQ\cap\spn{P,T}|=3$. 
\end{claim}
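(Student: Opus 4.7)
The plan is to track, for each $Q\in\Gamma_{(i)}(P)$ with $Q\neq T$, what conditions force $Q$ out of $\Gamma_{(i)}(P)$ when $T$ is deleted from $\cQ$, and then separately account for the possible loss of $T$ itself. By definition, $Q\in\Gamma_{(i)}(P)$ says that there exists a witness $R\in\cQ\setminus\{P,Q\}$ lying in the two-dimensional subspace $\spn{P,Q}$ (this is the content of Case~\ref{case:span} of \autoref{thm:structure}). Consequently, removing $T$ can eject $Q\neq T$ from $\Gamma_{(i)}(P)$ only when $T$ was the unique such witness, i.e., $\cQ\cap\spn{P,Q}\setminus\{P,Q\}=\{T\}$.

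The key step is the observation that in such a situation $T\in\spn{P,Q}$ forces $\spn{P,Q}=\spn{P,T}=:L$, and the uniqueness condition becomes $|\cQ\cap L|=3$. Hence every $Q\neq T$ that is removed from $\Gamma_{(i)}(P)$ lies in the single plane $L$, and the set $\cQ\cap L\setminus\{P,T\}$ consists of exactly one element in this case, so at most one such $Q$ exists. Separately, $T$ itself is removed from $\Gamma_{(i)}(P)$ precisely when $T\in\Gamma_{(i)}(P)$, which is equivalent to $\cQ\cap L\setminus\{P,T\}\neq\emptyset$, i.e., $|\cQ\cap L|\geq 3$. Summing the two contributions yields at most two removals, with equality if and only if both $T\in\Gamma_{(i)}(P)$ and $|\cQ\cap\spn{P,T}|=3$, which is exactly the statement of the claim.

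There is no real obstacle here beyond the bookkeeping inside the plane $L=\spn{P,T}$. Because Case~\ref{case:span} is a statement purely about linear spans, we never need to invoke the deeper analytic content of \autoref{thm:structure}; the entire argument stays inside the two-dimensional subspace $L$, where the pairwise linear independence of the quadratics in $\cQ$ keeps them as distinct one-dimensional directions in $L$, making the count purely combinatorial.
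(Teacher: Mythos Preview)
Your proof is correct and follows essentially the same approach as the paper's: both reduce to observing that any $Q\neq T$ removed from $\Gamma_{(i)}(P)$ must have $T$ as its unique witness, forcing $\spn{P,Q}=\spn{P,T}$ and $|\cQ\cap\spn{P,T}|=3$. If anything, your version is slightly more careful than the paper's in explicitly separating the removal of $T$ itself from the removal of other neighbors.
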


\begin{proof}
	First, note that for $Q_1,Q_2,Q_3\in \cQ$ if $Q_3\in \spn{Q_1,Q_2}$, then for every $k\neq j\in [3]$, $Q_k\in \Gamma_{(i)}(Q_j)$. In particular, if $P\not\in \Gamma_{(i)}(T)$, then removing $T$ from $\cQ$ does not affect $\Gamma_{(i)}(P)$.
	
	Let $P\in \Gamma_{(i)}(T)$. By the argument above, if $|\cQ\cap\spn{P,T}|>3$ then removing $T$ does not affect $\Gamma_{(i)}(P)$. Thus, the only case the $\Gamma_{(i)}(P)$. is affected is when $|\cQ\cap\spn{P,T}|=3$ and in this case the third polynomial in the span is removed from $\Gamma_{(i)}(P)$. 
\end{proof}


 
The proof of \autoref{thm:psg-robust} is organized as follows. In Section~\ref{sec:bound-Q2} we bound the dimension of $\cQ_2$. Specifically, we prove the following claim. 
 
\begin{claim}\label{cla:Q_2-dim}
There exist a subset $\cI\subseteq \cQ_2$ of size $|\cI| = O({1}/{\delta})$, and a linear space of linear forms $V'$ such that $\dim(V') = O({1}/{\delta^2})$ such that $\cQ_2\subset \spn{\cI, \C[V']_2}$.
\end{claim}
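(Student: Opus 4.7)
The plan is an iterative covering argument on $\cQ_2$. The objective is to select a small subset $\cI\subseteq \cQ_2$ and a low-dimensional space of linear forms $V'$ so that every $P\in\cQ_2$ can be written as an element of $\spn{\cI}$ plus a polynomial in $\C[V']_2$.

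The structural lemma driving the iteration is the following application of \autoref{cla:lin-rank-r}. For any $Q_1,Q_2\in\cQ_2$, take $r=1$ in \autoref{cla:lin-rank-r} to obtain a linear space $V_{12}$ of linear forms with $\dim V_{12}\le 8$ such that the minimal space of every rank-$\le 1$ linear combination of $Q_1,Q_2$ lies in $V_{12}$. If $P$ is a common case-\ref{case:rk1} partner of $Q_1$ and $Q_2$, so $\alpha_iQ_i+\beta_iP=\ell_i^{2}$ for $i=1,2$ (here $\beta_i\ne 0$ because the $Q_i$ are irreducible, so $\rank_s(Q_i)\ge 2$), then eliminating $P$ yields
\[
\beta_1\alpha_2Q_2-\beta_2\alpha_1Q_1 \;=\; \beta_1\ell_2^{2}-\beta_2\ell_1^{2} \;=\; \bigl(\sqrt{\beta_1}\,\ell_2-\sqrt{\beta_2}\,\ell_1\bigr)\bigl(\sqrt{\beta_1}\,\ell_2+\sqrt{\beta_2}\,\ell_1\bigr),
\]
which is rank-$\le 1$ with minimal space $\spn{\ell_1,\ell_2}$ (with a short separate check when $\ell_1\sim\ell_2$, or when the left-hand combination degenerates, which would force a forbidden dependence among $Q_1,Q_2,P$). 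Consequently $\spn{\ell_1,\ell_2}\subseteq V_{12}$, and thus $P=(\ell_1^{2}-\alpha_1Q_1)/\beta_1\in\spn{Q_1,Q_2,\C[V_{12}]_2}$.

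For the covering I maintain a residual set $\cR\subseteq \cQ_2$, initially $\cR=\cQ_2$, and at each step apply a double-count. Every $Q\in\cR$ still has at least $\delta m/100$ case-\ref{case:rk1} partners in $\cQ$, so by Cauchy-Schwarz
\[
\sum_{P\in\cQ}|\Gamma_{(ii)}(P)\cap\cR|^{2} \;\ge\; \frac{1}{m}\Bigl(\sum_{Q\in\cR}|\Gamma_{(ii)}(Q)|\Bigr)^{2} \;\ge\; \frac{|\cR|^{2}\delta^{2}m}{10^{4}}.
\]
The left-hand side counts ordered triples $(Q_1,Q_2,P)\in\cR\times\cR\times\cQ$ in which $P$ is a common case-\ref{case:rk1} partner of $Q_1,Q_2$, so by averaging some pair $(Q_1,Q_2)\in\cR\times\cR$ has $\Omega(\delta^{2}m)$ common case-\ref{case:rk1} partners. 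Place $Q_1,Q_2$ into $\cI$, add $V_{12}$ into $V'$, and delete from $\cR$ every common partner that lies in $\cR$.

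The main obstacle is sharpening the termination bound: the naive analysis above only yields $|\cI|=O(1/\delta^{2})$, while the claim asks for $|\cI|=O(1/\delta)$. The refinement is to case-split on whether most case-\ref{case:rk1} edges out of $\cR$ stay inside $\cQ_2$ or escape to $\cQ\setminus \cQ_2$. In the former case the Cauchy-Schwarz can be restricted to $P\in\cQ_2$, yielding a pair whose common partners in $\cR$ number $\Omega(\delta|\cR|)$ and hence an $\Omega(\delta)$-fractional shrinkage of $\cR$ per iteration, which after $O(1/\delta)$ bulk iterations brings $|\cR|$ down to $O(1/\delta)$; in the latter case the degree constraint on polynomials of $\cQ\setminus \cQ_2$ (each having fewer than $\delta m/100$ case-\ref{case:rk1} partners) forces $|\cR|\le O(1/\delta)$ already. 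The remaining elements of $\cR$ are then appended directly to $\cI$. Each iteration adds at most $8$ to $\dim V'$, so at termination $\dim V'=O(1/\delta)\cdot 8=O(1/\delta)$, well inside the allotted $O(1/\delta^{2})$.
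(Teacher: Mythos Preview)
The proposal has a genuine gap: the covering iteration does not make guaranteed progress on $\cR$. Your Cauchy--Schwarz step produces a pair $(Q_1,Q_2)\in\cR\times\cR$ with $\Omega(\delta^2 m)$ common case-\ref{case:rk1} partners \emph{in $\cQ$}, but nothing forces any of these partners to lie in $\cR$ (or even in $\cQ_2$). So the step ``delete from $\cR$ every common partner that lies in $\cR$'' may delete nothing beyond $Q_1,Q_2$ themselves, and no termination bound follows --- not even the $O(1/\delta^2)$ you attribute to the ``naive analysis''.

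The refinement does not repair this. In your ``former case'' you restrict the Cauchy--Schwarz to $P\in\cQ_2$, but that yields common partners in $\cQ_2$, not in $\cR$; once $\cR\subsetneq\cQ_2$ the conclusion ``common partners in $\cR$ number $\Omega(\delta|\cR|)$'' simply does not follow. In your ``latter case'' the double count gives only: (edges from $\cR$ to $\cQ\setminus\cQ_2$) $\ge |\cR|\delta m/200$ on one side and $\le m\cdot \delta m/100$ on the other, hence $|\cR|\le 2m$ --- a triviality, not $|\cR|\le O(1/\delta)$.

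The paper proceeds quite differently. Instead of trying to cover $\cQ_2$ as common partners of selected pairs, it takes $\cI\subseteq\cQ_2$ to be a \emph{maximal} set with the property that each $Q\in\cI$ has at most one partner in $\Gamma(Q)$ that is already a common partner of two other elements of $\cI$. A bucket argument (each $Q$ added to $\cI$ promotes at least $\delta|\cB|/2$ partners up one bucket, and only three buckets are available) gives $|\cI|<6/\delta$ directly. The space $V'$ is then built from the $O(1/\delta^2)$ pairs inside $\cI$ via your structural lemma (this is where the $O(1/\delta^2)$ dimension bound comes from), and maximality of $\cI$ is used to show that every $Q\in\cQ_2\setminus\cI$ has two partners $P_1,P_2$ each already of the form $(\text{element of }\spn{\cI})+\ell^2$ with $\ell\in V'$; a short computation with $P_1-P_2$ (after adjusting $V'$ to kill rank-$\le 1$ combinations inside $\spn{\cI}$) then places $Q$ in $\spn{\cI,\C[V']_2}$. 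The key idea you are missing is this use of maximality to guarantee that the uncovered element --- rather than its partners --- is the one getting captured.
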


In Section~\ref{sec:Q3-ideal} We prove that for some small dimensional space
$V''$, it holds that $\cQ_3 \subset \ideal{V''}$.

\begin{claim}\label{cla:VforP3}
	There exists a linear space of linear forms, $V''$, such that $\dim(V'')=O({1}/{\delta})$ and $\cQ_3 \subset \ideal{V''}$.
\end{claim}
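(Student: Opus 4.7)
The plan is to exploit the case (iii) structure of \autoref{cor:case-3-strong} to cover $\cQ_3$ by the ideal of a small space of linear forms, via a greedy absorption argument.

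First I would establish that every $Q\in\cQ_3$ has $\rank_s(Q)=2$, and hence a canonical $4$-dimensional minimal space $\MS(Q)$. Indeed, since $Q$ participates in at least one case (iii) pair, there are linear forms $\ell_1,\ell_2$ with $Q\in\ideal{\ell_1,\ell_2}$; writing $Q=\ell_1 a+\ell_2 b$ shows $\rank_s(Q)\le 2$, and irreducibility excludes $\rank_s(Q)=1$, so equality holds. By \autoref{cla:irr-quad-span}, $\MS(Q)=\spn{\ell_1,\ell_2,a,b}$ is independent of the representation chosen, and, crucially, any pair of linear forms $\ell_1',\ell_2'$ for which $Q\in\ideal{\ell_1',\ell_2'}$ must already lie inside $\MS(Q)$.

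The key absorption observation is then: for every $Q\in\cQ_3$ and every $P\in\Gamma_{(iii)}(Q)$, we have $P\in\ideal{\MS(Q)}$. This follows because the defining forms $\ell_1,\ell_2$ of the pair $(Q,P)$ lie in $\MS(Q)$ by the previous paragraph, while $P\in\ideal{\ell_1,\ell_2}\subseteq\ideal{\MS(Q)}$. Consequently, adding the $4$-dimensional space $\MS(Q)$ to the current $V''$ absorbs $Q$ together with all of its $\ge\delta m/100$ case (iii) neighbors.

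I would then construct $V''$ greedily. Initialize $V''=\{\vec{0}\}$, and at each step choose $Q\in\cQ_3\setminus\ideal{V''}$ maximizing $|\Gamma_{(iii)}(Q)\setminus\ideal{V''}|$, replacing $V''$ by $V''+\MS(Q)$. Whenever this maximum is at least $\delta m/200$, the step is \emph{productive}: it costs at most $4$ dimensions while absorbing $\ge\delta m/200$ fresh polynomials of $\cQ$. Since $|\cQ|=m$, at most $200/\delta$ productive steps can occur in total, contributing at most $O(1/\delta)$ dimensions.

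The main obstacle is the complementary \emph{stalled} regime, in which every remaining $Q\in\cR\eqdef\cQ_3\setminus\ideal{V''}$ has at least $\delta m/200$ case (iii) neighbors already inside $\ideal{V''}$. A double-counting argument then produces $P\in\cQ\cap\ideal{V''}$ that is a case (iii) partner of a $\delta/200$-fraction of $\cR$; applying the absorption observation with the roles of $P$ and $Q$ swapped shows all such $Q$ lie in $\ideal{\MS(P)}$, so adding $\MS(P)$ removes a constant fraction of $\cR$. Iterating this and finally absorbing the residual set once $|\cR|=O(1/\delta)$ closes out the argument. The delicate point is securing the sharp $O(1/\delta)$ bound rather than a spurious $O((\log m)/\delta)$ coming from the geometric shrinkage of $\cR$; I expect this to require interleaving the productive and stalled phases with careful amortization, and very likely invoking the robust-SG-modulo-a-subspace theorem \autoref{thm:lin-rob-space} applied to the linear forms $\ell_2$ extracted from the structural identity $C=\ell_2(a+b)+\alpha A+\beta B$ of \autoref{cor:case-3-strong}, which converts each case (iii) triple into a near-collinearity modulo a controlled subspace.
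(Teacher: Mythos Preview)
Your absorption observation and the productive-phase count are correct, but the stalled-phase argument does not close: geometric shrinkage of $\cR$ by a $(1-\delta/200)$ factor per step yields only an $O((\log m)/\delta)$ bound on $\dim(V'')$, and neither the interleaving you sketch nor an appeal to \autoref{thm:lin-rob-space} or \autoref{cor:case-3-strong} is what is needed here. The paper avoids the stalled regime entirely via a short linear-algebraic fact that you are missing.

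Run the naive greedy process (pick \emph{any} $Q\in\cQ_3\setminus\ideal{V''}$ and set $V''\leftarrow V''+\MS(Q)$), and let $Q_1,Q_2,\ldots$ be the chosen polynomials. The key claim is that any fixed $P\in\cQ$ can be a case~(iii) neighbor of \emph{at most three} of the $Q_i$. Indeed, set $U=\MS(P)$ (so $\dim U\le 4$) and $U_i=U\cap\MS(Q_i)$; by your own observation the case~(iii) forms $\ell_1,\ell_2$ lie in $U_i$, so $\dim U_i\ge 2$ and $Q_i\in\ideal{U_i}$. The greedy rule $Q_i\notin\ideal{\MS(Q_1)+\cdots+\MS(Q_{i-1})}$ forces $U_i\not\subseteq U_1+\cdots+U_{i-1}$, and three $2$-dimensional subspaces with this strict-increase property already fill a $4$-dimensional $U$, leaving no room for a fourth. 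Double counting the case~(iii) edges from $\{Q_i\}$ into $\cQ$ then gives $|\{Q_i\}|\cdot(\delta/100)m\le 3m$, i.e.\ at most $300/\delta$ greedy steps, directly yielding $\dim(V'')=O(1/\delta)$. No amortization, no robust Sylvester--Gallai machinery, and no use of the refined structure in \autoref{cor:case-3-strong} is required for this claim.
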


Set $V=V'+V''$. So far it holds that $\cQ_2\in \spn{\cI, \C[V]_2}$ and  $\cQ_3 \subset \ideal{V}$. Next, we find a small set of polynomials $\cJ$ such that $\cQ \subset \ideal{V}+\spn{\cJ}$. We prove this claim in Section~\ref{sec:QinJ,V}.


\begin{claim}\label{cla:Iexists}
	There exists a set $\cJ\subseteq \cQ$, of size   $|\cJ|= O(1/\delta)$, such that $\cQ \subset \spn{(\cQ\cap\ideal{V}),\cJ,\C[V]_2}$. Furthermore, if $P\in\cQ\setminus \ideal{V}$ then there is no quadratic $L$  such that  $P + L \in \ideal{V}$ and $\rank_s(L) \leq 2$.
\end{claim}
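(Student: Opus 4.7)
The plan is to construct $\cJ$ by a greedy iterative procedure, starting with $\cJ_0=\cI$ from \autoref{cla:Q_2-dim} and maintaining throughout the ``covered'' set
\[
  \cB \eqdef \cQ\cap\spn{(\cQ\cap\ideal{V})\cup\cJ\cup\C[V]_2}.
\]
Initially $\cQ_2\cup\cQ_3\subseteq\cB$ by \autoref{cla:Q_2-dim} and \autoref{cla:VforP3} (together with $\cI\subseteq\cJ$), so the task reduces to covering the polynomials in $\cQ_1\setminus\cB$; recall that these are polynomials whose PSG-neighbors essentially fall only in Case~(i) of \autoref{thm:structure}.

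The greedy step is: while there exists $P\in\cQ_1\setminus\cB$ with $|\Gamma_{(i)}(P)\cap\cB|\ge\tfrac{\delta}{300}|\cQ|$, add $P$ to $\cJ$ and update $\cB$. The key fact driving termination is a uniqueness property of Case~(i) witnesses. For each $Q\in\Gamma_{(i)}(P)\cap\cB$, fix the witness $T\in\cQ$ satisfying $T\in\spn{P,Q}$. Then $T\notin\cB$ beforehand, since otherwise $P\in\spn{Q,T}\subseteq\spn{\cB}$ would put $P\in\cB$; moreover, distinct $Q_1\neq Q_2$ in $\cB$ yield distinct witnesses, because if $T\in\spn{P,Q_1}\cap\spn{P,Q_2}$ with $T\not\sim P$, then $Q_1,Q_2\in\spn{P,T}$ forces $P\in\spn{Q_1,Q_2}\subseteq\spn{\cB}$ (and the possibility $T\sim P$ is excluded by pairwise linear independence of $\cQ$). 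After adding $P$ to $\cJ$, all these witnesses are swept into the new $\cB$, so each iteration brings in at least $\tfrac{\delta}{300}|\cQ|$ previously uncovered polynomials. Therefore the procedure halts within $300/\delta$ steps and adds only $O(1/\delta)$ polynomials to $\cJ$.

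When the greedy step halts, each remaining $P\in\cQ_1\setminus\cB$ satisfies $|\Gamma_{(i)}(P)\cap\cB|<\tfrac{\delta}{300}|\cQ|$, hence has at least $\tfrac{\delta}{100}|\cQ|-\tfrac{\delta}{300}|\cQ|\ge\tfrac{\delta}{150}|\cQ|$ Case~(i) neighbors inside $\cQ_1\setminus\cB$; the same uniqueness argument shows that the corresponding witnesses also lie in $\cQ_1\setminus\cB$, because $\cB\supseteq\cQ_2\cup\cQ_3$ and any witness already in $\cB$ would pull $P$ into $\cB$. Viewing the quadratics as points in the ambient vector space of quadratic forms, $\cQ_1\setminus\cB$ is thus an $\Omega(\delta)$-SG configuration in the sense of \autoref{def:delta-SGConf}. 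Applying \autoref{thm:DSW} gives $\dim\spn{\cQ_1\setminus\cB}=O(1/\delta)$, and appending any basis of this span to $\cJ$ costs $O(1/\delta)$ additional elements and yields $\cB=\cQ$, establishing the main inclusion with $|\cJ|=O(1/\delta)$.

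For the ``furthermore'' clause I would enforce the condition by an additional enlargement of $V$: while there is $P\in\cQ\setminus\ideal{V}$ admitting some $L$ with $\rank_s(L)\le 2$ and $P+L\in\ideal{V}$, replace $V$ by $V+\MS(L)$. Each such step adds at most $4$ dimensions to $V$ and moves $P$ into $\ideal{V}$ (since $L\in\C[\MS(L)]_2$ becomes $0$ modulo $V+\MS(L)$, forcing $P\equiv 0$), and it preserves $\cQ_2\subseteq\spn{\cI,\C[V]_2}$ and $\cQ_3\subseteq\ideal{V}$ by monotonicity of $\ideal{V}$ and $\C[V]_2$ in $V$. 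The main obstacle I foresee is bounding the number of such enlargements by a polynomial in $1/\delta$, since one cannot afford to add one dimension per violator in $\cQ$; the natural route is to argue that whenever many polynomials simultaneously witness the violation, the $\delta$-PSG structure forces their reduced minimal spaces modulo $V$ to sit in a common low-dimensional subspace of linear forms, so a single enlargement of $V$ absorbs all of them. Once this phase stabilizes, we rerun the greedy construction above with the updated $V$, and monotonicity of $\cB$ in $V$ ensures that $|\cJ|=O(1/\delta)$ still holds.
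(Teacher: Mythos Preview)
Your greedy construction of $\cJ$ and the subsequent application of \autoref{thm:DSW} to $\cQ_1\setminus\cB$ are essentially the paper's proof. One small slip: your justification that witnesses for pairs inside $\cQ_1\setminus\cB$ also lie there --- ``any witness already in $\cB$ would pull $P$ into $\cB$'' --- is not valid when the neighbor $Q$ is itself outside $\cB$. The correct count is that each neighbor $Q\in\cQ_1\setminus\cB$ whose only witnesses lie in $\cB$ determines, via such a witness, a distinct point of $\Gamma_{(i)}(P)\cap\cB$ (distinct bad $Q$'s give distinct lines through $P$, hence distinct $\cB$-points on them); so there are fewer than $\tfrac{\delta}{300}|\cQ|$ bad $Q$'s, and the $\Omega(\delta)$-SG conclusion survives with a slightly smaller constant.

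The genuine gap is in the ``furthermore'' clause. You correctly flag that the issue is bounding the number of enlargements of $V$, but the route you propose --- using the $\delta$-PSG structure to argue many violators share a common low-dimensional reduction, then rerunning the greedy phase --- is both unfinished and unnecessary. The paper's fix is elementary: whenever $P\in\cQ\setminus\ideal{V}$ admits $L$ with $\rank_s(L)\le 2$ and $P+L\in\ideal{V}$, write $P=\sum_i\alpha_iJ_i+R$ with $J_i\in\cJ$ and $R\in\spn{(\cQ\cap\ideal{V}),\C[V]_2}\subseteq\ideal{V}$; since $P\notin\ideal{V}$, some $\alpha_{i_0}\neq 0$. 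After replacing $V$ by $V+\MS(L)$ we have $P\in\cQ\cap\ideal{V}$, hence $J_{i_0}\in\spn{(\cQ\cap\ideal{V}),\,\cJ\setminus\{J_{i_0}\},\,\C[V]_2}$, and $J_{i_0}$ may be removed from $\cJ$ while preserving the inclusion $\cQ\subset\spn{(\cQ\cap\ideal{V}),\cJ,\C[V]_2}$. Thus each enlargement shrinks $|\cJ|$ by one, the process halts after at most $|\cJ|=O(1/\delta)$ steps, and $\dim V$ grows by only $O(1/\delta)$. This is precisely the mechanism of \autoref{cla:IfarV}, to which the paper refers; no rerun of the greedy construction is needed.
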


Given the claims above we have that $\cQ \subset \spn{(\cQ\cap\ideal{V}),\cJ,\C[V]_2}$, where $|\cJ|=O({1}/{\delta})$ and $\dim(V) = O({1}/{\delta^2})$.
We are not done yet as the dimension of $\ideal{V}$, as a vector space, is not a constant. 
To bound this dimension we partition $\cQ$ to four sets and study the subgraphs induces by any two of the sets.

%
%
%
%

\begin{align}
\LRV & =\left\lbrace Q \in \cQ \;\middle|\;	Q\in \C[V]_2 \right\rbrace \label{eq:Q_v} \\
\LRIdealV &=\left\lbrace Q \in \cQ \;\middle|\; Q\in \ideal{V} \right\rbrace \setminus \LRV \label{eq:Q_ideal}\\
\label{eq:v} \HRV &=\left\lbrace Q \in \cQ \;\middle|\;Q\in\spn{\cJ,\C[V]_2}\setminus \C[V]_2 \right\rbrace\\ 
\label{eq:J_ideal}
\HRIdealV &=\left\lbrace Q \in \cQ \;\middle|\;
Q\in \spn{\cJ,\ideal{V}}\setminus \ideal{V}
\right\rbrace \setminus \HRV \;.
\end{align}

In words, $\LRV$ is the set of all quadratics in $\cQ$ that only depend on linear functions in $V$. 
$\LRIdealV$ is the set of polynomials that are in $\ideal{V}$ but not in $\LRV$, etc. 

Our goal is to bound the dimension of each of these sets. In fact, we already know that $\dim(\LRV),\dim(\HRV)\leq O(1/\delta^4)$ so we only need to bound $\dim(\LRIdealV)$ and $\dim(\HRIdealV)$.
For that we will analyze the edges between the different sets.

%



We first note that the ``furthermore'' part of \autoref{cla:Iexists}, stating that the ``rank-distance'' between nonzero polynomials in $\spn{\cJ}$ and quadratics in $\ideal{V}$ is larger than $2$, implies the following:

\begin{observation}\label{obs:cut-case-1}
\begin{enumerate}
	\item 	If $P\in \LRIdealV\cup \LRV$ and $Q\in \HRIdealV\cup \HRV$ satisfy that $P\in \Gamma(Q)$ then $P$ and $Q$ satisfy \autoref{thm:structure}\ref{case:span}.
	\item 	If $P\in \HRIdealV$ and $Q\in \LRIdealV\cup \LRV\cup \HRV$ satisfy that $P\in \Gamma(Q)$ then $P$ and $Q$ satisfy \autoref{thm:structure}\ref{case:span}.
\end{enumerate}
\end{observation}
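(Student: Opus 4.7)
I will show that in each scenario of the observation, Cases~\ref{case:rk1} and~\ref{case:2} of \autoref{thm:structure} lead to contradictions with the ``furthermore'' clause of \autoref{cla:Iexists}. That clause says that no $P \in \cQ \setminus \ideal{V}$ admits a quadratic $L$ with $P + L \in \ideal{V}$ and $\rank_s(L) \le 2$. Applied with $L = -P$, this immediately gives that every polynomial in $\HRV \cup \HRIdealV$ has $\rank_s > 2$---a fact I will use repeatedly.

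\noindent\textbf{Ruling out Case~\ref{case:2}.} In every sub-case of the observation at least one of $P, Q$ lies in $\HRV \cup \HRIdealV$, hence has $\rank_s > 2$. But Case~\ref{case:2} says both $P, Q \in \ideal{\ell_1, \ell_2}$ for linear forms $\ell_1, \ell_2$; any polynomial in $\ideal{\ell_1, \ell_2}$ can be written $\ell_1 a + \ell_2 b$ and so has $\rank_s \le 2$, a contradiction. Thus Case~\ref{case:2} never occurs.

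\noindent\textbf{Ruling out Case~\ref{case:rk1}.} Suppose $\alpha P + \beta Q = \ell^2$; irreducibility of $P, Q$ forces $\alpha, \beta \ne 0$. If exactly one of $P, Q$ lies in $\ideal{V}$---covering Part~1 entirely and the sub-case of Part~2 with $Q \in \LRIdealV \cup \LRV$---solving for the polynomial outside $\ideal{V}$ shows that $L = -\ell^2/(\text{its coefficient})$ has rank $1$ and places that polynomial into $\ideal{V}$, contradicting ``furthermore''. In the remaining sub-case, Part~2 with $Q \in \HRV$, write $Q = J_Q + L_Q$ with $J_Q \in \spn{\cJ}$ and $L_Q \in \C[V]_2$; the relation rearranges to $P + (\beta/\alpha)J_Q = (\ell^2 - \beta L_Q)/\alpha$. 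If $\ell \in V$, the right-hand side lies in $\C[V]_2$, so $P \in \spn{\cJ, \C[V]_2} \cap \cQ = \LRV \cup \HRV$, contradicting $P \in \HRIdealV$.

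\noindent\textbf{The main obstacle} is the final sliver of Case~\ref{case:rk1}: $P \in \HRIdealV$, $Q \in \HRV$, and $\ell \notin V$. Reducing the equation modulo $\ideal{V}$ yields a relation $\alpha \bar J_P + \beta \bar J_Q \equiv \bar\ell^2$ among the $\cJ$-components, but the stated ``furthermore'' only controls individual elements of $\cQ$, not linear combinations of elements of $\cJ$. To finish, I would either (i) invoke a strengthening of ``furthermore'' implicit in the iterative construction of $\cJ$ in the proof of \autoref{cla:Iexists}---namely that no nontrivial combination in $\spn{\cJ}$ is within rank-$\le 2$ of $\ideal{V}$---or (ii) trace the construction of $\cJ$ to replace one of its generators, reducing to a case where ``furthermore'' applies directly, and thereby complete the exclusion of Case~\ref{case:rk1}.
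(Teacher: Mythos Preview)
Your approach is the same as the paper's: use the ``furthermore'' part of \autoref{cla:Iexists} to show that every polynomial in $\HRV\cup\HRIdealV$ has $\rank_s>2$ (killing Case~\ref{case:2}), and then use it again to kill Case~\ref{case:rk1}. The paper's proof only treats Part~1 and declares Part~2 ``similar,'' so you have actually written out more than the paper does.

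Your identification of the ``main obstacle'' is exactly right, and your option~(i) is precisely what the paper intends. Look at the sentence immediately preceding the observation: the paper paraphrases the ``furthermore'' clause as saying that \emph{the rank-distance between nonzero polynomials in $\spn{\cJ}$ and quadratics in $\ideal{V}$ is larger than $2$}. This is the strengthened form you guessed at, and it is strictly stronger than the literal statement of \autoref{cla:Iexists} (which only speaks about elements of $\cQ\setminus\ideal{V}$). The stronger form is what the iterative process in \autoref{cla:IfarV} actually delivers, and the paper uses it freely elsewhere too (e.g.\ in the proof of \autoref{cla:decreaseJideal}: ``As there is no linear combination of polynomials in $\cJ$ that has $\rank_s\leq 2$ when setting $V=0$''). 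So you have correctly diagnosed a discrepancy between the formal statement of the ``furthermore'' clause and the version the paper relies on; with the intended stronger version, your argument for the $P\in\HRIdealV$, $Q\in\HRV$, $\ell\notin V$ sub-case goes through: $\alpha J_P+\beta J_Q-\ell^2\in\ideal{V}$ directly contradicts that no nonzero element of $\spn{\cJ}$ is within rank~$1$ of $\ideal{V}$.
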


\begin{proof}
	We only prove the first case as the proof of the second case is similar.
	As $Q\in \HRIdealV\cup \HRV$, we have that $\rank_s(Q_1)>2$. In particular, $P$ and $Q$ do not satisfy \autoref{thm:structure}\ref{case:2}. If $P$ and $Q$  satisfy \autoref{thm:structure}\ref{case:rk1} then $Q = \alpha P+\ell^2$ for some linear form $\ell$, which contradicts the structure of $\cJ$ guaranteed in \autoref{cla:Iexists}.
\end{proof}

To bound the dimension of $\LRIdealV$ we note that any edge going from $P\in\LRIdealV\cup \HRIdealV$ to $\LRV \cup \HRV$ defines uniquely a third polynomial in $\LRIdealV \cup \HRIdealV$. This uniqueness property guarantees that if we add $\MS(P)$ to $V$, then many polynomials move from $\LRIdealV\cup \HRIdealV$ to $\LRV\cup\HRV$.  

\begin{claim}\label{cla:C-unique}
	Let $P\in \LRIdealV$ then, 
	\begin{enumerate}
			\item for every polynomial  $Q_1 \in\Gamma(P)\cap \HRV$ there is a unique polynomial $Q'_1\in \HRIdealV$ such that $Q'_1 \in \spn{P,Q_1}$. I.e., there is no other $Q_2\in\HRV$ such that $Q'_1 \in \spn{P,Q_2}$.
		\item for every polynomial  $Q_1 \in\Gamma(P)\cap \LRV$ there is a unique polynomial $Q'_1\in \LRIdealV$ such that $Q'_1 \in \sqrt{\ideal{P,Q_1}}$. I.e., there is no other $Q_2\in\LRV$ such that $Q'_1 \in \sqrt{\ideal{P,Q_2}}$.
	\end{enumerate}
\end{claim}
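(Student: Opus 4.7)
The plan is to prove each part using the structure theorems for pairwise vanishing, combining \autoref{obs:cut-case-1} and \autoref{cor:case-3-strong} for the existence and placement of $Q'_1$, and invoking \autoref{cor:unique-T} for the uniqueness step in Part 2.

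For Part 1, \autoref{obs:cut-case-1}(1) forces $P$ and $Q_1$ to satisfy case~\ref{case:span} of \autoref{thm:structure}, so there is some $Q'_1 \in \cQ\cap\spn{P,Q_1}$ linearly independent from both, and we may write $Q'_1 = \alpha P + \beta Q_1$ with $\alpha,\beta\neq 0$. The plan is to verify $Q'_1 \in \HRIdealV$ through two non-memberships: $Q'_1 \notin \ideal{V}$, because otherwise $Q_1 = (Q'_1-\alpha P)/\beta \in \ideal{V}$, which contradicts $Q_1 \in \HRV$ (the natural partition places polynomials in $\ideal{V}$ into $\LRV\cup\LRIdealV$); and $Q'_1 \notin \spn{\cJ,\C[V]_2}$, because otherwise $P=(Q'_1-\beta Q_1)/\alpha \in \spn{\cJ,\C[V]_2}\setminus \C[V]_2$ and hence $P \in \HRV$, contradicting $P \in \LRIdealV$. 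For uniqueness, if some $Q_2 \in \HRV$ with $Q_2\not\sim Q_1$ also satisfies $Q'_1 \in \spn{P,Q_2}$, then eliminating $Q'_1$ between the two representations yields $P \in \spn{Q_1,Q_2}\subseteq\spn{\cJ,\C[V]_2}$, again contradicting $P \in \LRIdealV$.

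For Part 2, $Q'_1 \in \cQ\cap\sqrt{\ideal{P,Q_1}}$ exists by $Q_1 \in \Gamma(P)$, and $Q'_1 \in \ideal{V}$ follows because $P,Q_1 \in \ideal{V}$ and $\ideal{V}$ is a radical ideal. The substantive step is to rule out $Q'_1 \in \C[V]_2$, which I would do via case analysis using \autoref{cor:case-3-strong}. In case~\ref{case:span} the argument is identical to Part 1. Case~\ref{case:rk1} cannot occur: $\alpha P + \beta Q_1 = \ell^2$ with $P,Q_1 \in \ideal{V}$ forces $\ell\in V$, which would place $P$ in $\C[V]_2$. In case~\ref{case:2}, the explicit form $P=\ell_1 a + \ell_2^2$, $Q_1 = \ell_1 b - \ell_2^2$, $Q'_1 = \ell_2(a+b)+\alpha P + \beta Q_1$ combined with $\MS(Q_1)\subseteq V$ (so $\ell_1,\ell_2,b \in V$) and $P \notin \C[V]_2$ (so $a\notin V$) implies that $Q'_1 \in \C[V]_2$ would reduce to $(\alpha\ell_1+\ell_2)a \in \C[V]_2$, hence $\ell_2\in\spn{\ell_1}$, contradicting the irreducibility of $P$.

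For uniqueness in Part 2, suppose $Q_2 \in \LRV$ with $Q_2\not\sim Q_1$ also satisfies $Q'_1 \in \sqrt{\ideal{P,Q_2}}$. If both pairs $(P,Q_1)$ and $(P,Q_2)$ satisfy case~\ref{case:span}, then $Q'_1 \in \spn{P,Q_1}\cap\spn{P,Q_2}$ forces $P \in \spn{Q_1,Q_2}\subseteq\C[V]_2$, contradicting $P \in \LRIdealV$. Otherwise, at least one pair, say $(P,Q_i)$, falls into case~\ref{case:2}, whose explicit form from \autoref{cor:case-3-strong} supplies linear forms $\ell_1,\ell_2 \in \MS(Q_i)$ with $P \in \ideal{\ell_1,\ell_2}\subseteq\ideal{\MS(Q_1)+\MS(Q_2)}$; together with $\MS(P)\not\subseteq\MS(Q_1)+\MS(Q_2)\subseteq V$, the preconditions of \autoref{cor:unique-T} are satisfied, and its conclusion $T\neq T'$ applied with $T=T'=Q'_1$ yields the desired contradiction. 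The main subtlety is this uniqueness step, where one must identify which pair yields the case~\ref{case:2} structure needed to verify $P \in \ideal{\MS(Q_1)+\MS(Q_2)}$.
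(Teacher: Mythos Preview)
Your proof is correct and follows the paper's strategy: Part~1 matches the paper's argument almost verbatim (use \autoref{obs:cut-case-1} to force case~\ref{case:span}, then linear algebra), and Part~2 is what the paper compresses into the single line ``Follows from \autoref{cor:unique-T}.'' Your case analysis for existence (ruling out $Q'_1\in\C[V]_2$) and your verification of the hypothesis $P\in\ideal{\MS(Q_1)+\MS(Q_2)}$ via the explicit case~\ref{case:2} form are exactly the details the paper leaves to the reader, so your write-up is in fact more complete than the paper's own proof.
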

\begin{proof}
	\begin{enumerate}
		\item Let $Q_1 \in\Gamma(P)\cap \HRV$. 
		 By Observation~\ref{obs:cut-case-1}, $P$ and $Q_1$ satisfy \autoref{thm:structure}\ref{case:span}. We first prove that they span a polynomial in $\HRIdealV$ and then prove its uniqueness.
		 Any polynomial in $T\in\spn{P,Q_1}\setminus(\spn{P})$ has $\rank_s(T)>2$, even when setting the linear forms in $V$ to $0$. Hence, $P$ and $Q_1$ span a polynomial  $Q'_1\in \HRV\cup \HRIdealV$. As $P\not \in \C[V]_2$ we can conclude that $Q'_1\in\HRIdealV$. To prove that $Q'_1$ is unique assume that $Q'_1\in \spn{P,Q_2}$ for some $Q_2\in \HRV$. Pairwise linear independence implies that $P\in\spn{Q_1,Q_2}$ which implies that $P\in \LRV$, in contradiction.
		
		\item Follows from \autoref{cor:unique-T}.
	\end{enumerate}
\end{proof}

\begin{claim}\label{cla:J-unique}
	Let $P\in \HRIdealV$.
	Then for every polynomial  $Q_1 \in\Gamma(P)\cap (\HRV\cup \LRV)$ there is a unique polynomial $Q'_1\in \HRIdealV\cup \LRIdealV$ such that $Q'_1 \in \spn{P,Q_1}$. By ``unique'' we mean that there is no other $Q_2\in\HRV$ such that $Q'_1 \in \spn{P,Q_2}$.
\end{claim}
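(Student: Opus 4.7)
The plan is to follow the blueprint of the preceding \autoref{cla:C-unique}, adapting it to the case where the ``starting'' polynomial $P$ now lives in $\HRIdealV$ rather than $\LRIdealV$. The key enabling fact is item 2 of \autoref{obs:cut-case-1}: any edge from $P\in\HRIdealV$ to a polynomial in $\LRIdealV\cup \LRV\cup\HRV$ (and in particular to $Q_1\in\HRV\cup\LRV$) must be of type \autoref{thm:structure}\ref{case:span}. Thus existence of some $Q'_1\in\cQ\cap(\spn{P,Q_1}\setminus(\C P\cup\C Q_1))$ is immediate, and the only work is (a) locating $Q'_1$ inside $\HRIdealV\cup\LRIdealV$, and (b) proving the uniqueness.

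For (a), write $Q'_1=\alpha P+\beta Q_1$ with $\alpha,\beta\neq 0$ (nonvanishing follows from pairwise linear independence in $\cQ$). Since $P\in\spn{\cJ,\ideal{V}}$ and $Q_1\in\spn{\cJ,\C[V]_2}\subseteq\spn{\cJ,\ideal{V}}$, we automatically get $Q'_1\in\spn{\cJ,\ideal{V}}$. The core point is that $Q'_1\notin \spn{\cJ,\C[V]_2}$: indeed, if it were, then
\[P=\tfrac{1}{\alpha}(Q'_1-\beta Q_1)\in \spn{\cJ,\C[V]_2},\]
which contradicts the definition of $\HRIdealV$ (since $P\in\HRIdealV$ implies $P\notin\ideal{V}$, hence $P\notin\C[V]_2$, and then $P\notin\HRV$ forces $P\notin\spn{\cJ,\C[V]_2}$). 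Thus $Q'_1\notin\LRV\cup\HRV$, and combined with $Q'_1\in\cQ\cap\spn{\cJ,\ideal{V}}$ we conclude $Q'_1\in\LRIdealV\cup\HRIdealV$ by splitting on whether $Q'_1\in\ideal{V}$ or not.

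For (b), suppose $Q_2\in\HRV$ (or $\HRV\cup\LRV$) is another polynomial with $Q_2\not\sim Q_1$ and $Q'_1\in\spn{P,Q_2}$. Writing $Q'_1=\alpha'P+\beta'Q_2$ and equating with $\alpha P+\beta Q_1$ gives
\[(\alpha-\alpha')P=\beta'Q_2-\beta Q_1\in\spn{\cJ,\C[V]_2}.\]
If $\alpha\neq\alpha'$ this puts $P$ in $\spn{\cJ,\C[V]_2}$, contradicting $P\in\HRIdealV$; if $\alpha=\alpha'$, then $\beta Q_1=\beta'Q_2$, and pairwise linear independence of elements of $\cQ$ forces $Q_1\sim Q_2$, contradiction.

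No serious obstacle is anticipated: the ``furthermore'' clause of \autoref{cla:Iexists}, which was crucial for $\LRIdealV$ in the rank-$\le 2$ cases (\ref{case:rk1} and \ref{case:2}), is here subsumed by the stronger \autoref{obs:cut-case-1}, which directly rules out the non-\ref{case:span} cases for neighbors in $\LRIdealV\cup\LRV\cup\HRV$. The whole argument is therefore just bookkeeping: verify $Q'_1\in\spn{\cJ,\ideal{V}}$, rule out $Q'_1\in\spn{\cJ,\C[V]_2}$, and play off linear independence in $\cQ$ for uniqueness.
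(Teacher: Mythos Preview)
Your proof is correct and follows essentially the same approach as the paper: use \autoref{obs:cut-case-1} to force the edge to be of type \ref{case:span}, then use membership in $\spn{\cJ,\C[V]_2}$ together with pairwise linear independence to locate $Q'_1$ and prove uniqueness. Your treatment is in fact slightly more streamlined than the paper's, which splits into the two subcases $Q_1\in\LRV$ and $Q_1\in\HRV$ and in the former invokes the rank argument from the ``furthermore'' clause of \autoref{cla:Iexists}; you avoid this by observing uniformly that $\HRV\cup\LRV\subseteq\spn{\cJ,\C[V]_2}$ and arguing directly from the partition.
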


\begin{proof}
	We first consider the case $Q_1 \in\Gamma(P)\cap \LRV$. Observation~\ref{obs:cut-case-1} implies that $P$ and $Q_1$ satisfy \autoref{thm:structure}\ref{case:span}. By construction of $\cJ$, any polynomial in $T\in\spn{P,Q_1}\setminus(\spn{Q_1})$ has $\rank_s(T)>2$, even when setting the linear forms in $V$ to $0$. Hence, $P$ and $Q_1$ span a polynomial  $Q'_1\in \HRV\cup \HRIdealV$. As $P\not \in \HRV$ we conclude that $Q'_1\in\HRIdealV$. To prove that $Q'_1$ is unique assume that $Q'_1\in \spn{P,Q_2}$ for some $Q_2\in \HRV\cup \LRV$. As before, pairwise linear independence shows that $P\in\spn{Q_1,Q_2}$, which implies that $P\in \HRV$, in contradiction.

		 Consider the case $Q_1 \in\Gamma(P)\cap \HRV$. As before,  $P$ and $Q_1$ must satisfy \autoref{thm:structure}\ref{case:span}. Any polynomial in $T\in\spn{P,Q_1}\setminus(\spn{Q_1})$ is not in $\HRV\cup \LRV$.  Hence, $P$ and $Q_1$ span a polynomial  $Q'_1\in \LRIdealV\cup \HRIdealV$. Uniqueness follows exactly as in the first case.
\end{proof}

We next show that the uniqueness property proved in Claims~\ref{cla:C-unique} and~\ref{cla:J-unique} imply that $\cJ_{\ideal{V}}$ and $\cC_{\ideal{V}}$ cannot be ``too small,'' unless they are empty.

\begin{claim}\label{cla:C,J-large}
	If $|\HRIdealV|, |\LRIdealV| \leq (\delta/10)\cdot m$, then  $\cJ_{\ideal{V}}= \cC_{\ideal{V}}=\emptyset$.
\end{claim}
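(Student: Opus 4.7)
The plan is to prove both equalities by contradiction through a direct counting argument, leveraging the injectivity implicit in \autoref{cla:C-unique} and \autoref{cla:J-unique} together with the fact that every polynomial in a $\delta$-PSG configuration has at least $\delta m$ PSG-neighbors. No new structural input beyond those two claims is needed; the heavy lifting has already been done.

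To show $\LRIdealV=\emptyset$, I would suppose for contradiction that some $P\in\LRIdealV$ exists, so by definition of a $\delta$-PSG configuration $|\Gamma(P)|\geq \delta m$. Using the partition of $\cQ$, I will split
\[
\Gamma(P)=(\Gamma(P)\cap\LRV)\sqcup(\Gamma(P)\cap\HRV)\sqcup(\Gamma(P)\cap(\LRIdealV\cup\HRIdealV)).
\]
The last piece is bounded trivially by $|\LRIdealV|+|\HRIdealV|\le \delta m/5$. For the first two pieces, \autoref{cla:C-unique}(2) provides an injective map $Q_1\mapsto Q'_1$ from $\Gamma(P)\cap\LRV$ into $\LRIdealV$, and \autoref{cla:C-unique}(1) provides an injective map from $\Gamma(P)\cap\HRV$ into $\HRIdealV$; consequently each of these parts has size at most $\delta m/10$. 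Adding the three bounds yields $\delta m\le |\Gamma(P)|\le \delta m/10+\delta m/10+\delta m/5=2\delta m/5$, a contradiction, so $\LRIdealV$ must be empty.

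The argument for $\HRIdealV=\emptyset$ proceeds symmetrically using \autoref{cla:J-unique} in place of \autoref{cla:C-unique}. Assuming $P\in\HRIdealV$, \autoref{cla:J-unique} gives a single injection $\Gamma(P)\cap(\LRV\cup\HRV)\hookrightarrow\LRIdealV\cup\HRIdealV$, so this portion of the neighborhood has size at most $|\LRIdealV|+|\HRIdealV|\le\delta m/5$; the remaining portion $\Gamma(P)\cap(\LRIdealV\cup\HRIdealV)$ is trivially of size at most $\delta m/5$. Summing once more forces $|\Gamma(P)|\le 2\delta m/5<\delta m$, contradicting the $\delta$-PSG hypothesis.

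The only point requiring care is that the maps delivered by \autoref{cla:C-unique} and \autoref{cla:J-unique} must be \emph{total} on the relevant parts of $\Gamma(P)$: for each neighbor $Q_1$ in the prescribed set, the polynomial $Q'_1$ really does exist and really does lie in the specified subset of $\cQ$. This existence is built into the statements of those claims, so nothing extra is required. I expect no real obstacle; the main work has already been absorbed into the uniqueness claims, and the conclusion follows from a one-line accounting.
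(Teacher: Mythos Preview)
Your proof is correct and is essentially the same argument as the paper's. The paper also fixes $P\in\LRIdealV\cup\HRIdealV$, uses $|\Gamma(P)|\ge\delta m$ together with the bound $|\LRIdealV\cup\HRIdealV|\le (2\delta/10)m$ to force $|\Gamma(P)\cap(\LRV\cup\HRV)|\ge(8\delta/10)m$, and then invokes the injections from \autoref{cla:C-unique} and \autoref{cla:J-unique} into $\LRIdealV\cup\HRIdealV$ to derive a contradiction; you simply rewrote the same counting as an upper bound on $|\Gamma(P)|$ rather than a lower bound on $|\LRIdealV\cup\HRIdealV|$.
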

\begin{proof}
	Assume towards a contradiction that there is $P\in \cC_{\ideal{V}}\cup \cJ_{\ideal{V}}$. As $|\Gamma(P)|\geq \delta m$ it follows that  $|\Gamma(P)\cap(\LRV\cup\HRV)|\geq (8\delta/10)\cdot m$.  Claims~\ref{cla:C-unique} and~\ref{cla:J-unique} imply that there are at least $|\Gamma(P)\cap(\LRV\cup \HRV)|\geq  8\delta/10$ polynomials in $\HRIdealV\cup\LRIdealV$ in contradiction to the assumption that there are at most $(2\delta/10) \cdot m$ polynomials in $\cJ_{\ideal{V}}\cup\cC_{\ideal{V}}$.
\end{proof}

Thus, if we can make$|\HRIdealV|, |\LRIdealV|\leq (\delta/10)\cdot m$ without increasing $\dim(V)$ and $|\cJ|$ too much then \autoref{cla:C,J-large} would imply that $\cQ\in \spn{\cJ,\C[V]_2}$, from which the theorem would follow. We first show how to reduce $|\LRIdealV|$ and then  we reduce $|\HRIdealV|$. We will need the following easy observation.



\begin{claim}\label{cla:decreaseCideal}
	There is a linear subspace $V\subseteq V'$, of dimension $\dim(V')\leq 1/\delta^4 \cdot \dim(V)\leq 1/\delta^6$, such that $|\cC_{\ideal{V'}}|\leq \delta/10 \cdot m$.
\end{claim}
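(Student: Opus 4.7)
My plan is to build $V'$ via a greedy iterative procedure. Starting from $V^{(0)} := V$, as long as $|\cC_{\ideal{V^{(t)}}}| > (\delta/10)m$ I will find a polynomial $P \in \cC_{\ideal{V^{(t)}}}$ and enlarge $V^{(t)}$ to $V^{(t+1)}$ by a carefully chosen small set of linear forms, guaranteeing that $|\cC_{\ideal{V^{(t+1)}}}| \leq |\cC_{\ideal{V^{(t)}}}| - \Omega(\delta m)$. Since $|\cC_{\ideal{V^{(0)}}}| \leq m$ initially, the procedure terminates after $O(1/\delta)$ rounds; a careful accounting of how many linear forms are added per step will yield the claimed bound $\dim(V') \leq (1/\delta^4)\dim(V)$. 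Throughout the analysis I will exploit the rank bound $\rank_s(P) \leq \dim(V^{(t)})$ (from Claim~\ref{cla:rank-mod-space} applied to $P \in \ideal{V^{(t)}}$, for which $P|_{V^{(t)}=0} = 0$) to control $\dim(\MS(P)) \leq 2\dim(V^{(t)})$ whenever I add $\MS(P)$ to $V^{(t)}$.

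To choose $P$ and the extension, I partition the $\geq \delta m$ PSG-neighbors of a candidate $P$ among the four sets $\cC_{[V^{(t)}]}, \cC_{\ideal{V^{(t)}}}, \cJ_{[V^{(t)}]}, \cJ_{\ideal{V^{(t)}}}$ and apply pigeonhole. The cleanest case is when some $P$ has $\geq (\delta/4)m$ neighbors in $\cC_{[V^{(t)}]}$: Claim~\ref{cla:C-unique}(2), together with the explicit structural description of Corollary~\ref{cor:case-3-strong}, associates to each such neighbor $Q$ a distinct third polynomial $Q'_Q \in \cC_{\ideal{V^{(t)}}} \cap \sqrt{\ideal{P,Q}}$ whose minimal space lies in $\MS(P) + V^{(t)}$. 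Hence setting $V^{(t+1)} := V^{(t)} + \MS(P)$ sends every such $Q'_Q$ and $P$ itself into $\cC_{[V^{(t+1)}]}$, removing at least $(\delta/4)m$ polynomials from $\cC_{\ideal{V^{(t)}}}$ while adding at most $2\dim(V^{(t)})$ new linear forms.

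When instead many of $P$'s neighbors lie in $\cJ_{[V^{(t)}]} \cup \cJ_{\ideal{V^{(t)}}}$, Observation~\ref{obs:cut-case-1} forces these to satisfy case~(i), and the uniqueness of Claim~\ref{cla:J-unique} together with $|\cJ| = O(1/\delta)$ bounds the number of $P$'s for which this subcase can dominate, so it can be absorbed without hurting the iteration. The main obstacle is the remaining case, where $P$ has $\geq (\delta/4)m$ neighbors \emph{within} $\cC_{\ideal{V^{(t)}}}$ itself: adding $\MS(P)$ does not directly help, since none of those neighbors automatically move out. Here I plan to apply the random projection $T_{\vaa, V^{(t)}}$ of Definition~\ref{def:z-mapping}, which sends each $Q \in \cC_{\ideal{V^{(t)}}}$ to a polynomial $z \cdot b_Q$ for a linear form $b_Q$. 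The PSG-structure of $\cQ$ should translate into a robust-SG condition modulo $\spn{z}$ on the forms $\{b_Q\} \cup \{z\}$, letting me apply Theorem~\ref{thm:lin-rob-space} with $W = \spn{z}$ and $\cW = \{z\}$ to bound their span by $O(1/\delta)$; Claim~\ref{cla:z-map-dimension} then lifts this to $\dim(\MS(\cC_{\ideal{V^{(t)}}})) = O(\dim(V^{(t)})/\delta)$, and adding a basis to $V^{(t)}$ empties $\cC_{\ideal{V^{(t)}}}$ in a single step. The delicate point, and what I expect to be the hardest part of the argument, is verifying that after projection enough PSG-neighbors survive in the ``useful'' form $z \cdot b$ (rather than collapsing into $\spn{z^2}$ or cancelling with polynomials from $\cJ_{\ideal{V^{(t)}}}$) so that the robust-SG hypothesis actually holds with parameter $\Omega(\delta)$ on $\{b_Q\}$, and in matching constants in the dimension-counting so that the overall growth is only the claimed multiplicative factor $1/\delta^4$.
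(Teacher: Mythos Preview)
Your overall decomposition into ``neighbors in $\cC_{[V]}$'', ``neighbors in $\cJ_{[V]}\cup\cJ_{\ideal{V}}$'', and ``neighbors in $\cC_{\ideal{V}}$'' mirrors the paper, and your handling of the first and third of these cases is essentially what the paper does for its set $\cB_2$ (those $Q\in\cC_{\ideal{V}}$ with \emph{few} high-rank neighbors). The genuine gap is in your second case.

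You write that ``the uniqueness of Claim~\ref{cla:J-unique} together with $|\cJ|=O(1/\delta)$ bounds the number of $P$'s for which this subcase can dominate, so it can be absorbed.'' This does not follow. The bound $|\cJ|=O(1/\delta)$ controls only the dimension of $\spn{\cJ}$; the sets $\cJ_{[V]}$ and $\cJ_{\ideal{V}}$ can each contain $\Theta(m)$ polynomials, and nothing prevents a constant fraction of all $P\in\cC_{\ideal{V}}$ from having $\geq(\delta/4)m$ neighbors there. Moreover, the uniqueness in Claims~\ref{cla:C-unique}(1) and~\ref{cla:J-unique} produces, for each such neighbor $Q_1\in\cJ_{[V]}$, a distinct third polynomial $Q'_1\in\cJ_{\ideal{V}}$ --- not in $\cC_{\ideal{V}}$ --- so adding $\MS(P)$ to $V$ moves polynomials out of $\cJ_{\ideal{V}}$, which does nothing to shrink $|\cC_{\ideal{V}}|$. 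Consequently your iterative step does not make progress on these $P$, and since such $P$ can be plentiful, your global projection argument for case~3 cannot be invoked either: Theorem~\ref{thm:lin-rob-space} requires that \emph{every} $b_Q$ outside $W$ have $\Omega(\delta)$-fraction neighbors among the projected forms, which fails precisely for those $Q$ whose PSG-neighbors live in the high-rank part (since polynomials in $\cJ_{[V]}\cup\cJ_{\ideal{V}}$ do not project to anything of the form $z\cdot b$).

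The paper treats this set --- called $\cB_1=\{Q\in\cC_{\ideal{V}}:|\Gamma(Q)\cap(\cJ_{[V]}\cup\cJ_{\ideal{V}})|\geq 0.1\delta m\}$ --- by a completely different tool you do not invoke: since by Observation~\ref{obs:cut-case-1} every such edge is case~\ref{case:span}, the pairs $(\cB_1,\cJ_{[V]}\cup\cJ_{\ideal{V}})$ carry $\Omega(\delta^2 m^2)$ ordinary Sylvester--Gallai incidences, and the bipartite robust-SG result (Claim~\ref{cla:case-1-cut}) extracts a subset $\cB'\subseteq\cB_1$ of size $\Omega(\delta^2 m)$ and dimension $O(1/\delta^2)$. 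Iterating this $O(1/\delta^2)$ times yields $\dim(\cB'_1)=O(1/\delta^4)$ with $|\cB_1\setminus\cB'_1|\leq(\delta/10)m$, and then $V'=V+\MS(\cB_2)+\MS(\cB'_1)$. Without something playing the role of Claim~\ref{cla:case-1-cut}, your argument cannot close.
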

\begin{proof}
Denote $\cB_1 = \{Q\in \LRIdealV\mid |\Gamma(Q)\cap (\HRIdealV\cup\HRV)|\geq 0.1\delta m\}$ and $\cB_2 = \LRIdealV\setminus \cB_1$.
 We first bound the dimension of $\cB_2$.
	\begin{claim}\label{cla:B2}
	$\dim(\MS(\cB_2)) = O(1/\delta \cdot \dim(V))$.
\end{claim}
\begin{proof}
	For each $Q\in \cB_2$ we remove from $\Gamma(Q)$ all the polynomials from $\HRIdealV\cup \HRV$. This removes at most $0.1\delta m$ polynomials from $\Gamma(Q)$. From \autoref{cla:C-unique} we know that $Q$ satisfies \autoref{thm:structure}\ref{case:span} with each of them, and the resulting polynomial must have high-rank and thus it is not in $\LRIdealV\cup \LRIdealV$. It follows that each $Q\in\cB_2$ has at least $0.9\delta m$ neighbors $Q'$ such that the third polynomial (the one in $\sqrt{\ideal{Q,Q'}}$) belongs to $\LRV\cup\LRIdealV$.

		Let $Q_i\in \cB_2$. Observe that a polynomial from $\cB_2$ and a polynomial from $\cB_1$ can only satisfy \autoref{thm:structure}\ref{case:span} or \autoref{thm:structure}\ref{case:2}. Indeed, if they satisfy \autoref{thm:structure}\ref{case:rk1} then the spanned $\ell^2$ must belong to $V$ and we get that $Q_i\in\LRV$, in contradiction.

	Consider now the map $T_{V,\vaa}$, for a random $\vaa$,  as in \autoref{def:proj}. It maps all polynomials $Q_i\in \LRIdealV\cup \LRV$ to polynomials of the form $T(Q_i) = z\ell_i$ for some linear form $\ell_i$. Pairwise independence and \autoref{cla:still-indep} guaranty that if  $Q_i\neq Q_j\in \LRIdealV$ then $\ell_i\not \sim \ell_j$. Denote  $T_{V,\vaa}(\cB_2) = \{\ell_i\mid \exists Q_i\in \cB_2, T_{V,\vaa}(Q_i) = z\ell_i\}$. To prove the claim we show that $T_{V,\vaa}(\cB_2)$ satisfies the conditions of \autoref{thm:lin-rob-space}. This implies that $\dim(T_{V,\vaa}(\cB_2))\leq  O(1/\delta)$. The claim then follows from \autoref{cla:z-map-dimension}.

	If $|\Gamma_{\ref{case:span}}(Q_i)\cap \LRV| > \delta m/20$ then \autoref{cla:C-unique} implies that there are more than $\delta m/20$ polynomials in $\LRIdealV\cap \spn{Q_i,\C[V]_2}$. Similarly, if $|\Gamma_{\ref{case:2}}(Q_i)\cap \LRV| > \delta m/20$ then \autoref{cla:case3-strong} implies  that there are at least $\delta m/20$ unique polynomials in $\LRIdealV$ that are in  $\C[V+\MS(Q_i)]_2$. In either cases, there are at least $\delta m/20$ different polynomials in $\LRIdealV\cap \C[V+\MS(Q_i)]_2$.
	
	The previous argument implies that for $Q_i\in \cB_2$, if $|\Gamma(Q_i)\cap \LRV| > \delta m/10 $ then there are at least $\delta m/20$ different polynomials in  $\LRIdealV\cap \C[V+\MS(Q_i)]_2$. It follows  that $\spn{\ell_i, z}$ contains at least $\delta m/20$  of the other $\ell_j$, each of them is no a multiple of $z$ (as they come from polynomials in $\LRIdealV$). Thus, there are at most $20/\delta$ such $\ell_i$s that are linearly independent modulo $z$ (i,e, that their projections on $\spn{z}^\perp$ are linearly independent). Denote their span by $W$ (thus $\dim(W)\leq 20/\delta$) and update $T_{V,\vaa}(\cB_2) = T_{V,\vaa}(\cB_2) \setminus W$. Next, remove all edges between polynomials in $T_{V,\vaa}(\cB_2)$ and $T_{V,\vaa}(\cB_1)$. This removes at most $2\delta m/10$ neighbors for each $Q_i\in \cB_2$ (uniqueness implies that each $Q\in\LRV$ can affect the removal of at most two edges as in \autoref{cla:edge-removal}).
	
	At this point we clearly have that for every $Q_i\in \cB_2$, $|\Gamma(T_{V,\vaa}(Q_i))\cap  T_{V,\vaa}(\cB_2)| > \delta m/2$. Therefore, the set $\{z, T_{V,\vaa}(\cB_2)\}$ satisfies the condition of \autoref{thm:lin-rob-space} with $W = \spn{W,z}$ and $\cW=W \cap T_{V,\vaa}(\cB_2)$. It follows that  $\dim(T_{V,\vaa}(\cB_2))\leq O(\dim(W)+1/\delta)=O(1/\delta)$.  \autoref{cla:z-map-dimension} implies that 	$\dim(\MS(\cB_2)) = O(1/\delta\cdot \dim(V))$.
\end{proof}

We next bound $|\cB_1|$. Assume that $|\cB_1|\geq \delta/10 \cdot m$. It follows that at least $\frac{ \delta^2}{100} m^2$  edges 
have one side in $\cB_1$ and the other in $\HRIdealV\cup \HRV$. Set $\cB'_1=\emptyset$. The combination of \autoref{obs:cut-case-1} and \autoref{cla:case-1-cut} with $\cB = \cB_1$ (and $\epsilon=\delta^2/100$) imply  that there is a subset $\cB' \subseteq \cB_1$ of size $O(m/\delta^2)$ and dimension $O({1}/{\delta^2})$. Set $\cB'_1 = \cB'_1 \cup \cB'$, $\cB_1 = \cB_1\setminus \cB'_1$ and repeat this again (i.e. move a set of size  $O(m/\delta^2)$ and dimension $O({1}/{\delta^2})$ from $\cB_1$ to $\cB'_1$).
As in each step we remove at least $\Omega({\delta^2}m)$ polynomials from $\cB_1$, this process must terminate after $O({1}/{\delta^2})$ many iterations.
Thus, when the process terminates, $|\cB_1| \leq \delta/10 \cdot m$ and  $\dim(\cB'_1)\leq O(1/\delta^2\cdot 1/\delta^2)= O(1/\delta^4)$. Hence, $\dim(\MS(\cB'_1))\leq O(1/\delta^4\cdot \dim(V))$.

We are now ready to define $V'$. Let $V'=\MS(\cB_2)+\MS(\cB'_1)+V$.
By \autoref{cla:B2} $\dim(\MS(\cB_2)) = O(1/\delta \cdot \dim(V))$, and as we just proved, $\dim(\MS(\cB'_1))\leq O(1/\delta^4\cdot \dim(V))$. Thus, $\dim(V')\leq O(1/\delta^4\cdot \dim(V))$. It is also clear that now $\cB_2\in\C[V]_2$ and that $|\cB_1|\leq \delta m /10$, as claimed.  	
\end{proof}

 Note that it may now be the case that some linear combination of polynomials in $\cJ$ is now ``close'' to $V'$. We therefore perform the following simple process (as in the proof of the ``furthermore'' part in \autoref{sec:QinJ,V}): if $Q\in\spn{\cJ}$ is such that for some quadratic $L$ of $\rank(L)=2$ we have that $P + L \in \ideal{V'}$ then we can add $\MS(L)$ to $V'$ and remove one polynomial from $\cJ$ while still maintaining that $\cQ\subset \spn{(\cQ\cap\ideal{V'}),\cJ,\C[V']_2}$. As $|\cJ|= O(1/\delta)$, this does not have much affect on the dimension of $V'$, which is still $O(1/\delta^4\cdot \dim(V))$.
	 
To simplify notation, we denote with $V$ the linear space guaranteed by \autoref{cla:decreaseCideal}.  As $V$ may have changed, we update the sets $\LRV$, $\LRIdealV$, $\HRV$ and $\HRIdealV$ accordingly. By construction of $V=V'$, we now have that $|\LRIdealV|\leq \delta/100 m$.

 We now complete the proof of \autoref{thm:psg-robust} by bounding the dimension of $\HRIdealV$.

\begin{claim}\label{cla:decreaseJideal}
	There is a set $\cJ\subseteq \cJ'\subset \cQ$ such that $|\cJ'|\leq |\cJ|+O(1/\delta)$ and $\dim(\cJ'_{\ideal{V}}) \leq O({1}/{\delta} + \dim(V)^2)$.
\end{claim}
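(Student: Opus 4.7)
The plan mirrors the structure of \autoref{cla:decreaseCideal}, with $\HRIdealV,\HRV$ playing the roles of $\LRIdealV,\LRV$, and crucially exploiting the guarantee $|\LRIdealV|\le \delta m/100$ obtained from \autoref{cla:decreaseCideal}. Partition $\HRIdealV=\cB''_1\cup\cB''_2$, where
\[
\cB''_1=\{P\in\HRIdealV:\;|\Gamma(P)\cap\LRV|\ge \delta m/20\},\qquad \cB''_2=\HRIdealV\setminus\cB''_1.
\]
Since $|\LRIdealV|\le \delta m/100$, every $P\in\cB''_2$ has at least $(\delta-\delta/20-\delta/100)m\ge 0.9\delta m$ neighbors in $\HRV\cup\HRIdealV$.

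\emph{Absorbing $\cB''_1$ into $\cJ$.} Initialize $\cJ':=\cJ$. While there exists $P\in\cB''_1$ with $P\notin\spn{\cJ',\C[V]_2}$, add $P$ to $\cJ'$. By \autoref{cla:J-unique}, each of the $\ge \delta m/20$ neighbors $Q\in\LRV$ of $P$ uniquely determines a partner $Q'\in\LRIdealV\cup\HRIdealV$ with $Q'\in\spn{P,Q}$; since $|\LRIdealV|\le \delta m/100$, at least $\delta m/25$ of these $Q'$ lie in $\HRIdealV$. After adding $P$ to $\cJ'$, every such $Q'$ belongs to $\spn{\cJ',\C[V]_2}$ and migrates from $\cJ'_{\ideal V}$ to $\HR'_V$. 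Thus each iteration shrinks $\cB''_1\cap\cJ'_{\ideal V}$ by $\ge \delta m/25$, so the process halts after $O(1/\delta)$ iterations, giving $|\cJ'|\le|\cJ|+O(1/\delta)$.

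\emph{Bounding the dimension of the remainder.} Since $\cJ'\setminus\cJ\subseteq\HRIdealV\subseteq\spn{\cJ,\ideal V}$, we have $\spn{\cJ',\ideal V}=\spn{\cJ,\ideal V}$, so $\cJ'_{\ideal V}\subseteq\HRIdealV$; after Step~1, $\cJ'_{\ideal V}\cap\cB''_1=\emptyset$, hence $\cJ'_{\ideal V}\subseteq\cB''_2$. It suffices to bound $\dim(\spn{\cB''_2})$. Each remaining $P\in\cB''_2$ still has $\ge 0.9\delta m$ neighbors in $\HR'_V\cup\HR'_{\ideal V}$ (the union is preserved by the migrations in Step~1), and by Observation~\ref{obs:cut-case-1} each such edge is of case~\ref{case:span}. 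Apply $T_{\vaa,V}$ for a generic $\vaa\in[0,1]^{\dim(V)}$. Writing $P=\sum_{Q_i\in\cJ'}\alpha_iQ_i+R$ with $R\in\ideal V\cap\C[\vx]_2$, we obtain $T_{\vaa,V}(P)\equiv z\ell_P\pmod{W}$ where $W=\spn{T_{\vaa,V}(\cJ'),z^2}$ satisfies $\dim(W)\le|\cJ'|+1=O(1/\delta)$ and $\ell_P\in V^\perp$ is a linear form, pairwise linearly independent across distinct $P$ for generic $\vaa$ by \autoref{cla:still-indep}. The case-(i) relations translate modulo $W$ to $\ell_{P_3}\in\spn{\ell_{P_1},\ell_{P_2}}$, so $\{\ell_P\}\cup\{z\}$ satisfies the hypothesis of \autoref{thm:lin-rob-space} with base $W\cup\spn{z}$, yielding $\dim(\spn{\ell_P})=O(\dim(W)+1/\delta)=O(1/\delta)$ and hence $\dim(T_{\vaa,V}(\cB''_2))=O(1/\delta)$. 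Passing to the quotient $\pi:\C[\vx]_2\to\C[\vx]_2/\C[V]_2$ (noting $T_{\vaa,V}$ factors through $\pi$ modulo $\spn{z^2}$) and combining $\dim(V)$ independent projections via \autoref{cla:z-map-dimension}, we get $\dim(\pi(\cB''_2))=O(\dim(V)/\delta)$, and finally $\dim(\spn{\cB''_2})\le \dim(\pi(\cB''_2))+\dim(\C[V]_2)=O(\dim(V)/\delta+\dim(V)^2)=O(1/\delta+\dim(V)^2)$ in the relevant regime $\dim(V)\gtrsim 1/\delta$ which holds after \autoref{cla:decreaseCideal}.

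\emph{Main obstacle.} The delicate point is ensuring the case-(i) relations among $\cB''_2$ polynomials survive reduction modulo $W$ as genuine linear dependencies $\ell_{P_3}\in\spn{\ell_{P_1},\ell_{P_2}}$. This depends on (i) generic $\vaa$ preserving pairwise linear independence of $\{\ell_P\}$ (via \autoref{cla:still-indep}) so that distinct polynomials do not collide under projection, and (ii) bookkeeping the at most $\dim(W)=O(1/\delta)$ neighbors per $P$ that collapse into the base subspace $W$ under $T_{\vaa,V}$, a loss absorbed by the robust-SG slack in \autoref{thm:lin-rob-space}. An additional subtlety is verifying that polynomials added to $\cJ'$ during Step~1 (which enlarge $\spn{\cJ'}$) do not compromise the projection argument in Step~2; this is guaranteed because adding elements of $\HRIdealV$ to $\cJ'$ does not change $\spn{\cJ',\ideal V}$.
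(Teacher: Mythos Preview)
Your Step~1 is essentially sound, but there is a genuine gap in Step~2. You assert that every edge from $P\in\cB''_2$ to $\HRV\cup\HRIdealV$ is of case~\ref{case:span}, citing Observation~\ref{obs:cut-case-1}. But that observation (part~2) only covers neighbors of $P\in\HRIdealV$ lying in $\LRV\cup\LRIdealV\cup\HRV$; it says nothing about edges \emph{within} $\HRIdealV$. And indeed such edges can be of case~\ref{case:rk1}: if $P,Q\in\HRIdealV$ share the same $\spn{\cJ}$-component (i.e.\ $P_{\cJ}\sim Q_{\cJ}$), then $\alpha P+\beta Q\in\ideal{V}$ for suitable scalars and nothing prevents this combination from being a square $\ell^2$ with $\ell\in V$. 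Your partition into $\cB''_1,\cB''_2$ by the number of $\LRV$-neighbors does not separate out polynomials with many case-\ref{case:rk1} neighbors in $\HRIdealV$, so $\cB''_2$ may contain such polynomials, and your projection argument (which relies on case-\ref{case:span} linear relations surviving as $\ell_{P_3}\in\spn{\ell_{P_1},\ell_{P_2}}$) does not handle them.

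This is exactly why the paper partitions $\HRIdealV$ differently: into $\cT_1=\{Q:|\Gamma_{\ref{case:rk1}}(Q)|\ge 0.1\delta m\}$ and $\cT_2=\HRIdealV\setminus\cT_1$. For $\cT_1$ the paper exploits the rank-distance property of $\cJ$ to show that any two $Q_1,Q_2\in\cT_1$ sharing a case-\ref{case:rk1} neighbor satisfy $Q_2\in\spn{Q_1,\C[V]_2}$, whence a maximal disjoint-neighborhood subset $\cT'_1$ of size $O(1/\delta)$ suffices and $\cJ'=\cJ\cup\cT'_1$. For $\cT_2$, every polynomial now genuinely has $\ge 0.7\delta m$ case-\ref{case:span} neighbors, and the paper applies \autoref{thm:lin-rob-space} \emph{directly in the vector space of quadratics} with $W=\spn{\cT'_1,\cJ,\C[V]_2}$ (of dimension $O(1/\delta+\dim(V)^2)$) and $\cK=\cT_2$. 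No projection $T_{\vaa,V}$ is needed, which also avoids the bookkeeping you flag under ``Main obstacle'' about relations modulo $W$ and the type mismatch between your quadratic space $W=\spn{T_{\vaa,V}(\cJ'),z^2}$ and the linear forms $\ell_P$.
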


\begin{proof}

 Denote $\cT_1 = \{Q\in \cJ\mid |\Gamma_{(ii)}(Q)|\geq 0.1\delta m\}$ and $\cT_2 = \HRIdealV\setminus \cT_1$. For every polynomial in $Q\in\HRIdealV$, denote $Q=Q_{\cJ}+Q_{\ideal{V}}$ where $Q_{\cJ}\in \spn{\cJ}$ and $Q_{\ideal{V}}\in \ideal{V}$. Note that neither $Q_{\cJ}$ nor $Q_{\ideal{V}}$ can  be zero as this would imply  $Q\in \HRV \cup \LRIdealV$.
	\begin{claim}
		 There is a subset $\cT'_1\subseteq \cT_1$ of size at most $10/\delta$ such that $\cT_1\subset \spn{\cT'_1,\cJ,\C[V]_2}$. 
	\end{claim}
	\begin{proof}
		Let $Q_1\neq Q_2\in \cT_1$. If $P \in \Gamma_{\ref{case:rk1}} (Q_1)\cap \Gamma_{\ref{case:rk1}}(Q_2)$ then $P= Q_1 + \ell_1^2 = \alpha_2Q_2+\ell_2^2$, for some $0\neq \alpha_2\in \C$ and linear functions $\ell_i$. As no polynomial in $\spn{\cJ}$ is at ``rank-distance'' two from $\ideal{V}$, it follows that $P\in \left(\cQ \cap \left(\spn{\cJ,\ideal{V}}\setminus{\ideal{V}}\right)\right)$. Denote $P = P_{\cJ}+P_{\ideal{V}}$ where $0\neq P_{\cJ}\in \spn{\cJ}$ and $P_{\ideal{V}}\in \ideal{V}$. Thus $P_{\cJ}-{Q_1}_{\cJ} = {Q_1}_{\ideal{V}}-P_{\ideal{V}} + \ell_1^2$. As there is no linear combination of polynomials in $\cJ$ that has $\rank_s\leq 2$ when setting $V=0$, we conclude that $P_{\cJ}-{Q_1}_{\cJ}=0$ and thus $\ell_1\in V$. Similarly, can show that $\ell_2\in V$. Thus, $Q_2\in \spn{Q_1, \C[V]_2}$. 
		
		Let $\cT'_1\subseteq \cT_1$ be a maximal subset (with respect to inclusion) that has the property that for every pair of polynomials $Q_1,Q_2\in \cT'_!$ it holds that $ \Gamma_{{\ref{case:rk1}}}(Q_1)\cap \Gamma_{\ref{case:rk1}}(Q_2) = \emptyset$. By the argument above, $\cT_1\subset \spn{\cT'_1,\C[V]_2}$. The assumption that  every $Q_i\in \cT_1$ satisfies $| \Gamma_{\ref{case:rk1}}(Q)|\geq \delta/10 m$ implies that $|\cT'_1|\leq 10/\delta$, as claimed.
	\end{proof}
	
	Set $\cT_2 = \cT_2\setminus \spn{\cT_1,\cJ,\C[V]_2}$.
	Every $Q\in \cT_2$ must now satisfy that $|\Gamma_{\ref{case:span}}(Q)|\geq 0.9\delta m$. Indeed, this follows from the fact that $Q\not\in\cT_1$ and that it cannot satisfy \autoref{thm:structure}\ref{case:2} with any polynomial. Remove from $\Gamma_{\ref{case:span}}(Q)$ all the polynomials in $\cB_1$, this removes at most $2|\cB_1|\leq 2/10 \delta m$ polynomials from $\Gamma_{\ref{case:span}}(Q)$ (using an argument similar to \autoref{cla:edge-removal}), leaving $|\Gamma_{(i)}(Q)|\geq 0.7\delta m$.  This implies that $\cK = \cT_2$, $W=\spn{\cT_1,\cJ,\C[V]_2}$ and $\cW= \cQ\cap \spn{\cT_1,\cJ,\C[V]_2}$ satisfy the conditions of \autoref{thm:lin-rob-space}. As $\dim(W)\leq O( \dim(V)^2)$ it follows  that $\dim(\HRIdealV)\leq O({1}/{\delta} + \dim(V)^2)$.
	
	Setting $\cJ'=\cT'_1\cup\cJ$ completes the proof.
\end{proof}
	
We now put everything together and prove \autoref{thm:psg-robust}.

\begin{proof}[Proof of  \autoref{thm:psg-robust}]
	Claims~\ref{cla:Q_2-dim},~\ref{cla:VforP3} and~\ref{cla:Iexists} imply that there  exists a set $\cJ\subseteq \cQ$, of size   $|\cJ|= O(1/\delta)$, and a subspace of linear functions $V$ of dimension $\dim(V)=O(1/\delta^2)$ such that  $\cQ \subset \spn{(\cQ\cap\ideal{V}),\cJ,\C[V]_2}$. 
	
	By Claims~\ref{cla:decreaseCideal} and~\ref{cla:decreaseJideal} there are $\cJ\subseteq \cJ'$ and $V\subseteq V'$ such that $\dim(V')\leq 1/\delta^4 \cdot \dim(V)\leq 1/\delta^6$ and $|\cJ'|=O(1/\delta)$, for which it holds that  $|\cC_{\ideal{V'}}|\leq \delta/10 \cdot m$ and $\dim(\cJ'_{\ideal{V}}) \leq O({1}/{\delta} + \dim(V)^2)=O(1/\delta^8)$. We now set $\cJ=\cJ'$, $V=V'$ and, if needed, we add $O(|\cJ|)$ linear functions to $V$ to make sure that no non-trivial linear combination of polynomials in $\cJ$ is of the form $L+F(V)$ where $\rank_s(L)\leq 2$ and $F\in\C[V]_2$, we obtain that $\HRIdealV=\emptyset$ and $|\LRIdealV|\leq \delta m/10$. 	
	Claim~\ref{cla:C,J-large} now guarantees that we also have that $\LRIdealV=\emptyset$. Hence, $\cQ= \LRV\cup\HRV$ and it follows that $\dim(\spn{\cQ})\leq |\cJ|+\dim(V)^2 = O(1/\delta^{16})$. 	
\end{proof}

\subsection{Poof of Claim~\ref{cla:Q_2-dim}}\label{sec:bound-Q2}

Here we study a special case in which the set of polynomials is the union of two  sets such that each polynomial from the first set satisfies \autoref{thm:structure}\ref{case:rk1} with $\delta$ fraction of the polynomials in the second set. The Claim below clearly implies \autoref{cla:Q_2-dim}.


\begin{claim}\label{cla:case_rk1-dim}
	Let $\delta\in (0,1]$. Let $\cA$ and $\cB$ be two sets of irreducible quadratic polynomials such that the polynomials in $\cA\cup\cB$ are pairwise linearly independent. Assume further that for every $Q\in \cA$ the set 
	$\Gamma(Q) = \{ P \in \cB \mid P \text{ and } Q \text{ satisfy } \autoref{thm:structure}\ref{case:rk1}\}$
	has size $|\Gamma(Q)|\geq \delta \cdot |\cB|$.
Then, there is a subset $\cI\subseteq \cA$ of size $|\cI| = O(\frac{1}{\delta})$, and a linear space of linear forms $V$ with $\dim(V) = O\left({1}/{\delta^2}\right)$, such that $\cA\subset \spn{\cI, \C[V]_2}$.
\end{claim}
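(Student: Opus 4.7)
I plan to combine a two-level averaging argument on the bipartite graph between $\cA$ and $\cB$ with \autoref{cla:lin-rank-r}, followed by an amplification step that matches the claimed bounds $|\cI|=O(1/\delta)$ and $\dim(V)=O(1/\delta^2)$. Since $\sum_{Q\in\cA}|\Gamma(Q)| \ge \delta|\cA||\cB|$, averaging over $\cB$ produces $P_1\in\cB$ with $|\cA(P_1)| \ge \delta|\cA|$, where $\cA(P):=\{Q\in\cA : P\in\Gamma(Q)\}$. For every $Q\in \cA(P_1)$, \autoref{thm:structure}\ref{case:rk1} supplies a nonzero scalar $\alpha_Q$ and a linear form $\ell_Q$ with $Q=\alpha_Q P_1 + \ell_Q^2$ (nonzero $\alpha_Q$ by irreducibility of $Q$). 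A second averaging in the subgraph $(\cA(P_1),\cB\setminus\{P_1\})$, in which every $Q\in\cA(P_1)$ still has at least $\delta|\cB|-1$ neighbors, supplies a pivot $P_2\in\cB\setminus\{P_1\}$ with $|\cA(P_1)\cap\cA(P_2)| \ge (\delta/2)|\cA(P_1)|$. For every such $Q$ we also have $Q=\alpha_Q' P_2+(\ell_Q')^2$, so
\[
\alpha_Q P_1 - \alpha_Q' P_2 \;=\; (\ell_Q')^2 - \ell_Q^2
\]
is a rank-$\le 2$ linear combination of $P_1$ and $P_2$.

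Next I invoke \autoref{cla:lin-rank-r} for the fixed pair $(P_1,P_2)$ with $r=2$ to extract a linear space of linear forms $V_{1,2}$, with $\dim(V_{1,2})\le 16$, containing $\MS$ of every rank-$\le 2$ linear combination of $P_1$ and $P_2$. In particular $\ell_Q\in V_{1,2}$ for every $Q\in \cA(P_1)\cap \cA(P_2)$, so this intersection lies in $\spn{P_1}+\C[V_{1,2}]_2$. Choosing a representative $Q^\star\in\cA(P_1)\cap\cA(P_2)$ and rewriting $P_1=\alpha_{Q^\star}^{-1}(Q^\star-\ell_{Q^\star}^2)$ gives
\[
\cA(P_1)\cap\cA(P_2)\;\subseteq\;\spn{Q^\star,\C[V_{1,2}]_2},
\]
so a single element of $\cI$ together with a $16$-dimensional extension of $V$ already absorbs a $\Theta(\delta^2)$-fraction of $\cA$.

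To reach the claimed bounds, I would amplify each outer round by collecting $K=\Theta(1/\delta)$ heavy secondary pivots $P_2^{(1)},\ldots,P_2^{(K)}$, each satisfying $|\cA(P_1)\cap\cA(P_2^{(j)})| \ge (\delta/2)|\cA(P_1)|$. A bipartite counting argument (analogous to a Markov-type estimate on the edges out of $\cA(P_1)$ into $\cB\setminus\{P_1\}$) shows that such heavy pivots exist in abundance and that the $K$ intersections jointly cover a constant fraction of $\cA(P_1)$, hence a $\Theta(\delta)$-fraction of the current $\cA$. Each pivot contributes a $16$-dimensional space $V_{1,j}$; a single representative $Q^\star$ picked inside one of the intersections, together with $V\leftarrow V+\sum_j V_{1,j}$, absorbs all of these intersections into $\spn{Q^\star,\C[V]_2}$, adding only $O(1)$ to $|\cI|$ and $O(1/\delta)$ to $\dim(V)$ per outer round while removing a $\Theta(\delta)$-fraction of the current $\cA$. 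Iterating the outer round on $\cA\setminus\spn{\cI,\C[V]_2}$, which continues to satisfy the same PSG-condition with $\cB$, yields $|\cI|=O(1/\delta)$ and $\dim(V)=O(1/\delta^2)$ after $O(1/\delta)$ outer rounds.

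The main obstacle is the amplification step itself: a naive single-pivot iteration removes only a $\delta^2$-fraction per round and therefore needs $O(\log|\cA|/\delta^2)$ rounds, producing bounds that depend on $|\cA|$. Circumventing this requires the simultaneous selection of the $\Theta(1/\delta)$ heavy secondary pivots together with a bipartite set-cover / counting argument on the subgraph induced by the $\cA$-neighborhood of $P_1$, certifying that a constant fraction of $\cA(P_1)$ is absorbed by $O(1/\delta)$ pivots and that $O(1)$ representatives in $\cI$ can simultaneously express $P_1$ modulo $\C[V]_2$. Controlling the residual set when the iteration terminates -- so that it can be folded into $\cI$ at $O(1/\delta)$ cost -- is the final technical ingredient.
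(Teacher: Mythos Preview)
Your structural step is correct and mirrors the paper: once $Q\in\cA(P_1)\cap\cA(P_2)$, the relation $\alpha_Q P_1-\alpha_Q' P_2=(\ell_Q')^2-\ell_Q^2$ has $\rank_s\le 2$, so \autoref{cla:lin-rank-r} puts $\ell_Q$ into a fixed $O(1)$-dimensional space $V_{1,2}$, and hence $\cA(P_1)\cap\cA(P_2)\subseteq\spn{P_1,\C[V_{1,2}]_2}$. The paper proves exactly this (with a $4$-dimensional space) in \autoref{cla:b_2b3}. Your inner amplification can also be made rigorous: greedily picking the $P_2^{(j)}$ with largest intersection with the uncovered part of $\cA(P_1)$ removes a $\Theta(\delta)$-fraction of the uncovered set at each step, so $O(1/\delta)$ secondary pivots cover a constant fraction of $\cA(P_1)$.

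The genuine gap is in the outer iteration. Even granting the amplification, one outer round absorbs only a $\Theta(\delta)$-fraction of the \emph{current} $\cA$ (namely, a constant fraction of $\cA(P_1)$, and $|\cA(P_1)|\ge\delta|\cA|$). That is a multiplicative $(1-\Theta(\delta))$ shrinkage, so after $O(1/\delta)$ outer rounds a constant fraction $(1-\Theta(\delta))^{O(1/\delta)}=\Theta(1)$ of the original $\cA$ still remains. Exhausting $\cA$ this way requires $\Theta(\log|\cA|/\delta)$ rounds, yielding $|\cI|=O(\log|\cA|/\delta)$ and $\dim V=O(\log|\cA|/\delta^2)$, which depend on $|\cA|$ and do not match the claimed bounds. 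Your final ``fold the residual into $\cI$ at $O(1/\delta)$ cost'' is exactly the missing idea: the residual after $O(1/\delta)$ rounds has size $\Theta(|\cA|)$ and carries no evident low-dimensional structure, so there is nothing to fold.

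The paper sidesteps the iteration entirely by working on the $\cA$-side rather than pivoting on pairs in $\cB$. It takes $\cI\subseteq\cA$ \emph{maximal} with the property that the sets $\Gamma(Q)$, $Q\in\cI$, are nearly disjoint in a specific sense (for each $Q\in\cI$, at most one $P\in\Gamma(Q)$ lies in $\Gamma(Q')\cap\Gamma(Q'')$ for two other $Q',Q''\in\cI$). A bucket-counting argument then gives $|\cI|<6/\delta$ directly. The point of maximality is that any $Q\in\cA\setminus\cI$ must fail the condition, forcing at least two elements $P_1,P_2\in\Gamma(Q)$ to lie in the ``well-covered'' part $\cB_2\cup\cB_3$ of $\cB$; since every such $P_i$ is already known to be of the form $P_i=Q_i+\ell_i^2$ with $Q_i\in\spn{\cI}$ and $\ell_i\in V$, subtracting the two representations of $Q$ pins down $\ell_Q$ inside $V$ and places $Q\in\spn{\cI,\C[V]_2}$ in one shot. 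No iteration on $\cA$ is needed, and this is precisely what your approach is missing.
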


\begin{proof}
	Let $\cI\subseteq \cA$ be defined as a maximal set of polynomials in $\cA$ that satisfy the following condition: For every $Q\in\cI$ there  is at most one polynomial $P\in \Gamma(Q)$ such that $P\in \Gamma(Q')\cap \Gamma(Q'')$ for two other polynomials $Q',Q''\in \cI\setminus \{Q\}$.	
In other words, for every polynomial $Q$ in $\cI$ there is at most one polynomial $P\in \Gamma(Q)$ that satisfies \autoref{thm:structure}\ref{case:rk1} with (at least) two other polynomials in $\cI$.
	\begin{claim}\label{cla:sizeI}
		For every such $\cI$ it holds that $|\cI|< \frac{6}{\delta}$.
	\end{claim}
	
	\begin{proof}
		We start with four buckets of polynomials $\cB_0= \cB$ and $\cB_1=\cB_2=\cB_3=\emptyset$.  Let us add the polynomials to $\cI$ one by one. When adding $Q$ to  $\cI$,  we  move the polynomials in $\Gamma(Q)$ from $\cB_i$ to $\cB_{i+1}$ (unless they are already in $\cB_3$, in which case we leave them there). 
		By definition of $\cI$, we know that in each step there is at most  one polynomial $P\in \Gamma(Q)\cap \cB_3$. I.e. at most one polynomial in $\Gamma(Q)$ is not being moved. As at each step $Q$ moves at least $\delta |\cB| -1> \delta/2 |\cB|$ polynomials and every polynomial can be moved at most three times, we conclude that the number of steps, which is $|\cI|$ satisfies $|\cI| < \frac{6}{\delta}$. 
	\end{proof}

	Let $\cI\subseteq \cA$ be such maximal set (with respect to inclusion).	We partition $\cB$ as in the proof of \autoref{cla:sizeI}: $P\in \cB_t$ if  $P\in \Gamma(Q_i)$ for  $t$ different polynomials $Q_i\in \cA$, and, abusing notation,  we set $\cB_3 = \cup_{t\geq 3} \cB_t$.
	\begin{claim}\label{cla:b_2b3}
		There is a linear space of linear forms, $V$, such that $\dim(V) = O({1}/{\delta^2})$ and every polynomial in $P\in \cB_2 \cup \cB_3$  is of the form $P= P' + \ell^2$ for some $\ell \in V$ and $P'\in \spn{\cI}$.
	\end{claim} 
In particular the claim implies that $\cB_2 \cup \cB_3 \subset \spn{\cI}+\C[V]_2$.
	\begin{proof}
		Note that $\cB_2 \cup \cB_3 = \cup_{Q_i\neq Q_j \in \cI} (\Gamma(Q_i)\cap \Gamma_(Q_j))$. As $|\cI|< {6}/{\delta}$, there are  $<{18}/{\delta^2}$ pairs $Q_i\neq Q_j\in \cI$. 
		We next prove that for every such $Q_i\neq Q_j\in \cI$ there is a subspace $V_{i,j}$ such that $\dim(V_{i,j})\leq 4$ and every  $P\in \Gamma(Q_i)\cap \Gamma(Q_j)$ is of the form $Q_i +\ell^2$ for some $\ell\in V_{i,j}$. This clearly implies the claim by setting $V= \cup_{i,j} V_{i,j}$.
		
		To ease notations we show the existence of $V_{1,2}$.
		We shall use the following definition of $\beta_i$, $\ell_i$ and $\ell'_i$: (possibly after rescaling) express each $P_i \in \Gamma(Q_1)\cap \Gamma(Q_2)$ as $P_i =Q_1 + {\ell_i}^2 = \beta_i Q_2 +{\ell'_i}^2$, where $\ell_i$ and $\ell'_i$ are linear forms and $\beta_i\in\C$.  We consider two cases:
		\begin{enumerate}
			\item  There is a single $\beta$ such that for every $P_i \in \Gamma(Q_1)\cap \Gamma(Q_2)$, $\beta_i = \beta$:
			Set $V_{1,2} = \spn{\ell_1,\ell_1'}$. From the two representations of $P_1$ we get that
			\begin{equation*}
			Q_1 - \beta Q_2 = \ell_1^{'2} -\ell^2_1 = (\ell_1'- \ell_1) \cdot (\ell_1' + \ell_1) \neq 0,
			\end{equation*}
			where  the expression above is nonzero as $Q_1$ and $Q_2$ are linearly independent. Consider a different $P_j$. Considering the two representations of $P_j$ we get that
			\begin{equation*}
			Q_1 - \beta Q_2 = \ell_j^{'2} -\ell^2_j = (\ell_j'- \ell_j) \cdot (\ell_j' + \ell_j) \neq 0 \;.
			\end{equation*}
			Thus,
			$  (\ell_j'- \ell_j) \cdot (\ell_j' + \ell_j) = (\ell_1'- \ell_1) \cdot (\ell_1' + \ell_1) $ .
			Unique factorization implies that $\ell_j', \ell_j\in V_{1,2}$, which is what we wanted to prove.
			
			\item There is $j$ such that $\beta_j \neq  \beta_1$:
			In this case we have that
			\begin{align*}
				P_1 =Q_1 + \ell_1 &= \beta_1Q_2 +\ell'_1\\P_j =Q_1 + \ell_j &= \beta_jQ_2 +\ell'_j \;,
			\end{align*}
			and the matrix $
			\begin{bmatrix}
			1 & -\beta_1\\
			1 & -\beta_j
			\end{bmatrix}$
			has full rank. It follows that
			\[Q_1, Q_2\in  \spn{(\ell_j'- \ell_j) \cdot (\ell_j' + \ell_j), (\ell_1'- \ell_1) \cdot (\ell_1' + \ell_1)}\;.\]
			Set
			$V_{1,2} = \spn{\ell_j',\ell_j, \ell_1', \ell_1}$.
			Consider any other $P_k\in \Gamma(Q_1)\cap\Gamma(Q_2)$ and, without loss of generality, assume $\beta_k \neq  \beta_1$. As before, 
			\[Q_1, Q_2\in  \spn{(\ell_j'- \ell_j) \cdot (\ell_j' + \ell_j), (\ell_k'- \ell_k) \cdot (\ell_k' + \ell_k)}\;.\]
			Hence, either for some $\alpha\in\C$, $Q_1 = \alpha ( \ell_j'- \ell_j) \cdot (\ell_j' + \ell_j)$ or there exist $\gamma\in\C$ and a nonzero $\delta\in\C$ such that
			$Q_1 = \gamma ( \ell_j'- \ell_j) \cdot (\ell_j' + \ell_j) + \delta (\ell_k'- \ell_k) \cdot (\ell_k' + \ell_k)$.
			In the first case we get that both $Q_1$ and $Q_1-\beta_j Q_2$ are in $\spn{( \ell_j'- \ell_j) \cdot (\ell_j' + \ell_j)}$ which implies that $Q_1,Q_2$ are linearly dependent, in contradiction to our assumption.
			In the second case we have that for some $\mu,\eta$:
			\[\gamma ( \ell_j'- \ell_j) \cdot (\ell_j' + \ell_j) + \delta (\ell_k'- \ell_k) \cdot (\ell_k' + \ell_k) = Q_1 = \mu  (\ell_j'- \ell_j) \cdot (\ell_j' + \ell_j) + \eta (\ell_1'- \ell_1) \cdot (\ell_1' + \ell_1).\]
			It follows that both $\ell_k'- \ell_k $ and $ \ell_k' + \ell_k$ are spanned by the functions
			in $V_{1,2}$. 
		\end{enumerate}
	Thus, in wither cases $V_{1,2}$ has the required property.
	\end{proof}
	
	\begin{claim}\label{cla:IfarV}
		There is a linear space of linear forms $V'$, such that $\dim(V)=O({1}/{\delta^2})$ such that 
		$\cA \subset \spn{\cI, \C[V']_2}$ and hence $\dim(\cA) = O({1}/{\delta^4})$. 
	\end{claim} 
	\begin{proof}
		We construct $V'$ from $V$ and $\cI$ in the following way: We start by setting $V'=V$. Now, as long as there is a linear combination of polynomials in $\cI$ that gives a polynomial of the form $F(V') + \ell_1\cdot \ell_2$, for $F\in\C[V']_2$, we remove one of the polynomials in the linear combination from $I$ and add $\ell_1$ and $\ell_2$ to $V'$. That is, if
		\[\sum_{Q_i\in\cI}\alpha_i Q_i = F(\vx) + \ell_1\cdot \ell_2\;,\]
		where $F\in\C[V']_2$ then we remove the first polynomial with a nonzero coefficient from $\cI$ and define (abusing notation) $V'=V'+\spn{\ell_1,\ell_2}$. Note that by doing so, $\spn{\cI}+\C[V']_2$ can only increase. As this process can take at most $|\cI|<6/\delta$ steps and at each step we added at most two linear functions to $V'$ we have that $\dim(V')\leq \dim(V)+12/\delta = O(1/\delta^2)$.

		Now, let $Q \in \cA \setminus  \cI$. As $Q\not\in\cI$, by definition of $\cI$,  there must be at least two polynomials $P_1,P_2 \in \Gamma(Q) \cap (\cB_2\cup \cB_3)$. As  $P_1,P_2 \in \Gamma(Q)$,  for $i\in[2]$,  there is a linear function $\ell'_i$ such that $P_i = Q + \ell^{'2}_i$. Note that $\ell'_1 \neq \ell'_2$ as $P_1\neq P_2$. Furthermore,  \autoref{cla:b_2b3} implies that,  for $i\in[2]$, there are $\alpha_i\in\C$, $Q_i\in \cI$ and linear form $\ell_i\in V'$, such that $P_i = \alpha_i Q_i + \ell^2_i$. Thus,  
		\[0 \neq P_1-P_2 = \ell^{'2}_1-\ell^{'2}_2 = \alpha_1 Q_1 + \ell^2_1 - \alpha_2 Q_2 - \ell^2_2 \in \spn{\cI, \C[V']_2}\;.\] 
		It follows that there is a linear combination of polynomials in $\cI$ that equals a polynomial of the form $F(V')+(\ell'_1-\ell'_2)(\ell'_1+\ell'_2)$. Thus, by construction of $V'$ it must be the case that the linear combination is trivial. In particular, $F(V')+(\ell'_1-\ell'_2)(\ell'_1+\ell'_2)=0$, which implies that $\ell'_1\pm\ell'_2 \in V'$. 
		Therefore, $Q =  \alpha_1 Q_1 + \ell^2_1 -\ell^{'2}_1 \in \spn{\cI,\C[V']_2}$ as claimed.
	\end{proof}
		This concludes the proof of \autoref{cla:case_rk1-dim}.
\end{proof}

\begin{proof}[Proof of Claim~\ref{cla:Q_2-dim}]
	The claim  follows from \autoref{cla:case_rk1-dim} for $\cA = \cQ_2$, $\cB=\cT$ and $\delta= \delta/100$.
\end{proof}

\subsection{Proof of Claim~\ref{cla:VforP3}}\label{sec:Q3-ideal}

Here we consider the situation where the set of polynomials is the union of two  sets such that each polynomial in the first  set satisfies \autoref{thm:structure}\ref{case:2} with $\delta$ fraction of the polynomials in the second set. The proof of \autoref{cla:VforP3} is an immediate corollary of the next claim.


\begin{claim}\label{cla:case2-ideal}
	Let $\delta\in (0,1]$. Let $\cA$ and $\cC$ be two sets of irreducible quadratic polynomials such that the polynomials in $\cA\cup\cC$ are pairwise linearly independent. Furthermore, assume that for every  $Q\in \cA$ there is a subset $\Gamma(Q)\subseteq \cC$ such that $ |\Gamma(Q)|\geq \delta|\cC|$ and for every $P\in \Gamma(Q)$, $Q$ and $P$ satisfy \autoref{thm:structure}\ref{case:2}. Then, there is a linear space of linear forms $V$ such that $\dim(V) = O({1}/{\delta})$ and $\cA\subset \ideal{V}$.
\end{claim}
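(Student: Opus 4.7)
I begin by extracting basic structural information from the hypothesis. For any pair $(Q,P)$ with $P\in\Gamma(Q)$, case~(iii) of \autoref{thm:structure} provides linear forms $\ell_1^{Q,P},\ell_2^{Q,P}$ such that $Q,P\in\ideal{\ell_1^{Q,P},\ell_2^{Q,P}}$. Writing $Q=\ell_1^{Q,P}a+\ell_2^{Q,P}b$ for some linear forms $a,b$ immediately gives $\rank_s(Q)\le 2$, and irreducibility of $Q$ then forces $\rank_s(Q)=2$ and $\dim(\MS(Q))\in\{3,4\}$; the same conclusion holds for $P$. By \autoref{cor:containMS} applied to both representations, together with the uniqueness of the minimal space (\autoref{cla:irr-quad-span}), the forms $\ell_1^{Q,P},\ell_2^{Q,P}$ must actually lie in $\MS(Q)$ and, symmetrically, in $\MS(P)$. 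Consequently $\spn{\ell_1^{Q,P},\ell_2^{Q,P}}$ is a $2$-dimensional subspace of $\MS(Q)\cap\MS(P)$.

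The plan is then to construct $V$ by a greedy procedure on $\cC$. I will initialise $V=\{\vec{0}\}$ and $\cA'=\cA$; at each step I pick $P^\star\in\cC$ maximising $d(P^\star)=|\{Q\in\cA':P^\star\in\Gamma(Q)\}|$, append $\MS(P^\star)$ (of dimension at most $4$) to $V$, and remove from $\cA'$ every $Q$ with $P^\star\in\Gamma(Q)$. The double-counting identity
\[
\sum_{P\in\cC}d(P)=\sum_{Q\in\cA'}|\Gamma(Q)|\ge \delta\,|\cA'|\,|\cC|
\]
yields $d(P^\star)\ge\delta|\cA'|$, and every removed $Q$ satisfies $\spn{\ell_1^{Q,P^\star},\ell_2^{Q,P^\star}}\subseteq\MS(P^\star)\subseteq V$, hence $Q\in\ideal{V}$. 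Each greedy step therefore absorbs at least a $\delta$-fraction of $\cA'$ while adding at most $4$ dimensions to $V$.

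The main obstacle will be upgrading the dimension bound from the naive $O(\log|\cA|/\delta)$ arising from geometric shrinkage to the claimed $O(1/\delta)$. To eliminate the logarithmic factor I will combine the greedy with two additional ingredients. First, I stop the greedy as soon as $|\cA'|\le c/\delta$ for a small constant $c$; at that point I augment $V$ by $\sum_{Q\in\cA'}\MS(Q)$, which contributes at most $4c/\delta=O(1/\delta)$ extra dimensions and absorbs every residual $Q$ (since $Q\in\C[\MS(Q)]_2\subseteq\ideal{\MS(Q)}$). Second, I would reorganise the greedy phase itself as a maximality/bucketing argument in the spirit of \autoref{cla:sizeI}: classify each element of $\cC$ by how often it has already served as a witness, and accept $P^\star$ into the hub set only when it provides a $\delta$-fraction of \emph{fresh} witnesses---pairs whose $2$-dimensional witness $\spn{\ell_1^{Q,P^\star},\ell_2^{Q,P^\star}}$ is not already contained in $V$. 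Since a polynomial of $\cC$ can be promoted through only a bounded number of such levels, the total number of hubs should then be $O(1/\delta)$. Verifying that the maximality forces absorption of every $Q\in\cA$ is the delicate step, and is where I expect the strong-form description from \autoref{cor:case-3-strong} (namely $Q=\ell_1 b-\ell_2^2$, $P=\ell_1 a+\ell_2^2$) to supply the required non-reusability of witnesses.
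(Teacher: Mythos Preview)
Your opening structural observations are correct and match the paper's, but the greedy is run on the wrong side and the proposed fixes do not close the gap. The paper iterates on $\cA$, not on $\cC$: start with $V'=\{\vec{0}\}$ and $\cA'=\emptyset$; while some $Q\in\cA$ satisfies $Q\notin\ideal{V'}$, add $\MS(Q)$ to $V'$ and put $Q$ into $\cA'$. The decisive observation is that any fixed $P\in\cC$ can lie in $\Gamma(Q)$ for at most \emph{three} of the $Q$'s selected into $\cA'$. Indeed, set $U=\MS(P)$ (dimension $\le 4$) and $U_i=U\cap\MS(Q_i)$ (each of dimension $\ge 2$, by your own first paragraph). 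Since $Q_i\in\ideal{U_i}$ and $Q_i$ was chosen only when $Q_i\notin\ideal{V'}$, we get $U_{i+1}\not\subseteq U_1+\cdots+U_i$; three such $2$-dimensional subspaces already exhaust the $4$-dimensional $U$, ruling out a fourth. Double counting $\sum_{Q\in\cA'}|\Gamma(Q)|$ then gives $\delta|\cA'|\,|\cC|\le 3|\cC|$, hence $|\cA'|\le 3/\delta$ and $\dim(V')\le 12/\delta$ directly---no logarithm, no bucketing.

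Your greedy on $\cC$ only yields geometric shrinkage of $\cA'$, and the ``stop at $|\cA'|\le c/\delta$'' trick does not help: reaching that threshold still takes $\Theta(\tfrac{1}{\delta}\log(\delta|\cA|))$ rounds, so the greedy phase alone contributes $\Theta(\tfrac{1}{\delta}\log|\cA|)$ dimensions. The proposed bucketing/freshness repair is also vacuous as stated: for any $Q$ still in $\cA'$ and any $P$, the witness plane $\spn{\ell_1^{Q,P},\ell_2^{Q,P}}$ cannot be contained in $V$ (else $Q\in\ideal{V}$ and $Q$ would already have been removed), so ``fresh'' imposes no new constraint. The non-reusability you are looking for lives on the $\cC$ side---each $P$ can be a witness only boundedly often because its minimal space is $4$-dimensional---and the clean way to exploit it is to iterate on $\cA$ as above.
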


\begin{proof}

The intuition  behind the claim is based on the following observation.

\begin{observation}\label{rem:4-2-dim}
	If $Q\in \cA$ and $P\in \cC$ satisfy \autoref{thm:structure}\ref{case:2} then $\dim(\MS(Q)), \dim(\MS(P)) \leq 4$ and $\dim(\MS(Q)\cap\MS(P)) \geq 2$. 
\end{observation}

Thus, we have many small dimensional spaces that have large pairwise intersections and we can therefore expect that such a $V$ may exist.

	We prove the existence of $V$ by explicitly constructing it.
	Consider the following process: start by setting $V' = \{\vec{0}\}$, and $\cA'=\emptyset$. As long as there is $Q \in \cA$, such that $Q\notin \ideal{V'}$, we set $V' =\MS(Q) + V'$, and $\cA'=\cA' \cup \{Q\}$.  We  show next that this process must end after at most $\frac{3}{\delta}$ steps. In particular, at termination, $|\cA'| \leq \frac{3}{\delta}$, $\dim(V')\leq 12/\delta$ and $\cA\subset\ideal{V'}$.

	
	\begin{claim}\label{cla:3-case3}
		Let $P\in \cC$. Let $\cB\subseteq \cA'$ be the subset of all polynomials in $\cA'$ that satisfy \autoref{thm:structure}\ref{case:2} with $P$. Then, $|\cB| \leq 3$.
	\end{claim}

	\begin{proof}
		Assume towards a contradiction that $|\cB| \geq 4$, and that $Q_1,Q_2,Q_3$ and $Q_4$ are the first four elements of $\cB$ that where added to $\cA'$, in this order. Denote $U=\MS(P)$, and  $U_i = U\cap \MS(Q_i)$, for $1\leq i \leq 4$.
		
		As $P$ and $Q_1$ satisfy \autoref{thm:structure}\ref{case:2}, we have that  $\dim(U) \leq 4$. Furthermore, for every $i$, $\dim(U_i)\geq 2$ (by \autoref{rem:4-2-dim}). As the $Q_i$s were picked by the iterative process, we have that $U_2 \not \subseteq U_1$. Indeed, since $Q_2 \in \ideal{U_2}$, if we had $U_2 \subseteq U_1\subseteq  \MS(Q_1)\subseteq V'$, then this would imply that $Q_2\in\ideal{V'}$, in contradiction to the fact that $Q_2$ was added to $\cA'$ after $Q_1$. Similarly we get that
		$U_3 \not \subseteq U_1 + U_2$ and  $U_4 \not \subseteq U_1+U_3 +U_3$. However, as the next simple lemma shows, this is not possible.
		\begin{lemma} \label{lem:3-are-V}
			Let $V$ be a linear space of dimension $\leq 4$, and let $V_1,V_2,V_3 \subset V'$ each of dimension $\geq 2$, such that $V_1\not \subseteq V_2$ and  $V_3\not \subseteq V_2 + V_1$. Then, $V = V_1+V_2+V_3$.
		\end{lemma}
		\begin{proof}
			As $V_1\not \subseteq V_2$ we have that $\dim(V_1+V_2)\geq 3$. Similarly we get 
			$4\leq \dim(V_1+V_2+V_3)\leq \dim(V)=4$.
		\end{proof}
		\autoref{lem:3-are-V} implies that $V'=U_1+U_2+U_3$ and in particular, $U_4 \subseteq U_1+U_2+U_3$ in contradiction. This completes the proof of \autoref{cla:3-case3}.
	\end{proof}
	On the one hand we have that for every $Q \in \cA'$, it holds that $|\Gamma{(Q)}|\geq \delta|\cC|$. On the other hand, \autoref{cla:3-case3} implies that each $P\in \cC$  satisfies \autoref{thm:structure}\ref{case:2} with at most three different polynomials in $\cA'$. It follows that $|\cA'| \cdot  \delta|\cC| \leq 3\cdot  |\cC|$ and therefore,  $|\cA'|\leq 3/\delta$.
	As in each step we add at most four linearly independent linear forms to $V'$, we obtain $\dim(V')\leq \frac{12}{\delta}$.
\end{proof}

\begin{proof}[Proof of Claim~\ref{cla:VforP3}]
	The claim  follows from \autoref{cla:case2-ideal} for $\cA = \cQ_3$, $\cC=\cT$ and $\delta= \delta/100$.
\end{proof}

\subsection{Proof of Claim~\ref{cla:Iexists}}\label{sec:QinJ,V}

\begin{proof}
	As in the proof of \autoref{cla:case_rk1-dim}, we construct $\cJ$   iteratively as follows. 
	We start by setting $\cJ= \cI$ and $\cB=\cQ_2\cup\cQ_3$. We first add to $\cB$ any polynomial from $\cQ_1$ that is in $ \spn{(\cQ\cap\ideal{V}),\cJ,\C[V]_2}$. Observe that at this point we have that $\cB\subset  \spn{(\cQ\cap\ideal{V}), \C[V]_2,\cJ}$. Next, consider the following iterative process for the polynomials in $\cQ_1$:
	In each step pick any $P \in \cQ_1\setminus \cB$ that satisfies \autoref{thm:structure}\ref{case:span}  with at least $\frac{\delta}{300}m$ polynomials in $\cB$, and add it to both $\cJ$ and to $\cB$. Then, add to $\cB$ all the polynomials $P'\in\cQ_1$ that satisfy $P' \in \spn{(\cQ \cap \ideal{V}), \cJ, \C[V]_2}$. Note, that we always maintain that  $\cB \subset \spn{(\cQ\cap\ideal{V}), \C[V]_2,\cJ }$.	
	We continue this process as long as possible.
	The next claim shows that  when the process terminates we have that $|\cJ| \leq O(1/\delta)$.
	
	\begin{claim}\label{cla:2nd-proc}
		In each step we added to $\cB$ at least $\frac{\delta}{300}m$ new polynomials from $\cQ_1$. In particular, $|\cJ| \leq 300/\delta$.
	\end{claim}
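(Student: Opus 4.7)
The plan is to show that each time we add a polynomial $P\in\cQ_1\setminus\cB$ to $\cJ$ (where by assumption $P$ satisfies Case~\ref{case:span} of \autoref{thm:structure} with at least $\tfrac{\delta}{300}m$ polynomials of $\cB$), at least that many previously-unseen polynomials from $\cQ_1$ must join $\cB$ in the very same step. Given such a bound, the total number of additions to $\cJ$ is at most $m/(\tfrac{\delta}{300}m)=300/\delta$, yielding $|\cJ|\le|\cI|+300/\delta=O(1/\delta)$ as required.

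To produce the new polynomials, I will exploit the span-closure built into the step. For each $Q\in\Gamma_{\ref{case:span}}(P)\cap\cB$, Case~\ref{case:span} gives a third polynomial $P'_Q\in\cQ$ with $P'_Q\in\spn{P,Q}$. The first key observation is that $P'_Q\notin\cB$ \emph{before} this step: otherwise $P\in\spn{Q,P'_Q}\subseteq\spn{\cB}\subseteq\spn{(\cQ\cap\ideal{V}),\cJ,\C[V]_2}$, which, by the rule of the process, would have already placed $P$ in $\cB$, contradicting $P\in\cQ_1\setminus\cB$. The second observation is that once $P$ is added to $\cJ$, we have $P'_Q\in\spn{P,Q}\subseteq\spn{\cJ,\cB}\subseteq\spn{(\cQ\cap\ideal{V}),\cJ,\C[V]_2}$, so $P'_Q$ is swept into $\cB$ during the cleanup phase of the step.

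The third and most delicate step is a uniqueness argument: for distinct $Q_1,Q_2\in\Gamma_{\ref{case:span}}(P)\cap\cB$ I claim $P'_{Q_1}\ne P'_{Q_2}$. If they were equal, then $P'_{Q_1}\in\spn{P,Q_1}\cap\spn{P,Q_2}$; by pairwise linear independence of $Q_1$ and $Q_2$ and the fact that $P'_{Q_1}\not\sim P$, this forces $P\in\spn{Q_1,Q_2}\subseteq\spn{\cB}$, contradicting $P\notin\cB$ as above. Hence the map $Q\mapsto P'_Q$ is injective on $\Gamma_{\ref{case:span}}(P)\cap\cB$, and at least $\tfrac{\delta}{300}m$ distinct new elements of $\cQ_1$ enter $\cB$.

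The step I expect to require the most care is the uniqueness argument, because one must keep track that the third polynomial $P'_Q$ is genuinely distinct from $P$ and from all the $Q_i$'s in order for the pairwise linear independence of $\cQ$ to yield the desired contradiction. Once this is in hand, the termination bound and consequently the cardinality bound on $\cJ$ follow immediately from a simple counting argument, completing the proof.
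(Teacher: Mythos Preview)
Your proposal is correct and matches the paper's own proof essentially step for step: both arguments show that for each $Q\in\Gamma_{\ref{case:span}}(P)\cap\cB$ the third polynomial $P'_Q\in\spn{P,Q}$ lies outside $\cB$ (hence in $\cQ_1$), use pairwise linear independence to conclude the map $Q\mapsto P'_Q$ is injective, and then count. Your write-up is in fact slightly more careful than the paper's in that you record $|\cJ|\le |\cI|+300/\delta$ rather than just $300/\delta$.
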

	\begin{proof}
		Consider what happens when we add some polynomial $P$ to $\cJ$. By the description of our process, $P$ satisfies \autoref{thm:structure}\ref{case:span} with at least $\frac{\delta}{300}m$ polynomials in $\cB$. Any $Q\in  \cB$, that satisfies \autoref{thm:structure}\ref{case:span} with $P$,  must span with $P$ a polynomial $P'\in {\cQ_1}$. To see that  $P'\in \cQ_1$ observe that  otherwise we would have that $P\in \spn{\cB}\subset  \spn{(\cQ\cap\ideal{V}), \C[V]_2,\cJ }$, which implies $P\in\cB$, in contradiction to the definition of the process.
		Furthermore, for each such $Q\in \cB$ the polynomial $P'$ is unique. Indeed, if there was a $P\neq P'\in\cQ_1$ and $Q_1,Q_2\in  \cB$ such that $P'\in \spn{Q_1,P}\cap\spn{Q_2,P}$ then pairwise independence implies that $P\in \spn{Q_1,Q_2}\subset \spn{\cB}$, from which we infer that $P\in\cB$, in contradiction. Thus, when we add $P$ to $\cJ$, at least $\frac{\delta}{3}m$ polynomials are added to $\cB$. In particular, the process must terminates after at most $300/\delta$ steps. Consequently, $|\cJ|\leq 300/\delta$ as claimed.
	\end{proof} 
	
	
	Consider the polynomials left in $\cQ_1\setminus \cB$ when the process terminated. As they ''survived'' the process, each of them satisfies \autoref{thm:structure}\ref{case:span} with less than $\frac{\delta}{300}m$ polynomials in $\cB$. From the fact that $\cQ_3\cup \cQ_2\subseteq \cB$ and  the uniqueness property that was observed in the proof of \autoref{cla:2nd-proc}, we get that the set $\cQ_1\setminus \cB$ satisfies the conditions of \autoref{def:delta-SGConf} with parameter $\delta'=(\delta/100-\delta/300)$.  \autoref{thm:DSW} implies that $\dim(\cQ_1\setminus \cB)\leq O(1/\delta)$. Adding a basis of $\cQ_1\setminus \cB$ to $\cJ$ we get that $|\cJ| = O(1/\delta)$ and every polynomial in $\cQ$ is in $\spn{(\cQ\cap\ideal{V}), \C[V]_2,\cJ }$.

To prove the furthermore part we note that if $Q\in\cQ$ is such that for some quadratic $L$ of $\rank(L)=2$ we have that $P + L \in \ideal{V}$ then we can add $\MS(L)$ to $V$ and remove one polynomial from $\cJ$ while still maintaining that $\cQ\subset \spn{(\cQ\cap\ideal{V}),\cJ,\C[V]_2}$, in a similar way to the process described at \autoref{cla:IfarV}.

\end{proof}

\bibliographystyle{customurlbst/alphaurlpp}
\bibliography{bibliography}

\appendix

\section{Appendix}\label{sec:appendix}

\begin{proof}[Proof of \autoref{thm:lin-rob-space}]
	We say that $q \in \cT$ is a neighbor of  $p\in \cK$, if the span of $p$ and $q$ contains a third point in $\cT$. We denote with $\Gamma(p)$ the set of all neighbors of $p$. 
	
	We perform the following process:  While there is $p\in \cK$ such that $|\Gamma(p)\cap W|\geq 0.1\cdot \delta|\cT|$ we set, abusing notation, $W=\spn{W,p}$,  $\cW = \cT\cap W$ and $\cK = \cT\setminus \cW$. We first wish to prove that this process must terminate after $O(1/\delta)$ many iterations. For that we will show that at every step of the process, at least  $0.1\cdot \delta |\cK|$ points from $\cK$ are moved to $\cW$.
	
	So consider a step of the process, and let $p\in \cK$ be the relevant polynomial. Observe that for any $w \in \Gamma(p)\cap W$, the space spanned by $p$ and $w$ must contain a point in $\cK$, as at every step of the process we maintain that $\cK\cap W=\emptyset$.
	Furthermore, all the points in $\cK$
	that are obtained in this manner (i.e. that are in the span of $p$ and some $w\in\cW$) must be distinct. Indeed, if $p$ spans $q \in \cK$ with two different 	$w_1, w_2 \in \cW$ then, as $w_1$ and $w_2$ are linearly independent, and so are $p$ and $q$, we get that
	$\spn{p, q} = \spn{w_1, w_2}$ which implies that $p \in W$, in contradiction. From this and the assumption that $|\Gamma(p)\cap W|\geq 0.1\cdot \delta|\cT|\geq 0.1\cdot \delta |\cK|$, it follows that at least $0.1\cdot \delta |\cK|$ points from $\cK$ are moved to $\cW$ at this step of our process.
	Hence, the process must terminate after at most $10/\delta$ steps. Note that when the process terminates, $\dim(W)\leq r + \frac{10}{\delta}$. 
	
	After the termination of the process, it holds that for every $p\in \cK$ ,$|\Gamma(p)\cap W|< 0.1\cdot \delta|\cT|$. We now wish to prove that $\cK$ satisfies the conditions of \autoref{thm:DSW} (the usual robust-SG theorem ) with parameter $\delta'=0.8\delta$. For this we have to prove that  for every $p\in \cK$ there are at least $0.8\cdot \delta |\cK|$ points in $\cK$ that each spans with $p$ a third point in $\cK$. To show this we prove that the number of points $q\in \Gamma(p)$ that do not span with $p$ a point in $\cK$, is at most $0.1\cdot \delta|\cK|$.
	
	Indeed, if $q\in \Gamma(p)\cap \cK$ does not span with $p$ a point in $\cK$ then it must span some $w\in \cW$.  Furthermore, $w\in\Gamma(p)$. The crucial observation is that there cannot be any other point $q\neq q'\in \cK\cap \Gamma(p)$ such that $w\in \spn{p,q}\cap  \spn{p,q'}$, as in this case, by our assumption on pairwise independence, we would have that $q'\in\spn{p,q}$, in contradiction to the assumption that $q$ does not span with $p$ a point in $\cK$. Therefore, when restricting ourselves to $\cK$, we have that $p$ has at least $\delta|\cT| - 2|\Gamma(p)\cap W|\geq  0.8\cdot \delta|\cT| \geq  0.8\cdot \delta|\cK|$ neighbors in $\cK$ that span together with $p$ a third point in $\cK$, as we wanted to prove.  
	
	As we just proved that  $\cK$ satisfies \autoref{thm:DSW} with parameter $\delta'=0.8\delta$, it follows that $\dim(\cK)\leq \frac{15}{\delta}+1$. In conclusion we get that  $\dim(\cT)\leq \dim(W)+\dim(\cK)\leq \left(r +\frac{10}{\delta}\right) +\left(\frac{15}{\delta}+1\right)= O(r + \frac{1}{\delta})$, as claimed.
\end{proof}

\begin{proof}[Proof of \autoref{cla:case-1-cut}]
	For a point $v\in \cB$ denote by $\Gamma(v) = \{u\in V\setminus \cB\mid u,v \text{ lie on a special line}\}$, similarly for a point $u\in V\setminus \cB$ denote by $\Gamma(u) = \{v\in\cB\mid v,u \text{ lie on a special line}\}$. Note that for $v\in \cB$, $u\in \Gamma(v)$ if the special line between $v$ and $u$ contains $w\in V\setminus \cB$, then $w\in \Gamma(V)$. 	
	Denote \[\overline{\gamma}(V) = \frac{\sum_{v\in V} \card{\Gamma(v)}}{\card{V}} .\]
	From the assumption that there are $\delta m^2$ pairs of points $v\in \cB$, $u\in V\setminus \cB$ that  lie on a special line it follows that $\overline{\gamma}(V) \geq 2\delta\cdot m$.
	If for every $v \in V$ it holds that $|\Gamma(v)|\geq(\delta/2)\cdot m$ then $V$ is a $\delta/2$-SG configuration and thus $\dim(V) = O(1/\delta)$ and in particular $\cB' = \cB$ satisfies the claim.  
	
	While there is $v \in V$ with $\card{\Gamma(v)}<(\delta/2)\cdot m$, remove $v$ from $V$. We want to bound the number of lines that became ordinary after removing $v$. Assume without loss of generality, that $v\in \cB$.
	
	First, we remove one neighbor for every point in $\Gamma (v)\subset V \setminus\cB$. The other case to consider is two neighbors $v'\in \cB, u\in V\setminus \cB$ that lie on a special line containing $v$, and thus, it might be that after removing $v$ they do not lie on a special line anymore. In this case we should remove $u$ from $\Gamma(v')$ and vice versa. Note that in this case we have $u\in \Gamma(v)$. If there is another $v''\in \cB$ such that $v$ lie on the line between $u$ and $v''$, then the line between $u$ and $v$ contains both $v'$ and $v''$. in particular, removing $v$ still keeps the line special as it intersect the set with at least three points $u,v',v''$. Therefore, we do not need to remove any neighbors of $u$. To conclude, we removed at most two lines for every point in $\Gamma(v)$. As each line that became ordinary affects the neighborhoods of the (only) two points that are on it. Thus, we conclude that,

	\[\overline{\gamma}_{new}(V) = \frac{\sum_{u\in V} |\Gamma(u)|-4|\Gamma(v)|} {n-1} > \overline{\gamma}(V).\]
	
	as $4|\Gamma(v)| < 2\delta\cdot m.$ Thus we can continue this process until all the points satisfy $|\Gamma(v)|\geq(\delta/2)\cdot m$, as the average degree increases in every step, the process must terminates and leave $|\cB| >(\delta/2) \cdot m$,  and thus $\cB' = \cB$ satisfies the claim.
	
\end{proof}

\end{document}